\pdfoutput=1
\documentclass[12pt,reqno]{article}
\usepackage[markup=underlined]{changes}
\RequirePackage{etex}

\usepackage{mathpazo}

\usepackage{etex}
\usepackage{graphicx}
\usepackage{setspace}
\usepackage{amsmath}
\usepackage{amsfonts}
\usepackage{titlesec}
\usepackage{pictex}
\usepackage{comment}
\usepackage{amsthm}
\usepackage{graphics,epsfig,verbatim,bm,latexsym,url,amsbsy}
\usepackage{rotating}
\usepackage[authoryear,round]{natbib}
\usepackage{float}
\usepackage[position=bottom]{subfig}
\usepackage{mathrsfs}
\usepackage{multirow}
\usepackage{array}
\usepackage{bigints}
\usepackage{bbm}
\usepackage{geometry}
\usepackage{bbm}
\usepackage[ruled,vlined]{algorithm2e}
\usepackage{soul}
\usepackage{mathtools}
\usepackage{amssymb}
\usepackage{enumitem}
\usepackage{tikz}
\usetikzlibrary{arrows.meta,decorations.pathreplacing}

\DeclarePairedDelimiterX{\inp}[2]{\langle}{\rangle}{#1, #2}

\usepackage{hyperref}
\hypersetup{
	colorlinks=true,
	linkcolor=red!60!black,
	citecolor=blue!60!black,
	filecolor=magenta,      
	urlcolor=blue!60!black,
}
\usepackage[nameinlink,noabbrev,sort,capitalise]{cleveref}

\newcommand{\Ex}{\mathbb{E}}

\crefname{thm}{Theorem}{Theorems}

\theoremstyle{definition} 
\theoremstyle{definition} 
\theoremstyle{definition} 

\theoremstyle{definition} 
\theoremstyle{definition} \newtheorem{definition}{Definition}
\theoremstyle{definition} 
\theoremstyle{definition} 
\theoremstyle{definition} \newtheorem{lemma}{Lemma}
\theoremstyle{plain} \newtheorem{theorem}{Theorem}
\theoremstyle{definition} 
\theoremstyle{definition} 
\theoremstyle{definition} 
\theoremstyle{definition} 
\theoremstyle{definition}
\theoremstyle{definition} 
\theoremstyle{definition} 
\theoremstyle{definition} 
\theoremstyle{definition} 
\theoremstyle{definition} 
\theoremstyle{definition}

\theoremstyle{definition} 
\usepackage{mathtools}
\titlespacing*{\paragraph}{0pt}{1.25ex plus 1ex minus .2ex}{0.5em}

\titleformat{\section}
		{\bfseries\center \MakeUppercase }
         {\thesection}
        {0.5em}
        {}
        []

\titleformat{\subsection}[runin]
        {\normalfont\bfseries}
        {\thesubsection}
        {0.5em}
        {\addperiod}
        []
\newcommand{\addperiod}[1]{#1.}
\allowdisplaybreaks

\title{\textbf{\Large{\MakeUppercase{
Technology Speed Limits}\thanks{First version: 1st Feb 2026; this version: \today. Koh: MIT Department of Economics; \url{ajkoh@mit.edu}; Sanguanmoo: MIT Department of Economics; \url{sanguanm@mit.edu}. We are grateful to Drew Fudenberg, Stephen Morris, Phillip Strack, and seminar participants at Berkeley, Columbia, LSE, and Yale for helpful comments. 
}}}}

\author{\makebox[.3\linewidth]{\small \MakeUppercase{Andrew Koh}}\\ 
\and \makebox[.3\linewidth]{\small \MakeUppercase{Sivakorn Sanguanmoo}}}

\date{}

\begin{document}

\maketitle 

\vspace{-2em} 
\begin{abstract}
    We study optimal technology regulation when private learning occurs both through doing (scaling up the technology) and through waiting (as time passes). We show that an adaptive speed limit---a cap on the rate at which the technology can increase per unit time---delivers optimal worst-case guarantees over all learning processes and/or preferences, and is the only time-consistent mechanism that does so. 
\end{abstract}

We learn about new technologies in two distinct ways. The first is \emph{by doing}: the process of scaling (e.g., through deployment or development) generates information about its impacts. The second is \emph{by waiting}: fixing the scale of the technology, the passage of time might, on its own, help us understand its promises and dangers. These margins of learning pose different tradeoffs: learning-by-doing risks irreversibly scaling up a dangerous technology; learning-by-waiting delays potential gains. 

We develop a new model of learning along both margins. An agent (e.g., AI firm) chooses a \emph{technology path} that specifies the extent to which the technology is deployed or developed through time. This path is shaped by (i) learning---as the agent scales up the technology or as time passes, she progressively learns about the technology's promises and dangers and alters her path accordingly; and (ii) policy---the transfers she faces as she traverses different technology paths. We are interested in how policy should be designed when the agent steering the technology is misaligned: it might not fully internalize harms to wider society, or might stand to gain disproportionately from the technology.

Our main results develop optimality foundations for \emph{adaptive speed limits}: a cap on the degree to which the technology can be scaled up per-unit-time. The magnitude of this cap evolves with new information generated by scaling up the technology or with the passage of time. We show that such policies are robustly optimal---they offer optimal worst-case guarantees---against different ways in which the firm might learn and/or preferences they have (\cref{thrm:speed_robustness}). Furthermore, among the universe of robust policies, this adaptive speed limit is uniquely time-consistent (\cref{thrm:speed_timecon}): it is the only one that can be implemented without commitment. 

\textbf{Related literature.} Our model of learning is new, and draws on work from probability on multiparameter filtrations \citep{khoshnevisan2006multiparameter} to formalize distinct but interconnected modes of learning. Thus, our agent faces a stochastic control problem that is substantially more general than those within the multi-arm bandit framework. Our contribution is to show that simple and interpretable mechanisms (speed limits) offer optimal payoff guarantees in an otherwise complicated environment.

Our work relates more broadly to the literature on technology regulation. Work here has taken learning seriously as an important policy consideration, but has done so only in the context of learning by waiting \citep{acemoglu2024regulating} \emph{or} learning by doing \citep{guerreiro2023regulating,gans2025learning,koh2024robust}.\footnote{See also \citet{trammell2024existential,jones2024ai} who analyze the case in which the risks are known, but limited deployment of a dangerous technology might nonetheless be optimal.} Our model unifies both kinds of learning within a single framework and derives novel policy prescriptions.

\section{Model} \label{sec:speed_model}

\paragraph{Payoffs} Let $\mathcal{L}, \mathcal{T} \subseteq \mathbb{R}_+$ denote compact subsets of the reals with $0\in\mathcal{L}$. Time is discrete so $\mathcal{T} = 1,2,\ldots T$ where $T := \max \mathcal{T}$. We use $l$ for deterministic technology levels and $\ell$ for stopping levels or path choices; similarly, $t$ denotes a fixed date, while $\tau$ is reserved for stopping times. The technology state is binary, $\Theta = \{0,1\}$, with common prior in $\Delta(\Theta)$. The principal's flow payoff $v: \Theta \times \mathcal{L} \to \mathbb{R}$ is continuously differentiable in $l$ and satisfies $v(\theta,0)=0$ for each $\theta$, with $v(1,\cdot)$ strictly increasing and $v(0,\cdot)$ strictly decreasing. $v$ can be interpreted as Bernoulli utility, but also might be the expectation of some underlying stochastic process (e.g., consumption) with law shaped by $\theta$. 

The agent's flow payoff is $u(\theta,l)=g(v(\theta,l))$ for some increasing convex function $g:\mathbb R\to\mathbb R$ with $g(0)=0$. For instance, this might reflect the presence of either (i) negative externalities: in state $\theta = 0$ the principal (whose preferences represent that of wider society) is harmed more; or (ii) disproportionate winners: if the state is $\theta = 1$ the agent captures more of the gains. Let $\mathbb{U}$ denote the set of all such agent preferences. Since $g$ is convex, the right derivative $\partial_+g(0) := \lim_{h \to 0^+} \frac{g(h) - g(0)}{h}$ exists and is positive. For a belief $\mu \in \Delta(\Theta)$ and technology level $l$, write the principal's and agent's expected flow payoffs as
\begin{align*}
V(\mu,l) &:= \Ex_{\theta \sim \mu}[v(\theta,l)], \\
U(\mu,l) &:= \Ex_{\theta \sim \mu}[u(\theta,l)].
\end{align*}

\paragraph{Learning over fields} 
A filtration on the \emph{field} is a family $\mathcal{F}:=(\mathcal{F}_{l,t})_{l,t}$ on a common probability space such that $l\geq l'$ and $t\geq t'$ imply $\mathcal{F}_{l,t}\supseteq\mathcal{F}_{l',t'}$. We assume these fields are coordinate-wise right-continuous. The principal's information is represented by a right-continuous field $\mathcal{G}:=(\mathcal{G}_{l,t})_{l,t}$. The agent learns weakly more: $\mathcal{G}_{l,t}\subseteq \mathcal{F}_{l,t}$ for all $(l,t)\in\mathcal{L}\times\mathcal{T}$, and $\mathbb{F}$ denotes the set of all such agent filtrations. We use $\mu_{l,t} := \Ex[\theta \mid \mathcal{F}_{l,t}]$ to denote the agent's belief at $(l,t)$. The collections 
\[
(\Ex[\theta | \mathcal{F}_{l,t}])_{l,t} \quad \text{and} \quad 
(\Ex[\theta | \mathcal{G}_{l,t}])_{l,t}
\] 
are random fields indexed by the partially ordered set $\mathcal{L} \times \mathcal{T}$; see \cite{khoshnevisan2006multiparameter} for a textbook treatment. 
For each fixed period $t$ and each $\mathcal{F}_{(\cdot,t)}$-stopping level $\ell$, define the stopped sigma-field
\[
\mathcal{F}_{\ell,t}
:=
\Big\{A: A\cap\{\ell\leq l\}\in \mathcal{F}_{l,t}
\text{ for every }l\in\mathcal{L}
\Big\}
\]
and we define $\mathcal{G}_{\ell,t}$ analogously. 

Our model incorporates both learning-by-doing (as the technology level increases) and learning-by-waiting (as time passes). The former is risky because technology is irreversible; the latter is costly because potential gains from the technology are delayed. We emphasize that we do not impose the `commutation' condition common in the literature on multiparameter filtrations \citep{cairoli1975stochastic} which requires that, for all integrable random variables $X$,
\[
\mathbb{E}\big[\mathbb{E}[X \mid \mathcal{F}_{l, \infty}] \mid \mathcal{F}_{\infty, t}\big] = \mathbb{E}[X \mid \mathcal{F}_{l, t}]
\]
where $\mathcal{F}_{l, \infty} := \bigvee_{s\in\mathcal T} \mathcal{F}_{l, s}$ and $\mathcal{F}_{\infty, t} := \bigvee_{l'\in\mathcal L} \mathcal{F}_{l', t}$. Intuitively, commutation requires
that the two sources of learning: learning-by-doing (increasing $l$) and learning-by-waiting (increasing $t$) deliver independent information. That is, fixing $l$ and running time beyond $t$ does not help predict---beyond what is already known at $(l, t)$---what will be learned from fixing $t$ and running the technology level beyond
$l$. This is restrictive in the context of learning about risky technologies: new information might retroactively change the interpretation of signals received at lower technology levels which introduces dependence between
learning-by-doing and learning-by-waiting.

\begin{figure}[H]
\centering
    \includegraphics[width=0.4\linewidth]{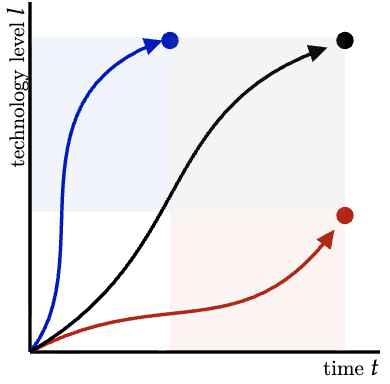}
    \caption{}
    \label{fig:learning_ordering}
\end{figure}

\cref{fig:learning_ordering} illustrates different technology paths over the level-time space. The filtration at the \textcolor{blue}{blue dot} cannot be ordered against that at the \textcolor{red}{red dot}, but both are weakly coarser than that at the black dot. Importantly, the join of the filtrations at the blue and red dots might still be coarser than that at the black dot e.g., when a side effect of a technology becomes visible only after enough physical time has passed.

\paragraph{Mechanisms} 
For a path $\pmb{\ell}=(\ell_t)_{t\in\mathcal T}$, write
$\pmb{\ell}_{\leq t}:=(\ell_1,\ldots,\ell_t)$ for the truncation of the technology path up to time $t$. A \emph{path-adapted mechanism} is a sequence $\phi=(\phi_t)_{t\in\mathcal T}$ of flow transfers that take values in $\mathbb R\cup\{+\infty\}$. At time $t$, after the current truncation $\pmb{\ell}_{\leq t}$ has been realized, the incremental transfer is $\phi_t(\pmb{\ell}_{\leq t})$. We require $\phi_t(\pmb{\ell}_{\leq t})$ to be $\mathcal G_{\ell_t,t}$-measurable. The agent's payoff net of the mechanism is thus
\[
\Ex\left[\sum_{t=1}^T \beta^{t-1}\left(U(\mu_{\ell_t,t},\ell_t)-\phi_t(\pmb{\ell}_{\leq t})\right)\right]
\]
where $\beta \in (0,1)$ is a discount factor. For a path $\pmb{\ell}$, write
\[
\Gamma^\phi(\pmb{\ell})
:=
\sum_{t=1}^T \beta^{t-1}\phi_t(\pmb{\ell}_{\leq t})
\]
for the total discounted transfer induced by $\phi$. Let $\Phi$ denote the set of path-adapted mechanisms $\phi$ such that, for each $t$, $\phi_t$ is pathwise lower semicontinuous in the realized truncation $\pmb{\ell}_{\leq t}$ on its finite-valued region, and the family
\[
\left\{\Gamma^\phi(\pmb{\ell}):
\pmb{\ell}\text{ is feasible and }\Gamma^\phi(\pmb{\ell})<+\infty
\right\}
\]
is uniformly integrable. These are standard optimal-stopping regularity conditions: lower semicontinuity of transfers makes the agent's net continuation payoff upper semicontinuous, while uniform integrability rules out limiting pathologies in expected payoffs. Hard constraints are represented by the value $+\infty$.

A special case is Markovian mechanisms in which the transfer $\phi_t(\pmb{\ell}_{\leq t})$ depends only on the current level-time pair $(\ell_t,t)$. We thus denote them by $\phi_{l,t}$. Markovian mechanisms condition only on the principal's knowledge $\mathcal{F}_{l,t}$ (adaptivity) but not the path she took to get there. \cref{fig:sample} illustrates the sample paths of four possible mechanisms: panel (a) illustrates a (potentially adaptive) cap on levels; panel (b) illustrates a (potentially adaptive) cap on levels that vary over time; panel (c) illustrates a linear Pigouvian tax over both levels and time; panel (d) illustrates a (potentially adaptive) nonlinear tax. 

\begin{figure}[H]  
\centering
    \caption{Possible mechanisms (sample paths)} 
    \subfloat[Cap on levels]{\includegraphics[width=0.35\linewidth]{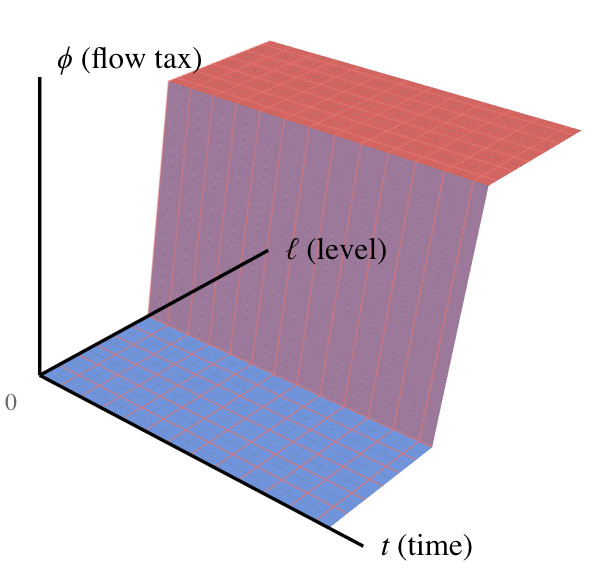}}
    \subfloat[Cap on levels \& time]{\includegraphics[width=0.35\linewidth]{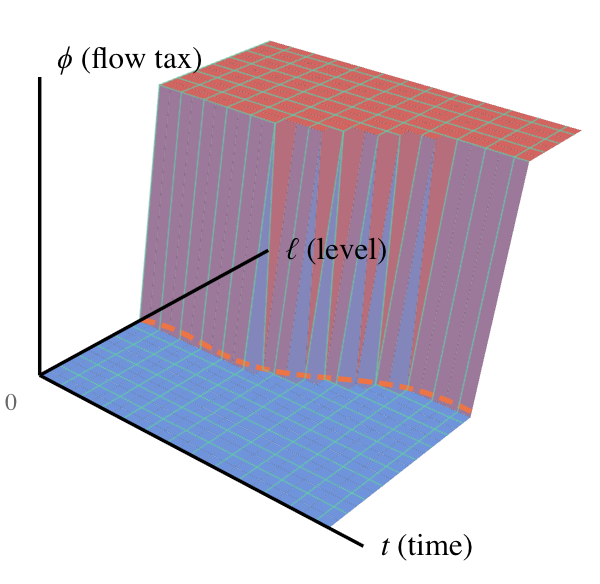}} \\ 
     \subfloat[Constant marginal tax]{\includegraphics[width=0.35\linewidth]{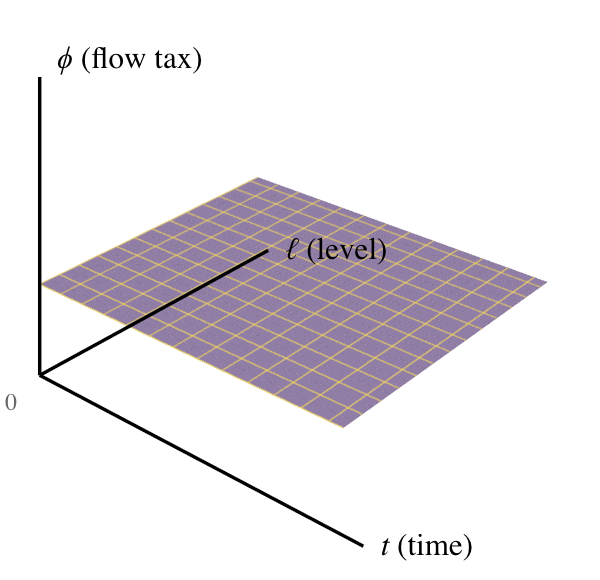}}  
    \subfloat[Increasing marginal tax]{\includegraphics[width=0.35\linewidth]{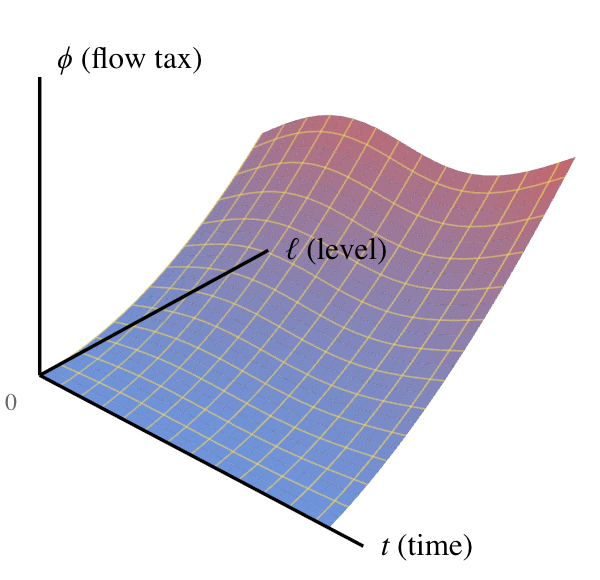}}
    \label{fig:sample}
\end{figure}

\paragraph{Agent's problem} 
A \emph{field-adapted path} (henceforth, just path) is a process $\pmb{\ell} := (\ell_t)_t$ such that (i) each $\ell_t$ is an $\mathcal{F}_{(\cdot,t)}$-stopping level; and (ii) $(\ell_t)_t$ is nondecreasing a.s. Paths are generalizations of stopping times for random fields, noting that our condition that $(\ell_t)_t$ is nondecreasing  reflects the irreversibility of technology. 

The agent optimizes over paths: 
\begin{align*}
\sup_{\pmb{\ell}}\, &\Ex \left[ \sum_{t=1}^T \beta^{t-1}\left(U(\mu_{\ell_t,t},\ell_t)-\phi_t(\pmb{\ell}_{\leq t})\right) \right] \\
&\text{s.t. $\pmb{\ell}$ is $\mathcal{F}$-adapted.} 
\end{align*}
We let $\pmb{\ell}^*(\phi, \mathcal{F},U)$ denote the largest optimal path for the agent. More explicitly, for each time $t \in \mathcal{T}$: (i) the agent chooses a stopping level $\ell^*_t  \geq \ell^*_{t-1}$ i.e., $\ell^*_t$ is a $\mathcal{F}_{(\cdot,t)}$ stopping level and past technology cannot be reversed; and (ii) $t < T$, time passes to $t + 1 \in \mathcal{T}$; otherwise payoffs are realized. 
\cref{fig:agent_problem} illustrates potential technology paths the agent might choose as she traverses the level-time space. 

\begin{figure}[H]
\centering
    \includegraphics[width=0.5\linewidth]{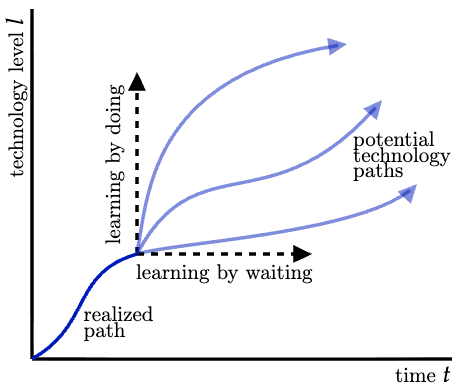}
    \caption{Possible agent technology paths}
    \label{fig:agent_problem}
\end{figure}

\paragraph{Robustness problems}
We are interested in the learning- and dually-robustness problems:
\[
\sup_{\phi \in \Phi} \inf_{\mathcal{F} \in \mathbb{F}} \Ex\left[\sum_{t=1}^T \beta^{t-1}V(\mu_{\ell^*_t,t},\ell^*_t)\right] \tag{L} \label{prob:learning}
\]
and 
\[
\sup_{\phi \in \Phi} \inf_{\substack{\mathcal{F} \in \mathbb{F} \\ U \in \mathbb{U}}} \Ex\left[\sum_{t=1}^T \beta^{t-1}V(\mu_{\ell^*_t,t},\ell^*_t)\right] \tag{D} \label{prob:dual}
\]
where $\pmb{\ell}^*=\pmb{\ell}^*(\phi,\mathcal F,U)$ is the agent's largest optimal path, with $U$ fixed in \eqref{prob:learning} and varied in \eqref{prob:dual}.
We call mechanisms that solve \eqref{prob:learning} learning-robust, and those that solve \eqref{prob:dual} dually-robust.

\section{Robust Speed Limits}\label{sec:speed_robustness}

\begin{definition}[Adaptive Speed Limit]\label{def:speed-limit-T}
Fix the path $\pmb{\bar \ell} := (\bar{\ell}_1, \ldots, \bar{\ell}_T)$ where: 
\begin{itemize}
    \item[(i)] \emph{Irreversibility:} for each $t < T$, $ \bar \ell_t \leq  \bar \ell_{t+1}$ a.s.; 
    \item[(ii)] \emph{Adaptivity:} each $\bar{\ell}_t$ is a $\mathcal{G}_{(\cdot,t)}$-stopping level.
\end{itemize}  
The adaptive speed limit $\pmb{\phi}$ induced by $\pmb{\bar \ell}$ is
\[
\phi_t(\pmb{\ell}_{\leq t}) = \begin{cases}
    0 \quad &\text{if $\ell_t \leq \bar \ell_t$}  \\
    +\infty &\text{otherwise.}
\end{cases}
\]
\end{definition}

Adaptive speed limits take the following form: at time $t$ following the past technology path $\pmb{\ell_{< t}} = (\ell_1,\ell_2, \ldots \ell_{t-1})$, the principal imposes a cap of $\bar \ell_t$ at time-$t$. Such mechanisms are adapted in the sense that the location of the time-$t$ cap depends on the `marginal' filtration $\mathcal{G}_{(\cdot,t)}$. Such mechanisms are speed limits in the sense that they impose a hard upper-bound $\bar \ell_t  - \ell_t \geq 0$ on how much the technology can scale between periods. \cref{fig:speed_limit} illustrates sample paths of speed limits (\textcolor{red}{red path}) and the agent's technology path (\textcolor{blue}{blue path}) given the speed limit.

\begin{figure}[H]  
\centering
    \caption{Illustration of speed limits (sample paths)} 
    \subfloat[]{\includegraphics[width=0.35\linewidth]{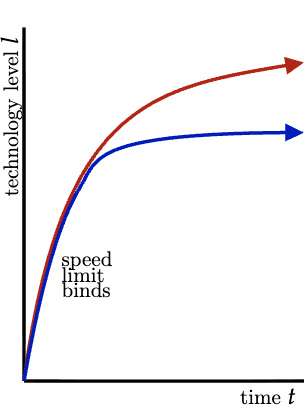}}
    \subfloat[]{\includegraphics[width=0.35\linewidth]{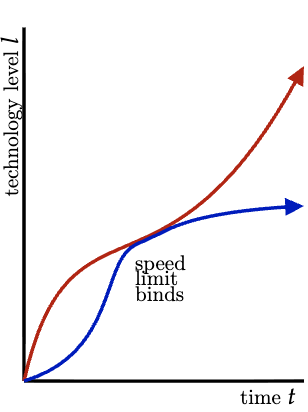}}
     \subfloat[]{\includegraphics[width=0.35\linewidth]{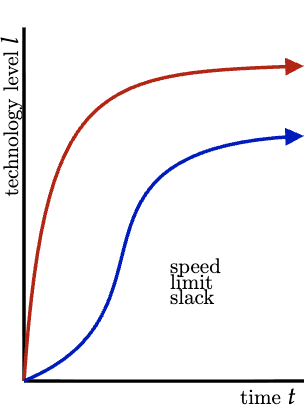}}  
    \label{fig:speed_limit}
\end{figure}

\begin{definition}[Principal's Direct Control Problem]
The principal's direct control problem is:
\[
\max_{(\ell_1,\ldots,\ell_T)} \mathbb{E}\left[\sum_{t=1}^{T} \beta^{t-1} V(\mu_{\ell_t,t}, \ell_t)\right]
\tag{Direct}
\label{eq:direct_control}
\]
subject to each $\ell_t$ being a $\mathcal{G}_{(\cdot,t)}$-stopping level with $\ell_t \geq \ell_{t-1}$ a.s.
Note that though the objective is written using the agent's belief $\mu_{\ell_t,t}$, any $\mathcal{G}$-adapted path has the same ex ante value when evaluated using the principal's belief by the linearity of $V$ and iterated expectations.
\end{definition}

Let $\pmb{\bar \ell} := (\bar{\ell}_1, \bar{\ell}_2, \ldots )$ denote the solution to \eqref{eq:direct_control}. Let $\bm{\phi}^*$ be the adaptive speed limit induced by $\bm{\bar \ell}$.

\begin{theorem}\label{thrm:speed_robustness}
The adaptive speed limit $\bm{\phi}^*$ is learning- and dually-robust.
\end{theorem}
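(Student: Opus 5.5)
The plan has two halves: an upper bound showing no mechanism can beat the direct-control value $V^{D}$ of \eqref{eq:direct_control} in the worst case, and a lower bound showing $\bm{\phi}^*$ guarantees at least $V^{D}$ against every $\mathcal F\in\mathbb F$ (and every $U\in\mathbb U$). Since the infimum in \eqref{prob:dual} is over a larger set than in \eqref{prob:learning}, the upper bound only needs to be shown for \eqref{prob:learning} (for an arbitrary fixed $U$). The lower bound must hold uniformly over $U$. Together these give that $\bm{\phi}^*$ attains both suprema.

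\emph{Upper bound.} Fix any $\phi\in\Phi$ and take the agent filtration $\mathcal F=\mathcal G$, which lies in $\mathbb F$. Then the agent's largest optimal path $\pmb\ell^*(\phi,\mathcal G,U)$ is $\mathcal G$-adapted: each $\ell^*_t$ is a $\mathcal G_{(\cdot,t)}$-stopping level, and the path is nondecreasing. It is therefore feasible in \eqref{eq:direct_control}, so the principal's payoff under $\phi$ is at most $V^{D}$. Existence of a largest optimal path is where the lower semicontinuity and uniform integrability conditions in $\Phi$ are used; I take this well-posedness as given, as the problem statement already does. Hence $\inf_{\mathcal F}\le V^{D}$ for every $\phi$.

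\emph{Lower bound.} Fix $\mathcal F\in\mathbb F$ and $U=g\circ v$. Under $\bm{\phi}^*$ the agent faces the hard constraint $\ell_t\le\bar\ell_t$. I claim her largest optimal path is exactly $\pmb{\bar\ell}$, so the principal receives $\Ex\sum_t\beta^{t-1}V(\mu_{\bar\ell_t,t},\bar\ell_t)=V^{D}$. I would argue this by a one-sided pointwise comparison. Take any feasible agent path $\pmb\ell$ with $\ell_t\le\bar\ell_t$ and compare it with $\pmb{\bar\ell}$; the latter is feasible for the agent since $\mathcal G\subseteq\mathcal F$ makes $\bar\ell_t$ an $\mathcal F_{(\cdot,t)}$-stopping level. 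The key fact comes from optimality of $\pmb{\bar\ell}$ in \eqref{eq:direct_control}. On the event where the principal's optimum pushes to $\bar\ell_t$ beyond a level $l<\bar\ell_t$, the principal's continuation value from raising the level above $l$ must be nonnegative. Because $g$ is increasing and convex with $g(0)=0$, we have $g(x)\ge \partial_+g(0)\,x$, so the agent's gains in $\theta=1$ are weighted at least as much as the principal's. A Jensen/supporting-line argument then suggests the agent's marginal value of scaling up is at least $\partial_+g(0)$ times the principal's. The cleaner route is probably to skip marginal values and compare totals instead: write the agent's payoff difference $\Ex\sum_t\beta^{t-1}[U(\mu_{\bar\ell_t,t},\bar\ell_t)-U(\mu_{\ell_t,t},\ell_t)]$ and lower bound it using $g(a)-g(b)\ge\partial g(b)(a-b)$ together with monotonicity of $v$ in $l$ in each state.

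\emph{Main obstacle.} The delicate step is the lower bound, since the agent's information $\mathcal F$ may be strictly finer than $\mathcal G$. Given this, why should she never want to stop strictly below the cap? My plan is to reduce to a statewise monotonicity argument. Stopping below the cap forgoes $u(1,\bar\ell_t)-u(1,\ell_t)>0$ in state $1$ and avoids a loss $u(0,\ell_t)-u(0,\bar\ell_t)>0$ in state $0$. The principal's optimality of $\pmb{\bar\ell}$, which holds on $\mathcal G$-events by the stopped-field construction, says the corresponding $v$-tradeoff is favorable in $\mathcal G$-conditional expectation. Convexity of $g$ amplifies the state-$1$ gains relative to the state-$0$ losses. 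The worry is that the agent could exploit finer $\mathcal F$-information to stop early exactly on bad paths. If the pointwise claim fails, I would fall back on showing only that the principal's payoff under the agent's chosen $\pmb\ell^*$ is at least $V^{D}$. This would use the largest-optimal-path tie-breaking and the fact that any deviation below $\pmb{\bar\ell}$ by a better-informed agent uses information that also benefits the principal, whose loss from $0$-state exposure is weakly smaller. That would let me conclude via iterated expectations on $\mathcal G_{\bar\ell_t,t}$.
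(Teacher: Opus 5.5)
\textbf{There is a genuine gap in your lower bound.} Your upper bound is correct and is the paper's argument: take $\mathcal F=\mathcal G$, so the induced path is $\mathcal G$-adapted and feasible in \eqref{eq:direct_control}.

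Your main claim, that under $\bm{\phi}^*$ the agent's largest optimal path is exactly $\pmb{\bar\ell}$, is false. Take $g(x)=x$, an uninformative $\mathcal G$, and a prior favourable enough that $\bar\ell_1>0$, and let the agent observe $\theta$ already at $(0,1)$. On $\{\theta=0\}$ she has $u(0,l)<0$ for every $l>0$, so she strictly prefers $\ell_t=0$ in every period, far below the cap. The principal's optimality of $\pmb{\bar\ell}$ only ranks it against $\mathcal G$-adapted alternatives, so it cannot rule out $\mathcal F$-adapted slowdowns; this is exactly your worry.

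Your fallback names the right target (principal payoff at least the direct-control value even when $\pmb{\ell}^*\neq\pmb{\bar\ell}$), but it gives no argument. As phrased, with the principal's state-$0$ loss being `weakly smaller', it also has the comparison backwards. What you need is the reverse: the principal's avoided state-$0$ loss from going slower is at least $1/\partial_+g(0)$ times the agent's, and her forgone state-$1$ gain is at most $1/\partial_+g(0)$ times the agent's.

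\textbf{How to close it.} Use your own `compare totals' inequality in the opposite logical direction: not to push the agent to the cap, but to show that any slowdown she prefers is one the principal weakly prefers too.

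For $l\le l'$ and each $\theta$, the secant bounds for convex $g$ (chord slopes $\ge\partial_+g(0)$ on $[0,\infty)$ and $\le\partial_+g(0)$ on $(-\infty,0]$), together with monotonicity of $v(\theta,\cdot)$, give pointwise
\[
u(\theta,l)-u(\theta,l')\le\partial_+g(0)\,[v(\theta,l)-v(\theta,l')].
\]
Since $\ell_t$ is $\mathcal F_{\ell_t,t}$-measurable, optional sampling gives $\Ex[U(\mu_{\ell_t,t},\ell_t)]=\Ex[u(\theta,\ell_t)]$, and likewise for $V$. The hard cap forces $\ell^*_t\le\bar\ell_t$ a.s., since exceeding it on a non-null event costs $+\infty$. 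Also $\pmb{\bar\ell}$ is a feasible zero-transfer alternative for the agent. Hence
\[
0\le\Ex\sum_{t=1}^T\beta^{t-1}\big[u(\theta,\ell^*_t)-u(\theta,\bar\ell_t)\big]\le\partial_+g(0)\,\Ex\sum_{t=1}^T\beta^{t-1}\big[v(\theta,\ell^*_t)-v(\theta,\bar\ell_t)\big]
\]
for every $(\mathcal F,U)$, which is the required guarantee.

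\textbf{Comparison with the paper.} This repaired argument is much shorter than the paper's route. The paper proves a conditional version of the comparison (\cref{lem:magnitude_compstat}, via Jensen under tilted measures), a no-crossing lemma for largest optimal choices, and an option-value comparison of continuation values by backward induction, and then aggregates period by period. That machinery yields interim guarantees from every starting level $(\ell,t)$ and is reused for \cref{thrm:speed_timecon}. The ex ante statement of \cref{thrm:speed_robustness}, however, only needs the date-by-date comparison of two ordered paths. That is essentially the argument the paper itself uses in \cref{lem:ordered-continuation-comparison} and in the sufficiency half of \cref{lem:tc-conditional-characterization}.
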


The proof is involved and is deferred to \cref{appendix:proofs}. The basic idea is that an upper-bound on problems \eqref{prob:learning} and \eqref{prob:dual} can be constructed by setting $\mathcal{F} = \mathcal{G}$: since the agent has no private information, the principal cannot do better by steering the technology path herself, and this is exactly what solutions to the the principal's direct control problem \eqref{eq:direct_control} delivers. It remains to show that under any agent's preference $U \in \mathbb{U}$ and any agent's learning process $\mathcal{F} \in \mathbb{F}$, and for any path $\pmb{\ell}$ chosen by the agent facing $\pmb{\phi^*}$, the principal's payoff is weakly higher than this upper-bound. This is quite involved; we will briefly gesture at the broad intuition. 
When an agent goes \emph{slower} than the speed limit e.g., at time-$t$ following the path $\pmb{\ell}_{< t} := (\ell_1, \ldots \ell_{t-1})$ the agent chooses $\ell_t < \bar \ell_t$, she might do so for \emph{precautionary reasons}. 

\begin{figure}[H]  
\centering
    \caption{Option value of slowing down} 
    \subfloat[]{\includegraphics[width=0.45\linewidth]{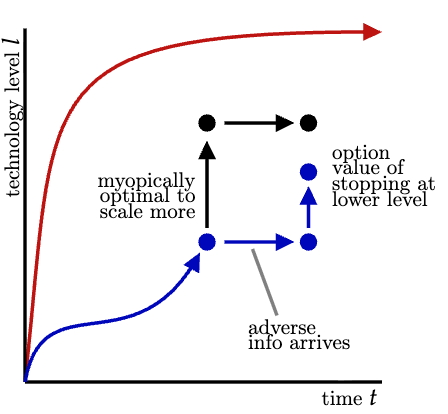}}
    \subfloat[]{\includegraphics[width=0.45\linewidth]{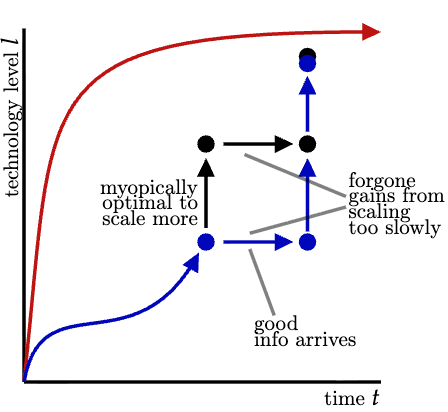}}
    \label{fig:option_comparison}
\end{figure}

That is, although she finds it myopically optimal to push up to the time-$t$ boundary (taking into account learning-by-doing as $l$ increases, but fixing time-$t$), choosing $\ell_t < \bar \ell_t$ delivers the option value of keeping the technology level low in the event that new and adverse information arrives at a future date $s > t$. This is illustrated in \cref{fig:option_comparison} panel (a) in which precaution (\textcolor{blue}{blue path}) in scaling up the technology at time $t$ can---on the depicted sample path---lead to an improvement over scaling up the technology more aggressively (\textbf{black path}), precisely because the technology is irreversible. On the other hand, the technology might turn out to be beneficial in which case precaution (\textcolor{blue}{blue path}) gives up gains in time-$t$ as depicted in panel (b). 

We develop new comparative statics to order these continuation values: if the agent prefers to scale the technology more slowly than the adaptive speed limit $\pmb{\phi^*}$, then path-by-path this delivers a weak improvement over the principal's direct control problem \eqref{eq:direct_control} which concludes the argument.



\section{Time-Consistency}\label{sec:speed_timecon}

We now turn to the question of whether the principal has incentives to follow through with her chosen policy. 

\begin{definition}[Continuation mechanisms]\label{def:continuation-mechanisms}
Let $\pmb{\ell}_{<t}:=(\ell_1,\ldots,\ell_{t-1})$ denote the agent's technology path (strictly) before time $t$, where we adopt the convention $\pmb{\ell}_{<1}:=\emptyset$. A \emph{decision history} is a pair $(\pmb{\ell}_{<t},\ell')$, where $\ell'$ is a stopped level at date $t$ satisfying $\ell'\geq \ell_{t-1}$ when $t>1$. A \emph{continuation path} from $(\pmb{\ell}_{<t},\ell')$ is $\pmb q=(q_s)_{s=t}^T$ such that $q_t\geq \ell'$ and $q_s\geq q_{s-1}$ for $s>t$. For $s\geq t$, write
\[
(\pmb{\ell}_{<t},\ell')\oplus \pmb q_{\leq s}
:=
(\ell_1,\ldots,\ell_{t-1},q_t,\ldots,q_s)
\]
for the induced truncation through date $s > t$. A \emph{continuation mechanism} after $(\pmb{\ell}_{<t},\ell')$ is a path-adapted mechanism applied to these concatenated truncations.
\end{definition}

\begin{definition}[Time-consistency]\label{def:timecon-paths}
A mechanism $\phi$ is \emph{time-consistent} if, after every decision history $(\pmb{\ell}_{<t},\ell')$, whether or not it is reached under $\phi$, its continuation mechanism is conditionally undominated, treating transfers before that history as sunk. That is, there is no continuation mechanism $\phi'$ after $(\pmb{\ell}_{<t},\ell')$ such that, for every $(\mathcal F,U)\in\mathbb F\times\mathbb U$, the largest induced continuation path $\pmb q'$ under $\phi'$ yields weakly higher conditional principal payoff than the largest optimal continuation $\pmb q^*$ under $\phi$, with strict inequality for some $(\mathcal F,U)$, conditional on $\mathcal G_{\ell',t}$:
\[
\underbrace{\mathbb E\!\left[
\sum_{s=t}^T \beta^{s-t}V(\mu_{q'_s,s},q'_s)
\;\middle|\;
\mathcal G_{\ell',t}
\right]}_{\substack{\text{Expected payoffs from $(\ell', t)$} \\ \text{from deviating to mechanism $\phi'$}}}
\geq
\underbrace{\mathbb E\!\left[
\sum_{s=t}^T \beta^{s-t}V(\mu_{q^*_s,s},q^*_s)
\;\middle|\;
\mathcal G_{\ell',t}
\right]}_{\substack{\text{Expected payoffs from $(\ell', t)$} \\ \text{from sticking with mechanism $\phi$}}}
\]
with strict inequality for some $(\mathcal F,U)$.
\end{definition}

Time-consistency requires mechanisms not to be dominated after any decision history. There are three distinct ways it can fail. The first resembles the logic of the Coase conjecture: the agent might, knowing that the regulator lacks commitment, deviate from continuing to stopping, or from stopping to continuing such as to influence the future mechanism. The second is because the set of learning processes and preferences the regulator regards as possible shrinks in the interim, so a mechanism that might have been previously rationalized for performing well at some learning-preference pair $(\mathcal{F},U) \in \mathbb{F} \times \mathbb{U}$ might become dominated as she now regards that as impossible. Finally, as the technology level increases, the regulator might simply lose incentives to follow through at interim histories. Time-consistency rules out all of these possibilities.

\begin{theorem}\label{thrm:speed_timecon}
    Suppose that the technology space is finite. Then the adaptive speed limit $\bm{\phi}^*$ is the unique time-consistent and dually-robust mechanism. 
\end{theorem}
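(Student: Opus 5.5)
The proof has two halves: existence (the adaptive speed limit $\bm{\phi}^*$ is time-consistent) and uniqueness (any dually-robust, time-consistent mechanism coincides with $\bm{\phi}^*$). Both halves rest on a conditional version of \cref{thrm:speed_robustness}.

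\textbf{Conditional robustness.} Fix any decision history $(\pmb{\ell}_{<t},\ell')$. We first show that the continuation of $\bm{\phi}^*$ is conditionally dually-robust. That is, it attains the conditional upper bound
\[
W_t(\ell') := \sup_{\pmb q}\ \mathbb E\Big[\sum_{s=t}^T \beta^{s-t} V(\mu_{q_s,s},q_s) \,\Big|\, \mathcal G_{\ell',t}\Big]
\]
over $\mathcal G$-adapted continuation paths. This uses the same argument as \cref{thrm:speed_robustness}, restarted at $(\ell',t)$.

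\begin{itemize}
\item \emph{Upper bound.} Take $\mathcal F=\mathcal G$. Then no continuation mechanism can beat $W_t(\ell')$, because the principal could at best steer the path herself.
\item \emph{Dynamic programming.} Because \eqref{eq:direct_control} is solved by backward induction, the restriction of $\pmb{\bar\ell}$ to dates $s\ge t$ is optimal for the continuation direct-control problem from $(\ell',t)$. This needs care: if $\ell'>\bar\ell_t$, i.e.\ an off-path history, the relevant continuation cap is the continuation-optimal path from $\ell'$. So I would define $\bm{\phi}^*$'s continuation after such histories via the continuation direct-control solution, with finiteness of $\mathcal L$ guaranteeing existence and measurable selection of the largest maximizer.
\item \emph{Comparative statics.} The key lemma from the proof of \cref{thrm:speed_robustness} says that slower-than-limit paths weakly dominate path-by-path. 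Applied conditionally, this shows that every $(\mathcal F,U)$ yields exactly $W_t(\ell')$.
\end{itemize}

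\textbf{Existence.} Take any alternative continuation mechanism $\phi'$ after a given history. It cannot weakly dominate with strictness somewhere, because its payoff against $\mathcal F=\mathcal G$ is already at most $W_t(\ell')$. The payoff of $\phi'$ is at most $W_t(\ell')$ for every $(\mathcal F,U)$, while $\bm{\phi}^*$ attains $W_t(\ell')$ for every $(\mathcal F,U)$. Hence no strict improvement is possible, and $\bm{\phi}^*$ is time-consistent.

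\textbf{Uniqueness.} Let $\phi$ be time-consistent and dually-robust. I would argue by backward induction on $t$ over the finite set of histories.

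\begin{itemize}
\item \emph{Step 1: conditional robustness of $\phi$.} At each history, the continuation of $\phi$ must itself be conditionally robust. If at some history and some $(\mathcal F,U)$ it fell strictly below $W_t(\ell')$, then replacing its continuation with that of $\bm{\phi}^*$ attains $W_t(\ell')$ everywhere. This is a weak improvement that is strict at that pair, contradicting time-consistency.
\item \emph{Step 2: $\phi$ must block exceeding $\bar\ell_t$.} Suppose $\phi$ allowed the agent to exceed $\bar\ell_t$ at some reachable node with positive probability. I would construct a convex $g$ with large enough curvature that the agent exceeds the cap, together with $\mathcal F=\mathcal G$. By strictness of $v$'s monotonicity and uniqueness of the largest maximizer on the finite grid, this strictly lowers the principal's payoff below $W_t$, contradicting Step 1.
\item \emph{Step 3: $\phi$ must not restrain below $\bar\ell_t$.} Suppose instead that $\phi$ restrains the aligned agent below $\bar\ell_t$, say with $g$ linear and $\mathcal F=\mathcal G$. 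Then the agent's largest optimal path differs from $\pmb{\bar\ell}$, again losing strictly. Here I would use finiteness to convert "differs" into a strict payoff loss, by taking $\pmb{\bar\ell}$ to be the largest solution.
\item \emph{Step 4: identification.} Steps 2 and 3 pin down the caps exactly. Transfers below the cap must be zero modulo irrelevance: positive transfers could distort a linear-$g$ agent. Here "unique" must be read as unique up to payoff-equivalence of transfers that never affect behaviour. I would state this identification carefully.
\end{itemize}

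\textbf{Main obstacle.} The main obstacle is Step 2. It requires showing that for every way $\phi$ could permit overshooting, there is some $(\mathcal F,U)$ that exploits it and yields a strict loss conditional on $\mathcal G_{\ell',t}$. The construction needs $g$ convex and steep beyond $v(1,\bar\ell_t)$ while keeping the agent's behaviour elsewhere unchanged. I would attempt it with piecewise-linear $g$ that has a kink at $0$, combined with finiteness of $\mathcal L$ to isolate the offending level.
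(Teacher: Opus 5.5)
Your uniqueness half has a genuine gap. Nothing in Steps 2--4 uses time-consistency to discipline transfers \emph{below} the cap, and robustness alone cannot do it. Steps 2--3 deliver, at best, two things: the hard limit at the conditional direct-control boundary, and that running along the boundary is weakly cheapest in expected terms. But every mechanism with those two properties is dually robust; this is the sufficiency half of \cref{lem:tc-conditional-characterization}. If the agent prefers her choice to the boundary net of transfers and the boundary is cheapest, she prefers it gross, and \cref{lem:ordered-continuation-comparison} then gives the principal at least the direct-control value. So, for example, $\bm{\phi}^*$ plus a fixed fine in any period in which the agent stops strictly short of the cap is dually robust. It is also conditionally robust after every history, so your Step 1 does not exclude it, yet it is not payoff-equivalent to $\bm{\phi}^*$. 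Such transfers do ``distort'' a linear-$g$ agent, but only upward toward the cap, which never pushes the principal below $W_t(\ell')$. Your Step 4 therefore has no contradiction to reach.

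The missing idea is the paper's ironing/dominance argument, which is exactly where time-consistency bites. If expected continuation transfers are locally decreasing in the current choice, flattening them makes the agent's selected choice weakly lower while weakly raising her gross payoff (\cref{lem:tc-local-selection}). The option-value comparison behind \cref{thrm:speed_robustness} (\cref{lem:option-value}, repackaged as \cref{lem:tc-local-option}) shows that this voluntary slowing weakly helps the principal. Hence the flattened continuation weakly dominates, and strictly so in a bad-news environment with $U=\eta V$ for small $\eta$ (\cref{lem:tc-downward-dominance}). The option condition only compares the boundary with everything else, not neighbouring choices with each other. So the paper also needs a lexicographic backward induction over the finite time--level grid (later dates first, higher levels first) to turn ``option condition plus flatness at successors'' into local decreasingness at each history, and hence flatness everywhere (\cref{lem:tc-flatness}). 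That is the heart of the argument. The hard limit you flag as the main obstacle is, by contrast, dispatched in the paper by importing the one-dimensional construction of \citet{koh2024robust}.

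There are three further points. First, your existence step asserts that any continuation mechanism earns at most $W_t(\ell')$ for every $(\mathcal F,U)$, and that $\bm{\phi}^*$ earns exactly $W_t(\ell')$ for every pair. Both claims are false when $\mathcal F\neq\mathcal G$: the agent's private information can lift the principal strictly above direct control (an aligned agent who privately learns $\theta=0$ stops early), and $\bm{\phi}^*$ only guarantees \emph{at least} $W_t(\ell')$. The $\mathcal F=\mathcal G$ comparison rules out a strict gain at $\mathcal F=\mathcal G$. It does not rule out a deviation that is weakly better everywhere and strictly better at some $\mathcal F\neq\mathcal G$.

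Second, the same conflation affects your uniqueness Step 1: the replacement is a weak improvement only if $\bm{\phi}^*$'s continuation beats $\phi$'s at every pair, which you have not shown. The paper's corresponding steps argue the same way, so this is something to tighten rather than a divergence from it. One route would be to show that a dominating $\phi'$ must itself be a limit-with-option mechanism, and that any non-flat wedge below the limit is exploitable by some $(\mathcal F,U)$.

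Third, in Step 3, taking $\pmb{\bar\ell}$ to be the largest solution does not convert ``differs from $\pmb{\bar\ell}$'' into a strict loss when a smaller path is also optimal under $\mathcal F=\mathcal G$. The paper gets strictness from a rare-good-news learning process. On that event, going higher is strictly better for the principal because $v(1,\cdot)$ is strictly increasing.
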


The proof is fairly involved and is in \cref{appendix:proofs}. Showing uniqueness proceeds through a conditional characterization of dually-robust mechanisms. Such mechanisms must impose the same speed limit as $\pmb{\phi^*}$ and must give the agent the option to push up against that limit without being penalized (i.e., her interim transfers sum up to be weakly negative). Our key step is to show that if the agent's expected continuation transfers slope downward below the limit, modifying the continuation mechanism by `ironing' out the transfer keeps the mechanism dually-robust but dominates the original mechanism. This is illustrated in \cref{fig:ironing_transfers}.

\begin{figure}[H]
\centering
    \includegraphics[width=0.55\linewidth]{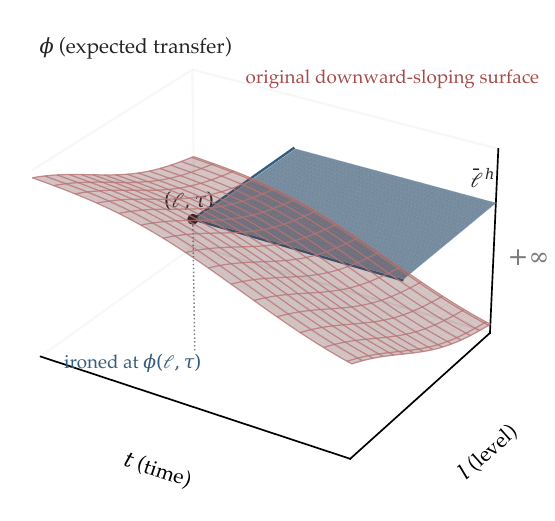}
    \caption{Ironing continuation transfers.}
    \label{fig:ironing_transfers}
\end{figure}

This ironing, in turn, generates incentives for the agent to stop prematurely (before the speed limit) which generates a strict improvement for the principal, thereby violating time-consistency. Then, proceeding via backward induction over the time-level grid we can show that all time-consistent mechanisms are equivalent to the adaptive speed limit $\pmb{\phi^*}$.

\section{Mitigating technological risks} \label{sec:mitigation} We have thus far focused on learning as a rationale for imposing speed limits. There is, of course, another reason speed matters---scaling up the technology slowly buys time for us to mitigate potential harms of the technology. This might be in the form of instituion building in which economic policies and legal rules take time to be implemented, or in the form technological solutions (e.g., AI safety research or vaccines) that might help to mitigate downside risks. Indeed, this view that we should `buy time' to develop mitigating measures is quite widespread among frontier AI labs.\footnote{See OpenAI: \url{https://openai.com/index/updating-our-preparedness-framework/}; DeepMind: \url{https://deepmind.google/blog/introducing-the-frontier-safety-framework/}; Anthropic: \url{https://www.anthropic.com/news/responsible-scaling-policy-v3}.}

We have deliberately abstracted from these considerations to focus on the role of learning \emph{qua} learning. But our model and results can incorporate the analysis of such mitigating measures without difficulty by allowing the technology state $\theta$ to transition from bad to good at some rate $\lambda(l,t)$ that is weakly decreasing in $l$ i.e., the harms from technologies at large scales are more difficult to mitigate. Now waiting becomes valuable for two reasons. As before, the passage of time teaches us about the risks (that is now changing through time). But it also buys time to mitigate the harms of the technology. Adaptive speed limits induced by the principal's solution to a modified direct control problem---that takes into account the endogenous transition of the technology state---is robustly optimal in such environments, and is the only time-consistent mechanism that is so.

\section{Discussion} \label{sec:speed_discussion}
We have developed a model in which learning about technology occurs both by doing (as $l$ increases) and by waiting (as $t$ increases). The former is risky because technology is irreversible; the latter is costly because we are impatient. The agent's technology path through time thus reflects an additional `extensive' margin that regulators might steer---restraint in the development or deployment of technology today delivers the option value of halting tomorrow if, in the intervening periods, we learn that the technology is too dangerous to develop further. We showed that adaptive speed limits are robust to the firm's learning process and/or preferences (\cref{thrm:speed_robustness}), and is the unique time-consistent robust mechanism (\cref{thrm:speed_timecon}). 

The idea that we should exercise caution in the face of an irreversible and dangerous technology is not new. Thus writes Samel Butler in 1872:
\begin{quote}
\emph{``I would repeat that I fear none of the existing machines; what I fear is the extraordinary rapidity with which they are becoming something very different to what they are at present. No class of beings have in any time past made so rapid a movement forward. Should not that movement be jealously watched, and checked while we can still check it?''} 
\end{quote}
Butler goes on to argue that it is necessary to ``destroy the more advanced of the machines''. His radicalism seems to be driven by pessimism that machines can be aligned to human values for ``the servant glides by imperceptible approaches into the master... [so that] man must suffer terribly on ceasing to benefit the machines.'' This is not our view---the point is that we simply do not know. Our goal has thus been to understand mechanisms that encourage learning without exposing society to excessive risk. 

More recently, in 2023 a number of prominent public figures signed a letter to pause `giant AI experiments', citing learning as a key rationale: 
\begin{quote}
\emph{``Powerful AI systems should be developed only once we are confident that their effects will be positive and their risks will be manageable. This confidence must be well justified and increase with the magnitude of a system’s potential effects.'' \citep{futureoflife2023pause}}
\end{quote}
From our present vantage point---and with the benefit of hindsight---this seems premature: more powerful AI systems have not caused significant harm, and continued scaling has taught us a great deal about their potential benefits and dangers. Perhaps we are still some distance from warranting any policy action, and perhaps policy action will not materialize even if warranted.\footnote{What seems politically feasible has also been shifting: \emph{``What is needed is a more active political involvement that is capable of slowing things down when everything is accelerating''} \citep{pope2026magnifica}.} Our point is simply that speed limits are a principled way to regulate the externalities that risk-seeking firms pose on wider society. But where these limits should be, who should descide, and how we might monitor and enforce them remain important questions for democratic deliberation.

\bibliography{speed_ref}

\appendix 

\crefalias{section}{appendix}
\section{Proofs} \label{appendix:proofs} 

\begin{proof}[Proof of \cref{thrm:speed_robustness}]
The proof is quite involved and uses a number of ideas from \citet{koh2024robust} as well as develops some new ideas. Throughout this proof we fix $(\mathcal{F},U)$ and suppress this dependence. In Steps 1--3, we analyze the adaptive speed limit $\bm{\phi}^*$; since $\bm{\phi}^*$ is Markovian and imposes zero transfers below the boundary, continuation values omit transfer terms. We start with some preliminaries. 

\underline{\textbf{\smash{Step 1: Preliminaries.}}}

The compactness of $\mathcal{L}$, continuity of flow payoffs, and lower semicontinuity and uniform integrability of mechanisms ensure that continuation payoff functions are upper semicontinuous on compact feasible sets. Standard optimal-stopping and measurable-maximum arguments then imply that the value functions and largest optimal selectors used in the continuation arguments below admit $\mathcal{F}_{\ell,t}$-measurable versions. 

\begin{lemma}[Stopped levels and pasting]\label{lem:stopped-levels}
Fix a period $t$. Stopping levels are closed under maxima. They are also closed under eventwise pasting: if $\ell$ and $\ell'$ are $\mathcal F_{(\cdot,t)}$-stopping levels and $A$ satisfies $A\cap\{\ell\leq l\},A\cap\{\ell'\leq l\}\in\mathcal F_{l,t}$ for every $l$, then $\ell 1_A+\ell'1_{A^c}$ is again an $\mathcal F_{(\cdot,t)}$-stopping level. Comparison events between stopped levels satisfy this measurability condition; in particular, events of the form $\{\ell>\ell'\}$ can be used for eventwise pasting.
\end{lemma}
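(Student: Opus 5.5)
The plan is to work directly from the definition of an $\mathcal F_{(\cdot,t)}$-stopping level: a random variable $\ell$ with $\{\ell\leq l\}\in\mathcal F_{l,t}$ for every $l\in\mathcal L$. Throughout, $t$ is fixed and $l\mapsto\mathcal F_{l,t}$ is a one-parameter right-continuous filtration. The lemma is therefore the one-parameter statement about stopping times, and I will mimic the classical proofs.

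\emph{Maxima.} For stopping levels $\ell,\ell'$, write
\[
\{\max(\ell,\ell')\leq l\}=\{\ell\leq l\}\cap\{\ell'\leq l\}.
\]
Both events lie in $\mathcal F_{l,t}$, so their intersection does too. No further work is needed here.

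\emph{Pasting.} Let $\ell''=\ell 1_A+\ell'1_{A^c}$. Then
\[
\{\ell''\leq l\}=\big(A\cap\{\ell\leq l\}\big)\cup\big(A^c\cap\{\ell'\leq l\}\big).
\]
The first piece is in $\mathcal F_{l,t}$ by hypothesis. For the second, write it as
\[
A^c\cap\{\ell'\leq l\}=\{\ell'\leq l\}\setminus\big(A\cap\{\ell'\leq l\}\big).
\]
This is a difference of two events in $\mathcal F_{l,t}$, again by hypothesis, so it lies in $\mathcal F_{l,t}$. This is exactly why the hypothesis is stated with $\ell'$ as well as $\ell$: it makes both $A$ and $A^c$ "known" on the relevant events.

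\emph{Comparison events.} The main step is to show that $A=\{\ell>\ell'\}$ satisfies the hypothesis, i.e.\ that $\{\ell>\ell'\}\cap\{\ell\leq l\}$ and $\{\ell>\ell'\}\cap\{\ell'\leq l\}$ belong to $\mathcal F_{l,t}$. I will follow the standard proof that $\{\sigma<\tau\}\in\mathcal F_{\sigma}\cap\mathcal F_\tau$. If $\mathcal L$ is finite or countable, split over grid values. For instance,
\[
\{\ell'<\ell\leq l\}=\bigcup_{q\leq l}\{\ell'\leq q\}\cap\{\ell\leq q\}^c\cap\{\ell\leq l\},
\]
where $q$ ranges over grid points. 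Each term is in $\mathcal F_{q,t}\subseteq\mathcal F_{l,t}$ by monotonicity of the filtration in $l$.

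In general $\mathcal L$ is only a compact subset of $\mathbb R_+$, and this is the step I expect to need care. Where $\mathcal L$ has accumulation points, I will take $q$ to range over a countable dense subset $D$ of $\mathcal L\cap[0,l]$ together with $l$ itself, using
\[
\{\ell'<\ell\}=\bigcup_{q\in D}\{\ell'\leq q<\ell\}.
\]
This union covers every $\omega$ with $\ell'<\ell$ only if $D$ meets $[\ell',\ell)$ whenever $\ell'<\ell$. Density in the subspace topology does not guarantee this at gaps of $\mathcal L$: for example, $\ell'$ could be a left endpoint of a gap and $\ell$ an isolated point to its right. So I will also add to $D$ the countably many left endpoints of the gaps of $\mathcal L$ (the bounded complementary intervals of $\mathcal L$ in $[0,\max\mathcal L]$). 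With this choice $D$ meets every such $[\ell',\ell)$, the union over $D$ is countable, and each term is in $\mathcal F_{l,t}$ as before.

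The remaining event, $\{\ell>\ell'\}\cap\{\ell'\leq l<\ell\}$, equals $\{\ell'\leq l\}\cap\{\ell\leq l\}^c$ and is in $\mathcal F_{l,t}$ directly. Combining the pieces gives both required memberships.

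The complementary comparison event $\{\ell\leq\ell'\}$ is handled by the pasting complement argument above. Right-continuity of $l\mapsto\mathcal F_{l,t}$ is only needed if one works with strict-inequality versions $\{\ell<l\}$; I do not expect to need it beyond that.
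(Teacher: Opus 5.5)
Your proof is correct. For maxima and pasting it follows the paper's argument, and it also supplies a step the paper leaves implicit: $A^c\cap\{\ell'\leq l\}=\{\ell'\leq l\}\setminus(A\cap\{\ell'\leq l\})\in\mathcal F_{l,t}$, which is why the hypothesis must be imposed relative to both $\ell$ and $\ell'$.

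For comparison events your route differs. The paper decomposes by the value of $\ell$, writing $\{\ell>\ell'\}\cap\{\ell\leq l\}=\bigcup_{r\leq l}\big(\{\ell=r\}\cap\{\ell'<r\}\big)$. This is a countable union of events in $\mathcal F_{r,t}\subseteq\mathcal F_{l,t}$ when $\mathcal L$ is countable, in particular on the finite grid of \cref{thrm:speed_timecon}. For general compact $\mathcal L$, the paper only gestures at approximation from a countable dense grid plus right-continuity. You instead decompose by a separating point, $\{\ell'<\ell\leq l\}=\bigcup_{q\in D}\{\ell'\leq q\}\cap\{\ell\leq q\}^c\cap\{\ell\leq l\}$. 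This needs no approximation and no right-continuity, only that $D\subseteq\mathcal L\cap[0,l]$ be countable and meet every $[\ell',\ell)$.

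You correctly note that a subspace-dense $D$ can miss $[\ell',\ell)$ across a gap, e.g.\ $\mathcal L=[0,\sqrt{2}]\cup\{2\}$ with $D=(\mathbb Q\cap[0,\sqrt{2}])\cup\{2\}$. Your fix of adjoining the countably many left gap endpoints is exactly the detail the paper's one-line remark glosses over. It is worth stating the dichotomy you rely on: either $(\ell',\ell)\cap\mathcal L$ is a nonempty relatively open set, hit by density, or $(\ell',\ell)$ is itself a gap with left endpoint $\ell'$. Equivalently, you could let $q$ range over $\mathbb Q$ and read $\{\ell'\leq q\}$ as $\{\ell'\leq\max(\mathcal L\cap[0,q])\}$; the gap endpoints are precisely what this projection produces.

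The paper's decomposition is quicker on finite grids. Yours is the one that actually proves the general compact case needed for \cref{thrm:speed_robustness}.
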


\begin{proof}[Proof of \cref{lem:stopped-levels}]
For maxima,
\[
\{\ell\vee\ell'\leq l\}
=
\{\ell\leq l\}\cap\{\ell'\leq l\}
\in \mathcal F_{l,t}.
\]
For pasting,
\[
\{\ell 1_A+\ell'1_{A^c}\leq l\}
=
(A\cap\{\ell\leq l\})\cup(A^c\cap\{\ell'\leq l\})
\]
is in $\mathcal F_{l,t}$ by the stated measurability of the pasted event.

For the comparison-event claim, note that for each $l$,
\[
\{\ell>\ell'\}\cap\{\ell\leq l\}
=
\bigcup_{r\leq l}\big(\{\ell=r\}\cap\{\ell'<r\}\big),
\]
with the analogous expression for $\{\ell>\ell'\}\cap\{\ell'\leq l\}$. These events are in $\mathcal F_{l,t}$ by monotonicity of the filtration; for a general compact level space, the same argument is obtained by approximating from a countable dense grid and using right-continuity.
\end{proof}

\begin{lemma}[Stopped optional sampling]\label{lem:stopped-optional-sampling}
Fix a period $t$. If $\ell^-\leq \ell^+$ are $\mathcal F_{(\cdot,t)}$-stopping levels, then
\[
\mathbb E[\mu_{\ell^+,t}\mid \mathcal F_{\ell^-,t}]
=
\mu_{\ell^-,t}.
\]
\end{lemma}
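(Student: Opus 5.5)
The claim is an optional sampling statement for the one-parameter martingale $l\mapsto \mu_{l,t}=\mathbb E[\theta\mid\mathcal F_{l,t}]$ at a fixed date $t$. I would reduce it to the standard optional sampling theorem for a closed (uniformly integrable) martingale. With $t$ fixed, $(\mathcal F_{l,t})_{l\in\mathcal L}$ is an ordinary right-continuous filtration indexed by the compact set $\mathcal L\subseteq\mathbb R_+$. The process $\mu_{l,t}$ is a martingale with respect to it, and it is closed by the bounded variable $\theta\in\{0,1\}$. Uniform integrability is therefore automatic.

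\textbf{Step 1.} First show that for any stopping level $\ell$ at date $t$,
\[
\mu_{\ell,t}=\mathbb E[\theta\mid\mathcal F_{\ell,t}],
\]
where the right side uses the stopped sigma-field defined in the paper. When $\mathcal L$ is finite (or $\ell$ takes countably many values), this is direct. Take $A\in\mathcal F_{\ell,t}$ and split over $\{\ell=l\}$. Each piece $A\cap\{\ell=l\}$ lies in $\mathcal F_{l,t}$, so by the defining property of $\mu_{l,t}$,
\[
\mathbb E[\theta 1_{A\cap\{\ell=l\}}]=\mathbb E[\mu_{l,t}1_{A\cap\{\ell=l\}}].
\]
Summing over $l$ gives $\mathbb E[\theta 1_A]=\mathbb E[\mu_{\ell,t}1_A]$. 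It remains to check that $\mu_{\ell,t}$ is $\mathcal F_{\ell,t}$-measurable. This holds because
\[
\{\mu_{\ell,t}\in B\}\cap\{\ell\le l\}=\bigcup_{r\le l}\{\mu_{r,t}\in B\}\cap\{\ell=r\}\in\mathcal F_{l,t}.
\]

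\textbf{Step 2.} Show that $\ell^-\le\ell^+$ implies $\mathcal F_{\ell^-,t}\subseteq\mathcal F_{\ell^+,t}$. Take $A\in\mathcal F_{\ell^-,t}$. Since $\{\ell^+\le l\}\subseteq\{\ell^-\le l\}$,
\[
A\cap\{\ell^+\le l\}=\bigl(A\cap\{\ell^-\le l\}\bigr)\cap\{\ell^+\le l\}\in\mathcal F_{l,t}.
\]

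\textbf{Step 3.} Apply the tower property:
\[
\mathbb E[\mu_{\ell^+,t}\mid\mathcal F_{\ell^-,t}]=\mathbb E\bigl[\mathbb E[\theta\mid\mathcal F_{\ell^+,t}]\mid\mathcal F_{\ell^-,t}\bigr]=\mathbb E[\theta\mid\mathcal F_{\ell^-,t}]=\mu_{\ell^-,t}.
\]

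The main obstacle is Step 1 for a general compact $\mathcal L$ with uncountably many values. There I would approximate $\ell$ from above by the discretizations $\ell^{(n)}$, defined as the smallest grid point of a finite $2^{-n}$-grid of $\mathcal L$ (including $\max\mathcal L$) that is $\ge\ell$. Each $\ell^{(n)}$ is a stopping level, and $\ell^{(n)}\downarrow\ell$. Step 1 applies to each $\ell^{(n)}$. For the limit, I would work with a right-continuous version of $\mu_{\cdot,t}$, which exists by right-continuity of the filtration, so that $\mu_{\ell^{(n)},t}\to\mu_{\ell,t}$. Since $\mathcal F_{\ell,t}=\bigcap_n\mathcal F_{\ell^{(n)},t}$ by right-continuity, backward martingale convergence (bounded, hence uniformly integrable) then identifies the limit as $\mathbb E[\theta\mid\mathcal F_{\ell,t}]$. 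This is exactly the standard proof of Doob's optional sampling theorem for right-continuous, uniformly integrable martingales, so I would cite it for this step.
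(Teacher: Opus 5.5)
Your proposal is correct and takes the same route as the paper, which simply invokes the standard optional-sampling theorem for the bounded martingale $l\mapsto\mu_{l,t}$; you unpack that theorem through closure by $\theta$, monotonicity of the stopped $\sigma$-fields, and the tower property. One detail needs tightening in the general compact case: at points of $\mathcal L$ with a gap immediately above them, a $2^{-n}$-grid inside $\mathcal L$ can fail to give $\ell^{(n)}\downarrow\ell$, so either use nested finite grids whose union is dense in $\mathcal L$ and contains every such right-isolated point, or extend the filtration piecewise constantly off $\mathcal L$ and round up dyadically.
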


\begin{proof}[Proof of \cref{lem:stopped-optional-sampling}]
For fixed $t$, $(\mu_{l,t})_{l\in\mathcal L}$ is a bounded martingale in the technology coordinate. The conclusion is the standard optional-sampling theorem for bounded martingales stopped at bounded stopping levels, applied to the stopped sigma-fields defined above.
\end{proof}

\begin{definition}[Continuation Values]\label{def:continuation-T}
For starting level $\ell$ at period $t$, define:
\begin{align}
J^A_t(\ell) &:= \sup \mathbb{E}\left[\sum_{s=t}^{T} \beta^{s-t} U(\mu_{\ell_s,s}, \ell_s) \,\Big|\, \mathcal{F}_{\ell,t}\right], \\
J^P_t(\ell) &:= \mathbb{E}\left[\sum_{s=t}^{T} \beta^{s-t} V(\mu_{\ell^*_s,s}, \ell^*_s) \,\Big|\, \mathcal{F}_{\ell,t}\right],
\end{align}
where the supremum is over feasible continuation paths $(\ell_t, \ldots, \ell_T)$ with $\ell_t \in [\ell, \bar{\ell}_t]$, $\ell_s \in [\ell_{s-1}, \bar{\ell}_s]$ for $s > t$, and each $\ell_s$ being $\mathcal{F}_{(\cdot,s)}$-adapted. The path $(\ell^*_t, \ldots, \ell^*_T)$ denotes the agent's optimal path starting from $\ell$ at period $t$.
\end{definition}

The following result comparing the magnitudes of preferences over stopping levels will also be helpful:

\begin{lemma}[Comparison of preference magnitudes]\label{lem:magnitude_compstat}
Fix a period $t$. Let $\ell^-$ be an $\mathcal{F}_{(\cdot,t)}$-stopping level and let $\ell^+ \geq \ell^-$ a.s.\ be another $\mathcal{F}_{(\cdot,t)}$-stopping level. Then:
\begin{align*}
V(\mu_{\ell^-,t}, \ell^-) &- \mathbb{E}[V(\mu_{\ell^+,t}, \ell^+) \mid \mathcal{F}_{\ell^-,t}] \\
&\geq \frac{1}{\partial_+g(0)} \Big( U(\mu_{\ell^-,t}, \ell^-) - \mathbb{E}[U(\mu_{\ell^+,t}, \ell^+) \mid \mathcal{F}_{\ell^-,t}] \Big).
\end{align*}
\end{lemma}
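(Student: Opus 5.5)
The plan is to reduce the inequality to a pathwise, state-by-state comparison. Both $V$ and $U$ are affine in the belief, and the two stopped levels are ordered, so everything can be rewritten as conditional expectations of state-contingent flow payoffs. Convexity of $g$, together with the sign structure $v(1,\cdot)\ge 0$ increasing and $v(0,\cdot)\le 0$ decreasing, then gives the inequality realization by realization.

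\emph{Step 1 (linearization).} Since $\theta$ is binary, $V(\mu,l)=\mu v(1,l)+(1-\mu)v(0,l)$ and $U(\mu,l)=\mu g(v(1,l))+(1-\mu)g(v(0,l))$ are affine in $\mu$. Because $\ell^+$ is $\mathcal F_{\ell^+,t}$-measurable and $\mu_{\ell^+,t}=\mathbb E[\theta\mid\mathcal F_{\ell^+,t}]$, we get
\[
V(\mu_{\ell^+,t},\ell^+)=\mathbb E[v(\theta,\ell^+)\mid \mathcal F_{\ell^+,t}].
\]
Since $\ell^-\le\ell^+$ gives $\mathcal F_{\ell^-,t}\subseteq\mathcal F_{\ell^+,t}$, the tower property yields $\mathbb E[V(\mu_{\ell^+,t},\ell^+)\mid\mathcal F_{\ell^-,t}]=\mathbb E[v(\theta,\ell^+)\mid\mathcal F_{\ell^-,t}]$. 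This is the content of \cref{lem:stopped-optional-sampling}, applied after expressing the payoffs as affine functions of $\mu$. The same identities hold for $U$ with $g\circ v$ in place of $v$, and for the $\ell^-$ terms. The claim therefore becomes
\[
\mathbb E\Big[v(\theta,\ell^-)-v(\theta,\ell^+)-\tfrac{1}{\partial_+g(0)}\big(g(v(\theta,\ell^-))-g(v(\theta,\ell^+))\big)\,\Big|\,\mathcal F_{\ell^-,t}\Big]\ge 0 .
\]

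\emph{Step 2 (pathwise inequality).} It suffices to show that for each $\theta$ and real numbers $0\le l^-\le l^+$, with $a:=v(\theta,l^-)$ and $b:=v(\theta,l^+)$,
\[
\partial_+g(0)\,(a-b)\ \ge\ g(a)-g(b).
\]
\begin{itemize}
\item If $\theta=1$: $0\le a\le b$, since $v(1,0)=0$ and $v(1,\cdot)$ is increasing. By convexity every chord slope of $g$ on $[a,b]\subseteq[0,\infty)$ is at least $\partial_+g(0)$, so $g(b)-g(a)\ge\partial_+g(0)(b-a)$, which is the claim.
\item If $\theta=0$: $b\le a\le 0$. Every chord slope on $[b,a]\subseteq(-\infty,0]$ is at most the left derivative at $0$, which is at most $\partial_+g(0)$. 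Hence $g(a)-g(b)\le\partial_+g(0)(a-b)$, using $a-b\ge0$.
\end{itemize}
Dividing by $\partial_+g(0)>0$ and taking conditional expectations given $\mathcal F_{\ell^-,t}$ completes the argument. Equal levels, or chord endpoints at $0$, are handled by the same inequalities, so no degenerate cases need separate treatment.

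\emph{Main obstacle.} The only delicate point is Step 1: justifying the stopped-level conditioning, namely that $\ell^+$ is $\mathcal F_{\ell^+,t}$-measurable and that $\mathcal F_{\ell^-,t}\subseteq\mathcal F_{\ell^+,t}$ for stopping levels $\ell^-\le\ell^+$. I would take these as the standard facts underlying \cref{lem:stopped-optional-sampling}, using right-continuity for general compact $\mathcal L$. Integrability is automatic, since $v$ and $g\circ v$ are bounded on the compact $\mathcal L$.
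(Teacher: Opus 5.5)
Your proof is correct. Its core is the same as the paper's: state-by-state chord-slope bounds for the convex $g$, where the sign structure ($v(1,\cdot)\ge 0$ increasing, $v(0,\cdot)\le 0$ decreasing) keeps each chord on one side of the origin. On the right of the origin chord slopes are at least $\partial_+g(0)$; on the left they are at most $\partial_+g(0)$.

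What differs is the order of operations. The paper first averages $v(\theta,\ell^+)$ and $g(v(\theta,\ell^+))$ under the belief-tilted weights $\mu_{\ell^+,t}/\mu_{\ell^-,t}$ and $(1-\mu_{\ell^+,t})/(1-\mu_{\ell^-,t})$, which amounts to conditioning on $\theta$ given $\mathcal F_{\ell^-,t}$. It then applies the chord bound to the averaged values $\bar v_\theta$. Because of this, it needs Jensen's inequality $\bar u_\theta\ge g(\bar v_\theta)$ to get back to the agent's actual expected payoff. You instead write both sides as $\mathcal F_{\ell^-,t}$-conditional expectations of a function of $(\theta,\ell^-,\ell^+)$ and apply the chord bound realization by realization. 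The pathwise inequality then simply integrates, and Jensen is never needed.

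Your route is slightly cleaner in two respects:
\begin{itemize}
\item It never divides by $\mu_{\ell^-,t}$ or $1-\mu_{\ell^-,t}$, which are undefined on events where the belief is degenerate.
\item It uses the weak signs $v(1,\ell^-)\ge 0$ and $v(0,\ell^-)\le 0$. The paper asserts strict signs, which fail when $\ell^-=0$.
\end{itemize}
The paper's route buys little beyond separating the mean shift (the chord) from the dispersion of $v(\theta,\ell^+)$ across realizations (Jensen).

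The step you flag as delicate is covered by the paper's machinery. The identity $\mu_{\ell,t}=\mathbb E[\theta\mid\mathcal F_{\ell,t}]$ follows from \cref{lem:stopped-optional-sampling} applied to $\ell\le\max\mathcal L$, combined with the tower property and $\mathcal F_{\ell,t}\subseteq\mathcal F_{\max\mathcal L,t}$.
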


\begin{proof}[Proof of \cref{lem:magnitude_compstat}] The proof is fairly involved and follows the key steps of the corresponding one-dimensional argument in \citet{koh2024robust}, with an additional argument to handle the magnitudes of preferences over stopping levels. We proceed in four steps.

\medskip
\noindent\textbf{Step I: Decompose by state.}

Define the belief-weighted average continuation payoffs:
\[
\bar{v}_1 := \mathbb{E}\left[\frac{\mu_{\ell^+,t}}{\mu_{\ell^-,t}} \cdot v(1, \ell^+) \;\Big|\; \mathcal{F}_{\ell^-,t}\right], \qquad \bar{v}_0 := \mathbb{E}\left[\frac{1-\mu_{\ell^+,t}}{1-\mu_{\ell^-,t}} \cdot v(0, \ell^+) \;\Big|\; \mathcal{F}_{\ell^-,t}\right].
\]
By stopped optional sampling (\cref{lem:stopped-optional-sampling}), $\mathbb{E}[\mu_{\ell^+,t} \mid \mathcal{F}_{\ell^-,t}] = \mu_{\ell^-,t}$, so the weights $\frac{\mu_{\ell^+,t}}{\mu_{\ell^-,t}}$ and $\frac{1-\mu_{\ell^+,t}}{1-\mu_{\ell^-,t}}$ integrate to one and define probability measures.

The principal's and agent's payoff differences decompose as:
\begin{align*}
\Delta^P &:= V(\mu_{\ell^-,t}, \ell^-) - \mathbb{E}[V(\mu_{\ell^+,t}, \ell^+) \mid \mathcal{F}_{\ell^-,t}] \\
&= \mu_{\ell^-,t}(v(1, \ell^-) - \bar{v}_1) + (1-\mu_{\ell^-,t})(v(0, \ell^-) - \bar{v}_0),  \\
\Delta^A &:= U(\mu_{\ell^-,t}, \ell^-) - \mathbb{E}[U(\mu_{\ell^+,t}, \ell^+) \mid \mathcal{F}_{\ell^-,t}] \\
&= \mu_{\ell^-,t}(g(v(1, \ell^-)) - \bar{u}_1) + (1-\mu_{\ell^-,t})(g(v(0, \ell^-)) - \bar{u}_0), 
\end{align*}
where $\bar{u}_\theta := \mathbb{E}\left[\frac{\mathbb{P}(\theta \mid \mathcal{F}_{\ell^+,t})}{\mathbb{P}(\theta \mid \mathcal{F}_{\ell^-,t})} \cdot g(v(\theta, \ell^+)) \;\Big|\; \mathcal{F}_{\ell^-,t}\right]$ for $\theta \in \{0,1\}$.

\medskip
\noindent\textbf{Step II: Monotonicity and Jensen bounds.}

Since $\ell^+ \geq \ell^-$ a.s., $v(1, \cdot)$ is strictly increasing, and $v(0, \cdot)$ is strictly decreasing:
\begin{equation}\label{eq:vbounds}
\bar{v}_1 \geq v(1, \ell^-) > 0 \qquad \text{and} \qquad \bar{v}_0 \leq v(0, \ell^-) < 0.
\end{equation}
By Jensen's inequality applied to the convex function $g$ under the tilted measures:
\begin{equation}\label{eq:jensen}
\bar{u}_1 \geq g(\bar{v}_1) \qquad \text{and} \qquad \bar{u}_0 \geq g(\bar{v}_0).
\end{equation}

\medskip
\noindent\textbf{Step III: Term-by-term comparison via convexity.}

For notational convenience, let $v^-_\theta := v(\theta, \ell^-)$ and $v^+_\theta := \bar{v}_\theta$ for $\theta \in \{0,1\}$.

\textit{Good state ($\theta = 1$):} From \eqref{eq:vbounds}, $0 < v^-_1 \leq v^+_1$. Since $g$ is convex, the slope of any secant line is bounded below by any subgradient to the left. In particular, for $0 < a \leq b$:
\[
\frac{g(b) - g(a)}{b - a} \geq \partial_+g(a) \geq \partial_+g(0) = \partial_+g(0),
\]
where the second inequality uses the fact that the right derivative of a convex function is nondecreasing in its argument. Applying this with $a = v^-_1$ and $b = v^+_1$:
\[
g(v^+_1) - g(v^-_1) \geq \partial_+g(0)(v^+_1 - v^-_1),
\]
which can be rearranged to:
\begin{equation}\label{eq:state1_bound}
v^-_1 - v^+_1 \geq \frac{1}{\partial_+g(0)}(g(v^-_1) - g(v^+_1)).
\end{equation}
(Note both sides are non-positive.)

\textit{Bad state ($\theta = 0$):} From \eqref{eq:vbounds}, $v^+_0 \leq v^-_0 < 0$. Since $g$ is convex, for $a \leq b < 0$:
\[
\frac{g(b) - g(a)}{b - a} \leq \partial^-g(b) \leq \partial^-g(0) \leq \partial_+g(0) = \partial_+g(0),
\]
where $\partial^-g(b)$ denotes the left derivative at $b$, and we use the fact that the left derivative of a convex function is nondecreasing, and $\partial^-g(0) \leq \partial_+g(0)$ for any convex function. Applying this with $a = v^+_0$ and $b = v^-_0$:
\[
g(v^-_0) - g(v^+_0) \leq \partial_+g(0)(v^-_0 - v^+_0),
\]
which gives:
\begin{equation}\label{eq:state0_bound}
v^-_0 - v^+_0 \geq \frac{1}{\partial_+g(0)}(g(v^-_0) - g(v^+_0)),
\end{equation}
noting both sides are non-negative. 

\medskip
\noindent\textbf{Step IV: Combining the bounds.}

From \eqref{eq:jensen}, we have $-\bar{u}_\theta \leq -g(v^+_\theta)$, so:
\[
g(v^-_\theta) - \bar{u}_\theta \leq g(v^-_\theta) - g(v^+_\theta).
\]

Combining with \eqref{eq:state1_bound} and \eqref{eq:state0_bound}:
\[
v^-_\theta - v^+_\theta \geq \frac{1}{\partial_+g(0)}(g(v^-_\theta) - g(v^+_\theta)) \geq \frac{1}{\partial_+g(0)}(g(v^-_\theta) - \bar{u}_\theta)
\]
for each $\theta \in \{0, 1\}$.

Multiplying by the appropriate belief weights and summing:
\begin{align*}
\Delta^P &= \mu_{\ell^-,t}(v^-_1 - v^+_1) + (1-\mu_{\ell^-,t})(v^-_0 - v^+_0) \\
&\geq \frac{1}{\partial_+g(0)} \Big( \mu_{\ell^-,t}(g(v^-_1) - \bar{u}_1) + (1-\mu_{\ell^-,t})(g(v^-_0) - \bar{u}_0) \Big) \\
&= \frac{1}{\partial_+g(0)} \Delta^A. \qedhere
\end{align*}
\end{proof}

We are ready to prove \cref{thrm:speed_robustness}. 

\underline{\textbf{Step 2: Option Value Comparison}}

\begin{lemma}[No crossing of largest optimal continuations]\label{lemma:no-crossing}
Fix a period $t$ and let $\ell\leq \ell'\leq \bar{\ell}_t$ be $\mathcal{F}_{(\cdot,t)}$-stopping levels. Let $\ell^*_t(\ell)$ and $\ell^*_t(\ell')$ be the largest period-$t$ choices in the agent's optimal continuation paths under the adaptive speed limit. Then
\[
\ell^*_t(\ell)\vee \ell' \leq \ell^*_t(\ell')
\qquad\text{a.s.}
\]
\end{lemma}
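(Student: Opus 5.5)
The plan is a standard pasting/lattice argument for largest optimal stopping choices. The first part, $\ell' \leq \ell^*_t(\ell')$, is immediate from feasibility: any continuation from $\ell'$ must satisfy $\ell_t\geq \ell'$. It therefore suffices to show $\ell^*_t(\ell)\leq \ell^*_t(\ell')$ a.s.

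Let $A:=\{\ell^*_t(\ell)>\ell^*_t(\ell')\}$ and suppose, for contradiction, that $\mathbb P(A)>0$. On $A$ we have $\ell^*_t(\ell)>\ell^*_t(\ell')\geq \ell'\geq \ell$. Write $\pmb q=(q_s)_{s\geq t}$ for the agent's largest optimal continuation from $\ell$ and $\pmb q'$ for that from $\ell'$. Following the classical proof for the largest optimal stopping time, the idea is to swap the two continuations on $A$.

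First, define a candidate continuation from $\ell'$ by $\tilde q_s:=q_s 1_A+q'_s1_{A^c}$ for $s\geq t$.
\begin{itemize}
\item By \cref{lem:stopped-levels}, $A$ is a comparison event between stopped levels, so $\tilde q_t$ is an $\mathcal F_{(\cdot,t)}$-stopping level.
\item The later coordinates are pasted on an event in $\mathcal F_{\ell^*_t(\ell)\vee\ell^*_t(\ell'),t}\subseteq\mathcal F_{\cdot,s}$, so they remain adapted.
\item $\tilde q_t\geq \ell'$ on $A$, and monotonicity and the cap $\bar\ell_s$ hold pathwise on each piece.
\end{itemize}
Hence $\tilde{\pmb q}$ is feasible from $\ell'$.

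Symmetrically, define a continuation from $\ell$ by $\hat q_s:=q'_s1_A+q_s1_{A^c}$. This is feasible because $q'_t\geq \ell'\geq \ell$.

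Next, compare payoffs. Optimality of $\pmb q'$ from $\ell'$ gives
\[
\mathbb E\big[(\text{payoff of }\pmb q)-(\text{payoff of }\pmb q')\,;A\big]\leq 0,
\]
after conditioning on the stopped field at $\ell^*_t(\ell')$ and noting that both continuations agree off $A$. Optimality of $\pmb q$ from $\ell$ gives the reverse inequality. So the two continuations yield equal conditional payoffs on $A$, which makes $\tilde{\pmb q}$ optimal from $\ell'$. But $\tilde q_t=\ell^*_t(\ell)>\ell^*_t(\ell')$ on $A$, contradicting the maximality of $\ell^*_t(\ell')$. This forces $\mathbb P(A)=0$.

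The main obstacle is this payoff comparison, because the conditioning fields differ: the values are $J^A_t(\ell)$ given $\mathcal F_{\ell,t}$ and $J^A_t(\ell')$ given $\mathcal F_{\ell',t}$. The restriction to $A$ must be taken in a sense that makes both optimality statements usable. I would handle this as follows.
\begin{itemize}
\item Both continuation paths earn no payoff between the starting level and the period-$t$ choice. Flow payoffs depend only on $(\mu_{\ell_s,s},\ell_s)$, and the adaptive speed limit is Markovian with zero transfers below the cap. So the conditional payoff of any continuation from a stopped level $m\geq \ell'$, given $\mathcal F_{m,t}$, does not depend on whether one started from $\ell$ or from $\ell'$.
\item I would therefore condition both comparisons on $\mathcal F_{\ell^*_t(\ell')\wedge\ell^*_t(\ell),t}$. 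This field contains $A$ by \cref{lem:stopped-levels} and is finer than both $\mathcal F_{\ell,t}$ and $\mathcal F_{\ell',t}$ on the relevant event, since $\ell'\leq \ell^*_t(\ell')$.
\item I would then use the tower property with \cref{lem:stopped-optional-sampling} to pass between the two conditional expectations.
\end{itemize}
The measurable-selection remarks of Step 1 guarantee that the largest selectors and value functions used here have $\mathcal F_{\ell,t}$-measurable versions.
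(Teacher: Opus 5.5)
Your proposal is correct and follows essentially the same route as the paper. Both paste the two optimal choices on the crossing event $A=\{\ell^*_t(\ell)>\ell^*_t(\ell')\}$, compare them using optimality from $\ell'$ and from $\ell$, and contradict the largest-selection property; you reach equality, which makes the pasted path optimal from $\ell'$, while the paper first extracts strictness from largest-selection and then contradicts optimality from $\ell$, which is the same argument. The obstacle you flag is not really one: since $\ell\leq\ell'$ gives $\mathcal F_{\ell,t}\subseteq\mathcal F_{\ell',t}$, the conditions $Y:=\mathbb E[1_A(\text{payoff of }\pmb q-\text{payoff of }\pmb q')\mid\mathcal F_{\ell',t}]\leq 0$ and $\mathbb E[Y\mid\mathcal F_{\ell,t}]\geq 0$ force $Y=0$ by the tower property alone, so you need neither stopped optional sampling of beliefs nor the detour through $\mathcal F_{\ell^*_t(\ell)\wedge\ell^*_t(\ell'),t}$, which would additionally require the unstated fact that optimality is inherited at intermediate stopping levels.
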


\begin{proof}[Proof of \cref{lemma:no-crossing}]
Write the agent's period-$t$ payoff from choosing level $m$ and then continuing optimally as
\[
X_t(m):=
\begin{cases}
U(\mu_{m,T},m), & t=T,\\
U(\mu_{m,t},m)+\beta\Ex[J^A_{t+1}(m)\mid\mathcal F_{m,t}], & t<T.
\end{cases}
\]
Thus $\ell^*_t(x)$ is the largest solution to
\[
\sup_{m\geq x}\Ex[X_t(m)\mid \mathcal{F}_{x,t}],
\]
where $m$ ranges over feasible $\mathcal{F}_{(\cdot,t)}$-stopping levels satisfying the adaptive speed-limit constraint.

Let $a:=\ell^*_t(\ell)$, $b:=\ell^*_t(\ell')$, and $A:=\{a\vee \ell'>b\}$. Since $b\geq \ell'$ by feasibility, $A=\{a>b\}$. Suppose toward a contradiction that $A$ has positive probability. Define the pasted stopping levels
\[
d:=a\,1_A+b\,1_{A^c},
\qquad
c:=b\,1_A+a\,1_{A^c}.
\]
By \cref{lem:stopped-levels}, both $c$ and $d$ are feasible stopping levels; $d$ is feasible from $\ell'$ and $c$ is feasible from $\ell$. Moreover, $d\geq b$ and $d>b$ on $A$.

Because $b$ is optimal from $\ell'$,
\[
\Ex[X_t(b)\mid \mathcal{F}_{\ell',t}]
\geq
\Ex[X_t(d)\mid \mathcal{F}_{\ell',t}].
\]
Since $X_t(b)-X_t(d)=1_A(X_t(b)-X_t(a))$, this implies
\[
\Ex\!\left[1_A(X_t(b)-X_t(a))\mid \mathcal{F}_{\ell',t}\right]\geq 0.
\]
The inequality must be strict on a positive-probability event whenever $A$ has positive probability; otherwise $d$ would also be optimal from $\ell'$, contradicting that $b$ is the largest optimal choice from $\ell'$. Taking conditional expectations down to $\mathcal{F}_{\ell,t}$ gives
\[
\Ex[X_t(c)\mid \mathcal{F}_{\ell,t}]
-
\Ex[X_t(a)\mid \mathcal{F}_{\ell,t}]
=
\Ex\!\left[1_A(X_t(b)-X_t(a))\mid \mathcal{F}_{\ell,t}\right]
\geq 0,
\]
with strict inequality on a positive-probability event. This contradicts the optimality of $a$ from $\ell$. Therefore $A$ is null, and $\ell^*_t(\ell)\vee \ell'\leq \ell^*_t(\ell')$ a.s.
\end{proof}

\begin{lemma}[Option Value Comparison]\label{lem:option-value}
Fix $t \in \{1, \ldots, T\}$. Let $\ell\leq\ell'\leq\bar{\ell}_t$ be $\mathcal{F}_{(\cdot,t)}$-stopping levels. Then:
\[
J^P_t(\ell) - \mathbb{E}[J^P_t(\ell') \mid \mathcal{F}_{\ell,t}] 
\geq \frac{1}{\partial_+g(0)} \Big( J^A_t(\ell) - \mathbb{E}[J^A_t(\ell') \mid \mathcal{F}_{\ell,t}] \Big).
\]
On the event $\{\ell=\ell'\}$, both sides are zero.
\end{lemma}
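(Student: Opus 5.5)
Write $\kappa:=\partial_+g(0)>0$. The plan is to argue by backward induction on $t$, from $T$ down to $1$. At each step we fix $\ell\leq\ell'\leq\bar\ell_t$ and compare the agent's optimal continuations from $\ell$ and from $\ell'$. Let $a:=\ell^*_t(\ell)$ and $b:=\ell^*_t(\ell')$. By \cref{lemma:no-crossing}, $a\vee\ell'\leq b$ a.s. On $\{\ell=\ell'\}$ the two problems coincide, so both sides vanish; this gives the last claim. The main device is an intermediate path: start from $\ell$, play $a$ at date $t$, and continue optimally from $a$. The difference $J^P_t(\ell)-\Ex[J^P_t(\ell')\mid\mathcal F_{\ell,t}]$ is then written as a sum of comparisons between levels ordered $a\leq b$, or between a level and a higher one at later dates.

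\textbf{Base case $t=T$.} Here $J^A_T(x)=\Ex[U(\mu_{\ell^*_T(x),T},\ell^*_T(x))\mid\mathcal F_{x,T}]$, and similarly for $J^P_T$. Since $a\leq b$, we apply \cref{lem:magnitude_compstat} with $\ell^-=a$ and $\ell^+=b$, conditioning on $\mathcal F_{a,T}$. We then take expectations down to $\mathcal F_{\ell,T}$ using the tower property; this is valid because $\mathcal F_{\ell,T}\subseteq\mathcal F_{a,T}\subseteq\mathcal F_{b,T}$ by monotonicity of stopped fields. The result is $\Ex[V(a)-V(b)\mid\mathcal F_{\ell,T}]\geq\kappa^{-1}\Ex[U(a)-U(b)\mid\mathcal F_{\ell,T}]$. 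One point needs care: $J^P_T(\ell')$ is conditioned on $\mathcal F_{\ell',T}$, but after conditioning down to $\mathcal F_{\ell,T}$ it equals the expectation of the realized payoff at $b$. So the base case follows directly.

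\textbf{Inductive step.} For $t<T$, write $J^P_t(x)=\Ex[V(\mu_{m,t},m)+\beta J^P_{t+1}(m)\mid\mathcal F_{x,t}]$ with $m=\ell^*_t(x)$, and similarly for $J^A$. The difference then splits into two terms:
\begin{itemize}
\item a flow term $V(a)-V(b)$ at date $t$, handled exactly as in the base case by \cref{lem:magnitude_compstat};
\item a continuation term $\beta\big(J^P_{t+1}(a)-J^P_{t+1}(b)\big)$.
\end{itemize}
For the continuation term we need $a\leq b$ to be valid starting points at date $t+1$, i.e.\ $\mathcal F_{(\cdot,t+1)}$-stopping levels bounded by $\bar\ell_{t+1}$. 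This holds because $\mathcal F_{l,t}\subseteq\mathcal F_{l,t+1}$ and $\bar\ell_t\leq\bar\ell_{t+1}$. The induction hypothesis at $t+1$, conditioned on $\mathcal F_{a,t+1}$ and then towered down to $\mathcal F_{\ell,t}$, gives $J^P_{t+1}(a)-\Ex[J^P_{t+1}(b)\mid\cdot]\geq\kappa^{-1}(J^A_{t+1}(a)-\Ex[J^A_{t+1}(b)\mid\cdot])$. Adding the flow and continuation inequalities yields the statement at $t$, because $J^A_t(\ell)$ and $J^A_t(\ell')$ are attained by exactly these choices $a$ and $b$.

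\textbf{Main obstacle.} The delicate step is the change of conditioning across dates and across the random levels. In the inductive step, $J^A_{t+1}(a)$ is conditioned on $\mathcal F_{a,t+1}$, whereas the date-$t$ continuation value uses $\Ex[J^A_{t+1}(a)\mid\mathcal F_{a,t}]$. We also need $\mathcal F_{\ell,t}\subseteq\mathcal F_{a,t+1}$ to justify the tower property. We would establish this inclusion from the definition of stopped fields, together with $a$ being a stopping level in both coordinates, since $\{a\leq l\}\in\mathcal F_{l,t}\subseteq\mathcal F_{l,t+1}$. The second concern is that the largest optimal selectors must be measurable, so that the pasting arguments apply; Step 1 and \cref{lem:stopped-levels} grant this. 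Once these are in place, each inequality is a direct application of \cref{lem:magnitude_compstat} or of the induction hypothesis.
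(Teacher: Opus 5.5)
Your proposal is correct and follows the paper's own proof. Both argue by backward induction on $t$ and split each difference into a date-$t$ flow term, handled by \cref{lem:magnitude_compstat}, and a continuation term, handled by the induction hypothesis at $t+1$, with \cref{lemma:no-crossing} supplying the ordering. The only difference is cosmetic: the paper routes the comparison through the bridging level $\tilde\ell_t=\ell^*_t(\ell)\vee\ell'$ in two steps, while you compare $\ell^*_t(\ell)\le\ell^*_t(\ell')$ directly, and the paper's two bridged inequalities telescope to exactly your single comparison.
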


\begin{proof}[Proof of \cref{lem:option-value}]
We induct backwards on time. It suffices to prove the result on $\{\ell<\ell'\}$, since the equality event is trivial and can be pasted back by \cref{lem:stopped-levels}.


\medskip
\noindent\textit{Base case ($t = T$):} At the final period, there is no continuation beyond $T$:
\begin{align*}
J^A_T(\ell) &= \sup_{\ell_T \in [\ell, \bar{\ell}_T]} \mathbb{E}[U(\mu_{\ell_T,T}, \ell_T) \mid \mathcal{F}_{\ell,T}], \\[6pt]
J^P_T(\ell) &= \mathbb{E}[V(\mu_{\ell^*_T(\ell),T}, \ell^*_T(\ell)) \mid \mathcal{F}_{\ell,T}],
\end{align*}
where $\ell^*_T(\ell)$ denotes the agent's optimal stopping level in period $T$ starting from $\ell$. This is exactly the baseline model of \citet{koh2024robust}.

\medskip
\noindent\textbf{Base step I: Construct bridging point.} Define $\tilde{\ell}_T := \ell^*_T(\ell) \vee \ell'$. Note that (i) $\tilde{\ell}_T \geq \ell^*_T(\ell)$ by construction; (ii) $\tilde{\ell}_T \geq \ell'$, so $\tilde{\ell}_T$ is feasible starting from $\ell'$; and (iii) $\tilde{\ell}_T \leq \bar{\ell}_T$ since both $\ell^*_T(\ell) \leq \bar{\ell}_T$ and $\ell' \leq \bar{\ell}_T$. 

\medskip
\noindent\textbf{Base step II: Compare $\ell^*_T(\ell)$ with $\tilde{\ell}_T$.}

Since $\ell^*_T(\ell) \leq \tilde{\ell}_T$ a.s., \cref{lem:magnitude_compstat} gives:
\begin{align*}
&V(\mu_{\ell^*_T(\ell),T}, \ell^*_T(\ell)) - \mathbb{E}[V(\mu_{\tilde{\ell}_T,T}, \tilde{\ell}_T) \mid \mathcal{F}_{\ell^*_T(\ell),T}] \\[6pt]
&\qquad \geq \frac{1}{\partial_+g(0)} \Big( U(\mu_{\ell^*_T(\ell),T}, \ell^*_T(\ell)) - \mathbb{E}[U(\mu_{\tilde{\ell}_T,T}, \tilde{\ell}_T) \mid \mathcal{F}_{\ell^*_T(\ell),T}] \Big).
\end{align*}

Taking $\mathbb{E}[\cdot \mid \mathcal{F}_{\ell,T}]$ and using the tower property:
\begin{equation}\label{eq:base-step2}
J^P_T(\ell) - \mathbb{E}[V(\mu_{\tilde{\ell}_T,T}, \tilde{\ell}_T) \mid \mathcal{F}_{\ell,T}]
\geq \frac{1}{\partial_+g(0)} \Big( J^A_T(\ell) - \mathbb{E}[U(\mu_{\tilde{\ell}_T,T}, \tilde{\ell}_T) \mid \mathcal{F}_{\ell,T}] \Big).
\end{equation}

\medskip
\noindent\textbf{Base step III: Compare $\tilde{\ell}_T$ with $\ell^*_T(\ell')$.}

By \cref{lemma:no-crossing}, $\tilde{\ell}_T \leq \ell^*_T(\ell')$ a.s. Hence \cref{lem:magnitude_compstat} gives:
\begin{align*}
&V(\mu_{\tilde{\ell}_T,T}, \tilde{\ell}_T) - \mathbb{E}[V(\mu_{\ell^*_T(\ell'),T}, \ell^*_T(\ell')) \mid \mathcal{F}_{\tilde{\ell}_T,T}] \\[6pt]
&\qquad \geq \frac{1}{\partial_+g(0)} \Big( U(\mu_{\tilde{\ell}_T,T}, \tilde{\ell}_T) - \mathbb{E}[U(\mu_{\ell^*_T(\ell'),T}, \ell^*_T(\ell')) \mid \mathcal{F}_{\tilde{\ell}_T,T}] \Big).
\end{align*}
Taking $\mathbb{E}[\cdot \mid \mathcal{F}_{\ell,T}]$:
\begin{equation}\label{eq:base-step3}
\begin{aligned}
&\mathbb{E}[V(\mu_{\tilde{\ell}_T,T}, \tilde{\ell}_T) \mid \mathcal{F}_{\ell,T}] - \mathbb{E}[J^P_T(\ell') \mid \mathcal{F}_{\ell,T}] \\[6pt]
&\qquad \geq \frac{1}{\partial_+g(0)} \Big( \mathbb{E}[U(\mu_{\tilde{\ell}_T,T}, \tilde{\ell}_T) \mid \mathcal{F}_{\ell,T}] - \mathbb{E}[J^A_T(\ell') \mid \mathcal{F}_{\ell,T}] \Big).
\end{aligned}
\end{equation}

\medskip
\noindent\textbf{Base step IV: Combine.} Adding \eqref{eq:base-step2} and \eqref{eq:base-step3}, the intermediate terms cancel:
\[
J^P_T(\ell) - \mathbb{E}[J^P_T(\ell') \mid \mathcal{F}_{\ell,T}] 
\geq \frac{1}{\partial_+g(0)} \Big( J^A_T(\ell) - \mathbb{E}[J^A_T(\ell') \mid \mathcal{F}_{\ell,T}] \Big).
\]

This completes the base case.


\medskip
\noindent\textit{Inductive step ($t+1 \to t$):} Assume the lemma holds for period $t+1$. We prove it for period $t$.

The continuation values decompose by the principle of optimality:
\begin{align*}
J^A_t(\ell) &= \mathbb{E}\left[U(\mu_{\ell^*_t(\ell),t}, \ell^*_t(\ell)) + \beta J^A_{t+1}(\ell^*_t(\ell)) \,\Big|\, \mathcal{F}_{\ell,t}\right], \\[6pt]
J^P_t(\ell) &= \mathbb{E}\left[V(\mu_{\ell^*_t(\ell),t}, \ell^*_t(\ell)) + \beta J^P_{t+1}(\ell^*_t(\ell)) \,\Big|\, \mathcal{F}_{\ell,t}\right],
\end{align*}
where $\ell^*_t(\ell)$ is the agent's optimal period-$t$ stopping level starting from $\ell$.

\medskip
\noindent\textbf{Inductive step I: Construct bridging point.} Define $\tilde{\ell}_t := \ell^*_t(\ell) \vee \ell'$. Note that (i) $\tilde{\ell}_t \geq \ell^*_t(\ell)$ by construction; (ii) $\tilde{\ell}_t \geq \ell'$, so $\tilde{\ell}_t$ is feasible starting from $\ell'$; and (iii) $\tilde{\ell}_t \leq \bar{\ell}_t$ since both $\ell^*_t(\ell) \leq \bar{\ell}_t$ and $\ell' \leq \bar{\ell}_t$. 

\medskip
\noindent\textbf{Inductive step II: Compare $\ell^*_t(\ell)$ with $\tilde{\ell}_t$.}

Since $\ell^*_t(\ell) \leq \tilde{\ell}_t$ a.s., \cref{lem:magnitude_compstat} gives:
\begin{align*}
&V(\mu_{\ell^*_t(\ell),t}, \ell^*_t(\ell)) - \mathbb{E}[V(\mu_{\tilde{\ell}_t,t}, \tilde{\ell}_t) \mid \mathcal{F}_{\ell^*_t(\ell),t}] \\[6pt]
&\qquad \geq \frac{1}{\partial_+g(0)} \Big( U(\mu_{\ell^*_t(\ell),t}, \ell^*_t(\ell)) - \mathbb{E}[U(\mu_{\tilde{\ell}_t,t}, \tilde{\ell}_t) \mid \mathcal{F}_{\ell^*_t(\ell),t}] \Big).
\end{align*}

By the inductive hypothesis at period $t+1$ with stopping levels $\ell^*_t(\ell)$ and $\tilde{\ell}_t$ (with the equality case being trivial):
\begin{align*}
&J^P_{t+1}(\ell^*_t(\ell)) - \mathbb{E}[J^P_{t+1}(\tilde{\ell}_t) \mid \mathcal{F}_{\ell^*_t(\ell),t+1}] \\[6pt]
&\qquad \geq \frac{1}{\partial_+g(0)} \Big( J^A_{t+1}(\ell^*_t(\ell)) - \mathbb{E}[J^A_{t+1}(\tilde{\ell}_t) \mid \mathcal{F}_{\ell^*_t(\ell),t+1}] \Big).
\end{align*}

Taking $\mathbb{E}[\cdot \mid \mathcal{F}_{\ell,t}]$ of both inequalities, applying the tower property, and combining with weight $\beta$ on the second:
\begin{equation}\label{eq:ind-step2}
\begin{aligned}
&J^P_t(\ell) - \mathbb{E}[V(\mu_{\tilde{\ell}_t,t}, \tilde{\ell}_t) + \beta J^P_{t+1}(\tilde{\ell}_t) \mid \mathcal{F}_{\ell,t}] \\[6pt]
&\qquad \geq \frac{1}{\partial_+g(0)} \Big( J^A_t(\ell) - \mathbb{E}[U(\mu_{\tilde{\ell}_t,t}, \tilde{\ell}_t) + \beta J^A_{t+1}(\tilde{\ell}_t) \mid \mathcal{F}_{\ell,t}] \Big).
\end{aligned}
\end{equation}

\medskip
\noindent\textbf{Inductive step III: Compare $\tilde{\ell}_t$ with $\ell^*_t(\ell')$.}

By \cref{lemma:no-crossing}, $\tilde{\ell}_t \leq \ell^*_t(\ell')$ a.s. Hence \cref{lem:magnitude_compstat} gives:
\begin{align*}
&V(\mu_{\tilde{\ell}_t,t}, \tilde{\ell}_t) - \mathbb{E}[V(\mu_{\ell^*_t(\ell'),t}, \ell^*_t(\ell')) \mid \mathcal{F}_{\tilde{\ell}_t,t}] \\[6pt]
&\qquad \geq \frac{1}{\partial_+g(0)} \Big( U(\mu_{\tilde{\ell}_t,t}, \tilde{\ell}_t) - \mathbb{E}[U(\mu_{\ell^*_t(\ell'),t}, \ell^*_t(\ell')) \mid \mathcal{F}_{\tilde{\ell}_t,t}] \Big).
\end{align*}

By the inductive hypothesis at period $t+1$:
\begin{align*}
&J^P_{t+1}(\tilde{\ell}_t) - \mathbb{E}[J^P_{t+1}(\ell^*_t(\ell')) \mid \mathcal{F}_{\tilde{\ell}_t,t+1}] \\[6pt]
&\qquad \geq \frac{1}{\partial_+g(0)} \Big( J^A_{t+1}(\tilde{\ell}_t) - \mathbb{E}[J^A_{t+1}(\ell^*_t(\ell')) \mid \mathcal{F}_{\tilde{\ell}_t,t+1}] \Big).
\end{align*}

Taking expectations conditional on $\mathcal{F}_{\ell,t}$ and combining:
\begin{equation}\label{eq:ind-step3}
\begin{aligned}
&\mathbb{E}[V(\mu_{\tilde{\ell}_t,t}, \tilde{\ell}_t) + \beta J^P_{t+1}(\tilde{\ell}_t) \mid \mathcal{F}_{\ell,t}] - \mathbb{E}[J^P_t(\ell') \mid \mathcal{F}_{\ell,t}] \\[6pt]
&\qquad \geq \frac{1}{\partial_+g(0)} \Big( \mathbb{E}[U(\mu_{\tilde{\ell}_t,t}, \tilde{\ell}_t) + \beta J^A_{t+1}(\tilde{\ell}_t) \mid \mathcal{F}_{\ell,t}] - \mathbb{E}[J^A_t(\ell') \mid \mathcal{F}_{\ell,t}] \Big).
\end{aligned}
\end{equation}

\medskip
\noindent\textbf{Inductive step IV: Combine.} Adding \eqref{eq:ind-step2} and \eqref{eq:ind-step3}, the intermediate terms cancel:
\[
J^P_t(\ell) - \mathbb{E}[J^P_t(\ell') \mid \mathcal{F}_{\ell,t}] 
\geq \frac{1}{\partial_+g(0)} \Big( J^A_t(\ell) - \mathbb{E}[J^A_t(\ell') \mid \mathcal{F}_{\ell,t}] \Big).
\]

Inducting on $t$ completes the proof.
\end{proof}

\underline{\textbf{Step 3: Combining periods}}

\begin{lemma}[Period $t$ Comparison]\label{lem:period-t}
For any period $t \in \{1, \ldots, T\}$ and any starting stopping level $\ell \in [0, \bar{\ell}_t]$:
\[
J^P_t(\ell) \geq \mathbb{E}\left[\sum_{s=t}^{T} \beta^{s-t} V(\mu_{\bar{\ell}_s, s}, \bar{\ell}_s) \,\Big|\, \mathcal{F}_{\ell,t}\right].
\]
\end{lemma}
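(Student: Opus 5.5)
The plan is a backward induction on $t$ that compares the agent's continuation under $\bm{\phi}^*$ with the continuation that pushes straight to the boundary $\bar\ell_t$. The single step that does real work is \cref{lem:option-value} applied with $\ell'=\bar\ell_t$. Since $\bar\ell_t$ is a $\mathcal G_{(\cdot,t)}$-stopping level and $\mathcal G_{l,t}\subseteq\mathcal F_{l,t}$, it is also an $\mathcal F_{(\cdot,t)}$-stopping level. Moreover $\ell\leq\bar\ell_t$ by hypothesis, so the pair $(\ell,\bar\ell_t)$ is admissible in \cref{lem:option-value}, which gives
\[
J^P_t(\ell)-\Ex[J^P_t(\bar\ell_t)\mid\mathcal F_{\ell,t}]\geq\frac{1}{\partial_+g(0)}\Big(J^A_t(\ell)-\Ex[J^A_t(\bar\ell_t)\mid\mathcal F_{\ell,t}]\Big).
\]

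Next I will show the right-hand side is nonnegative. Starting from $\ell$, the agent may choose $\ell_t=\bar\ell_t$ and then continue optimally from $(\bar\ell_t,t)$. This strategy is feasible: it respects irreversibility and the cap, and the optimal continuation from $\bar\ell_t$ can be pasted in using \cref{lem:stopped-levels}. Hence $J^A_t(\ell)\geq\Ex[J^A_t(\bar\ell_t)\mid\mathcal F_{\ell,t}]$ by the tower property, since $\mathcal F_{\ell,t}\subseteq\mathcal F_{\bar\ell_t,t}$ when $\ell\leq\bar\ell_t$. Because $\partial_+g(0)>0$, this yields
\[
J^P_t(\ell)\geq\Ex[J^P_t(\bar\ell_t)\mid\mathcal F_{\ell,t}].
\]
Starting at $\bar\ell_t$, the only feasible period-$t$ choice is $\ell^*_t=\bar\ell_t$. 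The principle-of-optimality decomposition used in the proof of \cref{lem:option-value} then gives
\[
J^P_t(\bar\ell_t)=V(\mu_{\bar\ell_t,t},\bar\ell_t)+\beta\,\Ex[J^P_{t+1}(\bar\ell_t)\mid\mathcal F_{\bar\ell_t,t}].
\]

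The induction runs as follows. The base case $t=T$ is exactly the display above with no continuation term: $J^P_T(\ell)\geq\Ex[V(\mu_{\bar\ell_T,T},\bar\ell_T)\mid\mathcal F_{\ell,T}]$. For the inductive step, assume the claim at $t+1$ for every starting level in $[0,\bar\ell_{t+1}]$. I apply it with starting level $\bar\ell_t$, which is admissible by irreversibility $\bar\ell_t\leq\bar\ell_{t+1}$ and is an $\mathcal F_{(\cdot,t+1)}$-stopping level by monotonicity of the field in $t$. This gives $J^P_{t+1}(\bar\ell_t)\geq\Ex[\sum_{s\geq t+1}\beta^{s-t-1}V(\mu_{\bar\ell_s,s},\bar\ell_s)\mid\mathcal F_{\bar\ell_t,t+1}]$. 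Substituting into the decomposition and conditioning down to $\mathcal F_{\ell,t}$ completes the step.

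The main obstacle is justifying this chain of conditional expectations across stopped sigma-fields indexed by different times: $\mathcal F_{\ell,t}\subseteq\mathcal F_{\bar\ell_t,t}\subseteq\mathcal F_{\bar\ell_t,t+1}$. I will verify these inclusions directly from the definition of the stopped field. The pieces are $\{\ell\leq l\}\supseteq\{\bar\ell_t\leq l\}$, monotonicity $\mathcal F_{l,t}\subseteq\mathcal F_{l,t+1}$, and the fact that a $(\cdot,t)$-stopping level is a $(\cdot,t+1)$-stopping level. A secondary point to check is that the pasted strategy in the $J^A$ comparison is measurable, so that it is a genuine feasible continuation; this follows from \cref{lem:stopped-levels} together with the measurable-selection remarks in Step 1.
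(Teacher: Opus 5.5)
Your proof is correct and has the same structure as the paper's: both reduce to the intermediate bound $J^P_t(\ell)\geq\mathbb{E}[V(\mu_{\bar\ell_t,t},\bar\ell_t)+\beta J^P_{t+1}(\bar\ell_t)\mid\mathcal F_{\ell,t}]$ and then apply the inductive hypothesis at the starting level $\bar\ell_t$. The difference is only in how you get that bound: you take it in one step from \cref{lem:option-value} at date $t$ with $\ell'=\bar\ell_t$, plus the ex-ante feasibility bound $J^A_t(\ell)\geq\mathbb{E}[J^A_t(\bar\ell_t)\mid\mathcal F_{\ell,t}]$, whereas the paper re-derives it by splitting on $E_t=\{\ell^*_t(\ell)=\bar\ell_t\}$ and combining \cref{lem:magnitude_compstat}, \cref{lem:option-value} at date $t+1$, and pointwise agent optimality on $E_t^c$, which is exactly what the proof of \cref{lem:option-value} reduces to when its bridging point $\ell^*_t(\ell)\vee\ell'$ equals $\bar\ell_t$.
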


\cref{lem:period-t} states that the principal's payoff under the agent's optimal play (starting from $\ell$ at period $t$; this is how $J^P_t(\ell)$ is defined) is at least as large as under direct control.

\begin{proof}[Proof of \cref{lem:period-t}]
By backward induction on $t$.

\medskip
\textbf{Base case ($t = T$):} We must show:
\[
J^P_T(\ell) \geq \mathbb{E}[V(\mu_{\bar{\ell}_T, T}, \bar{\ell}_T) \mid \mathcal{F}_{\ell,T}].
\]

Let $\ell^*_T(\ell)$ denote the agent's optimal choice starting from $\ell$, and let
\[
E_T:=\{\ell^*_T(\ell)=\bar{\ell}_T\}.
\]
On $E_T$, the agent's choice coincides with the direct-control choice. On $E_T^c$, agent optimality gives
\[
U(\mu_{\ell^*_T(\ell), T}, \ell^*_T(\ell)) \geq \mathbb{E}[U(\mu_{\bar{\ell}_T, T}, \bar{\ell}_T) \mid \mathcal{F}_{\ell^*_T(\ell), T}].
\]
Together with \cref{lem:magnitude_compstat}, this implies
\[
1_{E_T^c}\left(
V(\mu_{\ell^*_T(\ell), T}, \ell^*_T(\ell))
-\mathbb{E}[V(\mu_{\bar{\ell}_T, T}, \bar{\ell}_T) \mid \mathcal{F}_{\ell^*_T(\ell), T}]
\right)\geq 0.
\]
Hence, taking $\mathbb{E}[\cdot \mid \mathcal{F}_{\ell,T}]$ and applying the tower property,
\begin{align*}
J^P_T(\ell)
&= \mathbb{E}\!\left[1_{E_T}V(\mu_{\bar{\ell}_T, T}, \bar{\ell}_T)
+1_{E_T^c}V(\mu_{\ell^*_T(\ell), T}, \ell^*_T(\ell)) \,\Big|\, \mathcal{F}_{\ell,T}\right] \\
&\geq \mathbb{E}\!\left[1_{E_T}V(\mu_{\bar{\ell}_T, T}, \bar{\ell}_T)
+1_{E_T^c}\mathbb{E}[V(\mu_{\bar{\ell}_T, T}, \bar{\ell}_T) \mid \mathcal{F}_{\ell^*_T(\ell), T}] \,\Big|\, \mathcal{F}_{\ell,T}\right] \\
&= \mathbb{E}[V(\mu_{\bar{\ell}_T, T}, \bar{\ell}_T) \mid \mathcal{F}_{\ell,T}].
\end{align*}

\medskip
\textbf{Inductive step ($t+1 \to t$):} Assume the lemma holds for period $t+1$ i.e.,
\[
J^P_{t+1}(\ell') \geq \mathbb{E}\left[\sum_{s=t+1}^{T} \beta^{s-(t+1)} V(\mu_{\bar{\ell}_s, s}, \bar{\ell}_s) \,\Big|\, \mathcal{F}_{\ell',t+1}\right] \quad \text{for all } \ell' \in [0, \bar{\ell}_{t+1}].
\]

We will show:
\[
J^P_t(\ell) \geq \mathbb{E}\left[\sum_{s=t}^{T} \beta^{s-t} V(\mu_{\bar{\ell}_s, s}, \bar{\ell}_s) \,\Big|\, \mathcal{F}_{\ell,t}\right].
\]

Let $\ell^*_t(\ell)$ denote the agent's optimal period-$t$ choice starting from $\ell$, and let
\[
E_t:=\{\ell^*_t(\ell)=\bar{\ell}_t\}.
\]
On $E_t$, the agent's choice coincides with the direct-control choice in period $t$. On $E_t^c$, agent optimality gives:
\[
U(\mu_{\ell^*_t(\ell), t}, \ell^*_t(\ell)) + \beta J^A_{t+1}(\ell^*_t(\ell)) 
\geq \mathbb{E}[U(\mu_{\bar{\ell}_t, t}, \bar{\ell}_t) + \beta J^A_{t+1}(\bar{\ell}_t) \mid \mathcal{F}_{\ell^*_t(\ell), t}].
\]

By \cref{lem:magnitude_compstat} (applied to period-$t$ flow payoffs):
\begin{align*}
&V(\mu_{\ell^*_t(\ell), t}, \ell^*_t(\ell)) - \mathbb{E}[V(\mu_{\bar{\ell}_t, t}, \bar{\ell}_t) \mid \mathcal{F}_{\ell^*_t(\ell), t}] \\[6pt]
&\qquad \geq \frac{1}{\partial_+g(0)} \Big( U(\mu_{\ell^*_t(\ell), t}, \ell^*_t(\ell)) - \mathbb{E}[U(\mu_{\bar{\ell}_t, t}, \bar{\ell}_t) \mid \mathcal{F}_{\ell^*_t(\ell), t}] \Big).
\end{align*}

By \cref{lem:option-value} (Option Value Comparison) at period $t+1$:
\begin{align*}
&J^P_{t+1}(\ell^*_t(\ell)) - \mathbb{E}[J^P_{t+1}(\bar{\ell}_t) \mid \mathcal{F}_{\ell^*_t(\ell), t+1}] \\[6pt]
&\qquad \geq \frac{1}{\partial_+g(0)} \Big( J^A_{t+1}(\ell^*_t(\ell)) - \mathbb{E}[J^A_{t+1}(\bar{\ell}_t) \mid \mathcal{F}_{\ell^*_t(\ell), t+1}] \Big).
\end{align*}

Adding these inequalities (with weight $\beta$ on the second):
\begin{align*}
&\Big(V(\mu_{\ell^*_t(\ell), t}, \ell^*_t(\ell)) + \beta J^P_{t+1}(\ell^*_t(\ell))\Big) 
- \mathbb{E}[V(\mu_{\bar{\ell}_t, t}, \bar{\ell}_t) + \beta J^P_{t+1}(\bar{\ell}_t) \mid \mathcal{F}_{\ell^*_t(\ell), t}] \\[6pt]
&\qquad \geq \frac{1}{\partial_+g(0)} \bigg( \Big(U(\mu_{\ell^*_t(\ell), t}, \ell^*_t(\ell)) + \beta J^A_{t+1}(\ell^*_t(\ell))\Big) \\
&\qquad\qquad\qquad\qquad - \mathbb{E}[U(\mu_{\bar{\ell}_t, t}, \bar{\ell}_t) + \beta J^A_{t+1}(\bar{\ell}_t) \mid \mathcal{F}_{\ell^*_t(\ell), t}] \bigg) \\[6pt]
&\qquad \geq 0,
\end{align*}
where the last inequality follows from the agent's optimality condition and the fact that $\partial_+g(0) > 0$.

Thus:
\[
1_{E_t^c}\left(
V(\mu_{\ell^*_t(\ell), t}, \ell^*_t(\ell)) + \beta J^P_{t+1}(\ell^*_t(\ell))
- \mathbb{E}[V(\mu_{\bar{\ell}_t, t}, \bar{\ell}_t) + \beta J^P_{t+1}(\bar{\ell}_t) \mid \mathcal{F}_{\ell^*_t(\ell), t}]
\right)\geq 0.
\]

Taking $\mathbb{E}[\cdot \mid \mathcal{F}_{\ell, t}]$ and using the decomposition of $J^P_t(\ell)$:
\begin{align*}
J^P_t(\ell)
&= \mathbb{E}\!\left[1_{E_t}\!\left(V(\mu_{\bar{\ell}_t,t},\bar{\ell}_t)+\beta J^P_{t+1}(\bar{\ell}_t)\right)
+1_{E_t^c}\!\left(V(\mu_{\ell^*_t(\ell),t},\ell^*_t(\ell))+\beta J^P_{t+1}(\ell^*_t(\ell))\right) \,\Big|\, \mathcal{F}_{\ell,t}\right] \\
&\geq \mathbb{E}[V(\mu_{\bar{\ell}_t, t}, \bar{\ell}_t) + \beta J^P_{t+1}(\bar{\ell}_t) \mid \mathcal{F}_{\ell,t}].
\end{align*}

By the inductive hypothesis applied at $\bar{\ell}_t$:
\[
J^P_{t+1}(\bar{\ell}_t) \geq \mathbb{E}\left[\sum_{s=t+1}^{T} \beta^{s-(t+1)} V(\mu_{\bar{\ell}_s, s}, \bar{\ell}_s) \,\Big|\, \mathcal{F}_{\bar{\ell}_t, t+1}\right].
\]

Substituting:
\begin{align*}
J^P_t(\ell) 
&\geq \mathbb{E}\left[V(\mu_{\bar{\ell}_t, t}, \bar{\ell}_t) + \beta \cdot \mathbb{E}\left[\sum_{s=t+1}^{T} \beta^{s-(t+1)} V(\mu_{\bar{\ell}_s, s}, \bar{\ell}_s) \,\Big|\, \mathcal{F}_{\bar{\ell}_t, t+1}\right] \,\Big|\, \mathcal{F}_{\ell,t}\right] \\[6pt]
&= \mathbb{E}\left[V(\mu_{\bar{\ell}_t, t}, \bar{\ell}_t) + \sum_{s=t+1}^{T} \beta^{s-t} V(\mu_{\bar{\ell}_s, s}, \bar{\ell}_s) \,\Big|\, \mathcal{F}_{\ell,t}\right] \quad \text{(tower property)} \\[6pt]
&= \mathbb{E}\left[\sum_{s=t}^{T} \beta^{s-t} V(\mu_{\bar{\ell}_s, s}, \bar{\ell}_s) \,\Big|\, \mathcal{F}_{\ell,t}\right]
\end{align*}
which completes the inductive step. 
\end{proof}


\medskip
\underline{\textbf{Step 4: Completing the proof of Theorem 1.}}

Applying \cref{lem:period-t} at $t = 1$ from the initial level $0$:
\[
J^P_1(0) \geq \mathbb{E}\left[\sum_{t=1}^{T} \beta^{t-1} V(\mu_{\bar{\ell}_t, t}, \bar{\ell}_t)\right].
\]

The left-hand side is the principal's payoff under the agent's optimal path; the right-hand side is the value of direct control. This direct-control value remains an upper bound even when the regulator can choose any path-adapted mechanism. Indeed, consider the admissible learning process $\mathcal F=\mathcal G$. For any $\phi\in\Phi$, the induced agent path is then $\mathcal G$-adapted. Since transfers affect only the agent's incentives and do not enter the principal's payoff, the principal's payoff from that induced path is no larger than the direct-control value, which maximizes over all feasible $\mathcal G$-adapted paths. The adaptive speed limit $\bm{\phi}^*$ achieves this bound and is therefore learning-robust. Since $\bm{\phi}^*$ does not depend on $U$, it is also dually-robust.
\end{proof}

\clearpage 

\begin{proof}[Proof of \cref{thrm:speed_timecon}]
Throughout the proof, write $h=(\pmb{\ell}_{<t},\ell')$ for a date-$t$ decision history and let $\mathcal G_h:=\mathcal G_{\ell',t}$ and $\mathcal F_h:=\mathcal F_{\ell',t}$. The same convention applies to later decision histories. We also write $h\oplus \pmb q_{\leq s}$ for $(\pmb{\ell}_{<t},\ell')\oplus \pmb q_{\leq s}$. For a continuation mechanism $\phi$ after a date-$t$ history $h$, write the realized discounted continuation transfer along $\pmb q$ as
\[
\Gamma_h^\phi(\pmb q)
:=
\sum_{s=t}^T \beta^{s-t}\phi_s(h\oplus \pmb q_{\leq s}).
\]
Also write
\begin{align*}
\mathcal A_h(\pmb q)
&:=
\sum_{s=t}^T\beta^{s-t}U(\mu_{q_s,s},q_s),\\
\mathcal P_h(\pmb q)
&:=
\sum_{s=t}^T\beta^{s-t}V(\mu_{q_s,s},q_s).
\end{align*}
For a later date-$s$ decision history $h'$, $\Gamma_{h'}^\phi$, $\mathcal A_{h'}$, and $\mathcal P_{h'}$ are defined analogously, with continuation sums starting at $s$.

\underline{\textbf{\smash{Step 1: Conditional characterization.}}}

Since the technology space is finite, conditional direct-control continuation problems have largest solutions after every decision history. Let $
\bar{\pmb q}^{\,h}
=
(\bar q^{\,h}_s)_{s=t}^T$ denote the largest $\mathcal G$-adapted solution to the principal's direct-control continuation problem after $h$. If a later decision history lies on this conditional direct-control path, the continuation solution is the corresponding tail: if $h'$ is reached at date $s$ by $\bar{\pmb q}^{\,h}$, then $\bar q^{\,h'}_r=\bar q^{\,h}_r$ for every $r\geq s$.

\begin{lemma}[Ordered continuation payoff comparison]\label{lem:ordered-continuation-comparison}
Fix a decision history $h$. If $\pmb q^-,\pmb q^+$ are feasible continuations from $h$ with $\pmb q^-\leq\pmb q^+$, then
\[
\mathbb E[\mathcal P_h(\pmb q^-)-\mathcal P_h(\pmb q^+)\mid\mathcal G_h]
\geq
\frac{1}{\partial_+g(0)}
\mathbb E[\mathcal A_h(\pmb q^-)-\mathcal A_h(\pmb q^+)\mid\mathcal G_h].
\]
\end{lemma}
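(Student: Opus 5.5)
The plan is to reduce the claim to \cref{lem:magnitude_compstat} applied period by period, and then aggregate with the tower property. Write the two continuations as $\pmb q^-=(q^-_s)_{s=t}^T$ and $\pmb q^+=(q^+_s)_{s=t}^T$ with $q^-_s\leq q^+_s$ a.s. for every $s$. By linearity of the discounted sums it suffices to show, for each $s\geq t$,
\[
\mathbb E\big[V(\mu_{q^-_s,s},q^-_s)-V(\mu_{q^+_s,s},q^+_s)\,\big|\,\mathcal G_h\big]
\geq
\frac{1}{\partial_+g(0)}\,
\mathbb E\big[U(\mu_{q^-_s,s},q^-_s)-U(\mu_{q^+_s,s},q^+_s)\,\big|\,\mathcal G_h\big].
\]
Multiplying by $\beta^{s-t}$ and summing over $s$ then gives the statement.

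Fix $s$. Both $q^-_s$ and $q^+_s$ are $\mathcal F_{(\cdot,s)}$-stopping levels, since feasible continuations are agent paths, and they are ordered. \cref{lem:magnitude_compstat} at period $s$ with $\ell^-=q^-_s$ and $\ell^+=q^+_s$ gives the inequality conditional on $\mathcal F_{q^-_s,s}$:
\[
V(\mu_{q^-_s,s},q^-_s)-\mathbb E[V(\mu_{q^+_s,s},q^+_s)\mid\mathcal F_{q^-_s,s}]
\geq \frac{1}{\partial_+g(0)}\Big(U(\mu_{q^-_s,s},q^-_s)-\mathbb E[U(\mu_{q^+_s,s},q^+_s)\mid\mathcal F_{q^-_s,s}]\Big).
\]
Taking $\mathbb E[\cdot\mid\mathcal G_h]$ on both sides removes the inner conditional expectations, provided $\mathcal G_h=\mathcal G_{\ell',t}\subseteq\mathcal F_{q^-_s,s}$. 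To check this inclusion, first use $\mathcal G_{\ell',t}\subseteq\mathcal F_{\ell',t}$, which holds because $\mathcal G_{l,t}\subseteq\mathcal F_{l,t}$ levelwise, so the stopped versions are nested. Next, $\ell'\leq q_t\leq q_s$ and $t\leq s$, and the field is monotone in both coordinates. Hence an event $A$ with $A\cap\{\ell'\leq l\}\in\mathcal F_{l,t}$ satisfies
\[
A\cap\{q^-_s\leq l\}=A\cap\{\ell'\leq l\}\cap\{q^-_s\leq l\}\in\mathcal F_{l,s}.
\]
So $\mathcal F_{\ell',t}\subseteq\mathcal F_{q^-_s,s}$, and since all terms are bounded the tower property applies.

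The main obstacle is exactly this nesting of stopped sigma-fields across different periods. The stopped fields are defined one period at a time, so I need that $\{q^-_s\leq l\}$ is $\mathcal F_{l,s}$-measurable and that $\mathcal F_{l,t}\subseteq\mathcal F_{l,s}$. Both follow from the definitions, but the displayed identity uses $\{q^-_s\leq l\}\subseteq\{\ell'\leq l\}$, which holds because $\ell'\leq q^-_s$.

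If $\ell'$ is itself random, I would first argue on the atoms of $\mathcal G_h$, which is possible because the technology space is finite. On each atom the conditioning is elementary, and the comparison is linear in these conditional expectations, so the atomwise inequalities paste back together. This also avoids any measurability subtleties in the general right-continuous case.
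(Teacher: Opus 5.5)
Your proposal is correct and follows the paper's proof: you apply \cref{lem:magnitude_compstat} date by date to $q^-_s\leq q^+_s$, condition down to $\mathcal G_h$ using $\mathcal G_h\subseteq\mathcal F_{q^-_s,s}$, and sum with weights $\beta^{s-t}$, and your explicit check of that inclusion fills in a step the paper only asserts. Your closing paragraph about atoms is unnecessary, and its justification fails anyway, since a finite $\mathcal L$ does not make $\mathcal G_h$ atomic; the stopped-field inclusion argument already covers a random $\ell'$.
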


\begin{proof}[Proof of \cref{lem:ordered-continuation-comparison}]
For each date $s\geq t$, apply \cref{lem:magnitude_compstat} to the stopped levels $q^-_s\leq q^+_s$:
\begin{align*}
V(\mu_{q^-_s,s},q^-_s)
&-
\mathbb E[V(\mu_{q^+_s,s},q^+_s)\mid\mathcal F_{q^-_s,s}]
\\
&\geq 
\frac{1}{\partial_+g(0)}
\left(
U(\mu_{q^-_s,s},q^-_s)
-
\mathbb E[U(\mu_{q^+_s,s},q^+_s)\mid\mathcal F_{q^-_s,s}]
\right).
\end{align*}
Because $\mathcal G_h\subseteq\mathcal F_{q^-_s,s}$, taking conditional expectations with respect to $\mathcal G_h$, multiplying by $\beta^{s-t}$, and summing over $s$ gives the result.
\end{proof}

\begin{definition}[Conditional limit-with-option mechanisms]\label{def:conditional-lwo}
Fix a decision history $h$. A continuation mechanism after $h$ is a \emph{conditional limit-with-option mechanism} if: (i) every continuation from $h$ that first exceeds $\bar{\pmb q}^{\,h}$ has infinite penalty; and (ii) at every decision history $h'$ reached from $h$, including $h$ itself, that lies weakly below the conditional limit, the boundary continuation has weakly lowest agent-side expected continuation transfer for every admissible learning process:
\[
\mathbb E[\Gamma_{h'}^\phi(\bar{\pmb q}^{\,h'})\mid\mathcal F_{h'}]
\leq
\mathbb E[\Gamma_{h'}^\phi(\pmb q)\mid\mathcal F_{h'}]
\]
for every feasible below-limit continuation $\pmb q$ from $h'$.
\end{definition}

\begin{lemma}[Conditional robust characterization]\label{lem:tc-conditional-characterization}
Fix a decision history $h$. A continuation mechanism after $h$ is conditionally dually robust if and only if it is a conditional limit-with-option mechanism after $h$. 
\end{lemma}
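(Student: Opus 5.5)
The plan is to prove the two directions separately. The benchmark throughout is the conditional direct-control value
\[
W_h:=\mathbb E[\mathcal P_h(\bar{\pmb q}^{\,h})\mid\mathcal G_h].
\]
Conditional dual robustness means guaranteeing $W_h$ for every $(\mathcal F,U)$. The conditional analogue of Step 4 in the proof of \cref{thrm:speed_robustness} shows that $W_h$ is an upper bound: take $\mathcal F=\mathcal G$, so any induced continuation is $\mathcal G$-adapted and hence weakly worse than $\bar{\pmb q}^{\,h}$.

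\emph{Sufficiency.} Let $\phi$ be a conditional limit-with-option mechanism. Condition (i) forces every finite-transfer continuation to stay weakly below $\bar{\pmb q}^{\,h}$. I would then replay the argument of \cref{lem:period-t}, with agent payoffs replaced by payoffs net of transfers. At any history $h'$ below the limit, let $\pmb q^*$ be the agent's largest optimal continuation; it satisfies $\pmb q^*\leq\bar{\pmb q}^{\,h'}$. Optimality gives
\[
\mathbb E[\mathcal A_{h'}(\pmb q^*)-\Gamma_{h'}^\phi(\pmb q^*)\mid\mathcal F_{h'}]\;\geq\;\mathbb E[\mathcal A_{h'}(\bar{\pmb q}^{\,h'})-\Gamma_{h'}^\phi(\bar{\pmb q}^{\,h'})\mid\mathcal F_{h'}].
\]
Condition (ii) says the boundary continuation carries the lowest expected transfer, so
\[
\mathbb E[\mathcal A_{h'}(\pmb q^*)-\mathcal A_{h'}(\bar{\pmb q}^{\,h'})\mid\mathcal F_{h'}]\geq 0.
\]
\cref{lem:ordered-continuation-comparison} (whose proof works equally conditioning on $\mathcal F_{h'}$) then gives $\mathbb E[\mathcal P_{h'}(\pmb q^*)\mid\cdot]\geq \mathbb E[\mathcal P_{h'}(\bar{\pmb q}^{\,h'})\mid\cdot]$. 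Conditioning down to $\mathcal G_h$ yields $W_h$. Note this compares whole continuations at once, so the date-by-date option-value induction of \cref{lem:option-value} is not needed here; condition (ii) at $h$ alone suffices.

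\emph{Necessity of (i).} Suppose some finite-transfer continuation first exceeds $\bar{\pmb q}^{\,h}$ at a history $h'$. Because the technology space is finite, I would construct a pair $(\mathcal F,U)$ with the following properties: $\mathcal F$ reveals nothing beyond $\mathcal G$ except a private signal that fires on the relevant event; and $g$ is steep enough on positive values (convex with large slope) that the agent strictly prefers to exceed the limit there. Exceeding the largest direct-control solution is strictly worse for the principal on a $\mathcal G_h$-non-null event, provided the signal is made uninformative about $\theta$ beyond $\mathcal G$. Hence the guarantee falls strictly below $W_h$.

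\emph{Necessity of (ii).} If at some below-limit $h'$ a continuation $\pmb q$ has strictly lower expected transfer than $\bar{\pmb q}^{\,h'}$ under some $\mathcal F$, take $U$ close to linear, $g(x)=\partial_+g(0)x$. Then $\mathcal A$ is a rescaling of $\mathcal P$, and the agent strictly gains by choosing $\pmb q$, or another below-limit path. This path yields strictly lower $\mathcal P$ whenever $\bar{\pmb q}^{\,h'}$ is the unique optimizer in payoff terms. Ties need care: I would exploit finiteness and perturb $\mathcal F$ so that the principal's payoff on the deviation is strictly lower.

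The main obstacle I expect is the \emph{necessity direction}. It requires building adversarial learning processes $\mathcal F\supseteq\mathcal G$ that trigger a deviation on a non-null event without changing the principal's evaluation, and handling the case where the deviation ties with the direct-control path in principal payoff. There the transfer condition must be derived from a strict agent incentive rather than from a strict principal loss. The first thing I would try is a two-point private signal concentrated on a single grid node, chosen using finiteness of $\mathcal L$.
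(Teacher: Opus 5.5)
Your sufficiency direction is the paper's argument in the same order. The hard limit gives $\pmb q^*\leq\bar{\pmb q}^{\,h}$. Agent optimality plus the option condition at $h$ alone give the gross-payoff inequality. The ordered continuation comparison and the $\mathcal F=\mathcal G$ upper bound then finish it. Your sketch for (i) is the same risk-seeking construction the paper invokes.

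The gap is in the necessity of (ii), and two things are missing from your strict-loss step. First, with $U=cV$ the agent abandons the boundary only if $c$ times the gross payoff gap is smaller than the transfer wedge. So you must take the scale small; the paper uses $U_\eta=\eta V$ with $\eta<\delta/(M\vee 1)$, and ``close to linear'' is not the relevant requirement. Second, and decisively, you get a strict principal loss only when $\bar{\pmb q}^{\,h'}$ is the unique direct-control optimizer and the information producing the wedge is irrelevant for $\theta$. You plan to handle the tie case by perturbing $\mathcal F$, but it cannot be handled: there the ``only if'' direction is false.

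Here is a counterexample. Take $T=1$, $\mathcal L=\{0,1,2\}$, prior $\tfrac12$, trivial $\mathcal G$, $v(1,l)=l$ and $(v(0,0),v(0,1),v(0,2))=(0,-\tfrac12,-\tfrac32)$. Then $V(\tfrac12,1)=V(\tfrac12,2)=\tfrac14$, so $\bar q^{\,h}=2$ and the benchmark is $\tfrac14$. Let $\phi(0)=\phi(2)=0$ and $\phi(1)=-\epsilon$; this violates (ii) at the initial history.

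Now fix any $(\mathcal F,U)$ and write $D(\mu):=V(\mu,2)-V(\mu,1)=2\mu-1$. The principal's payoff minus $\tfrac14$ equals $\mathbb E[-V(\mu_0,1)\mathbf 1\{\ell^*=0\}]+\mathbb E[D(\mu_1)\mathbf 1\{\ell^*=2\}]$, where $\mu_l$ is the agent's belief at level $l$.
\begin{itemize}
\item The agent stops at $0$ only if $U(\mu_0,1)<-\epsilon$. Since $g(x)\geq\partial_+g(0)x$, this forces $V(\mu_0,1)<0$, i.e.\ $\mu_0<\tfrac13$.
\item She moves from $1$ to $2$ iff $\mu_1\geq\mu^*$ for some threshold $\mu^*$, because $U(\mu,2)-U(\mu,1)$ is increasing in $\mu$.
\item If $\mu^*\geq\tfrac12$, both terms are nonnegative.
\item If $\mu^*<\tfrac12$, drop the region $\mu_1<\mu^*$ (where $D<0$), then use the tower property and $\mathbb E[D(\mu_0)]=D(\tfrac12)=0$. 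The second term is then at least $\mathbb E[D(\mu_0)\mathbf 1\{\ell^*>0\}]=-\mathbb E[D(\mu_0)\mathbf 1\{\ell^*=0\}]\geq 0$.
\end{itemize}
So $\phi$ guarantees the benchmark and is conditionally dually robust, even at the later history where the agent sits at level $1$, yet it fails (ii).

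The same example shows that the paper's rare-good-news step does not deliver strictness either. Take $U_\eta=\eta V$ with $\eta<\epsilon$, and reveal good news with probability $\pi$ in state $1$. The wedge is deterministic, so it cannot be confined to the good-news event, and the agent stops at $1$ on both events. The loss $\pi/2$ on the good-news event is exactly offset by the gain $\Pr(\text{no news})(1-2\mu')=\pi/2$ off it, where $\mu'=\frac{1-\pi}{2-\pi}$.

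A true necessity statement needs an extra hypothesis, such as uniqueness of the conditional direct-control solution after every history. Under that hypothesis your argument, with $\mathcal F=\mathcal G$ and a small linear scale, works whenever the wedge is already present under $\mathcal G$. Wedges that appear only under a $\theta$-informative $\mathcal F$ still need a separate construction, because the agent's informed deviation can raise the principal's payoff above the benchmark.
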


\begin{proof}[Proof of \cref{lem:tc-conditional-characterization}]
We first prove sufficiency. Fix $(\mathcal F,U)$, and let $\pmb q^*$ be the largest optimal continuation induced by a conditional limit-with-option mechanism $\phi$ after $h$. The hard limit gives $\pmb q^*\leq \bar{\pmb q}^{\,h}$. Since moving to $\bar{\pmb q}^{\,h}$ is feasible from $h$, agent optimality gives the corresponding net-payoff inequality at the agent's information:
\[
\mathbb E[\mathcal A_h(\pmb q^*)-\Gamma_h^\phi(\pmb q^*)\mid\mathcal F_h]
\geq
\mathbb E[\mathcal A_h(\bar{\pmb q}^{\,h})-\Gamma_h^\phi(\bar{\pmb q}^{\,h})\mid\mathcal F_h].
\]
The option condition gives
\[
\mathbb E[\Gamma_h^\phi(\bar{\pmb q}^{\,h})\mid\mathcal F_h]
\leq
\mathbb E[\Gamma_h^\phi(\pmb q^*)\mid\mathcal F_h],
\]
and hence
\[
\mathbb E[\mathcal A_h(\pmb q^*)\mid\mathcal F_h]
\geq
\mathbb E[\mathcal A_h(\bar{\pmb q}^{\,h})\mid\mathcal F_h].
\]
Taking conditional expectations down to $\mathcal G_h$ yields
\[
\mathbb E[\mathcal A_h(\pmb q^*)\mid\mathcal G_h]
\geq
\mathbb E[\mathcal A_h(\bar{\pmb q}^{\,h})\mid\mathcal G_h].
\]
Since $\pmb q^*\leq\bar{\pmb q}^{\,h}$, \cref{lem:ordered-continuation-comparison} then implies
\[
\mathbb E[\mathcal P_h(\pmb q^*)-\mathcal P_h(\bar{\pmb q}^{\,h})\mid\mathcal G_h]
\geq
\frac{1}{\partial_+g(0)}
\mathbb E[\mathcal A_h(\pmb q^*)-\mathcal A_h(\bar{\pmb q}^{\,h})\mid\mathcal G_h]
\geq 0.
\]
Thus every admissible $(\mathcal F,U)$ yields at least the conditional direct-control payoff. Under the admissible learning process $\mathcal F=\mathcal G$, no continuation mechanism can exceed the conditional direct-control payoff because the induced continuation is $\mathcal G$-adapted and transfers do not enter the principal's payoff. Therefore $\phi$ is conditionally dually robust.

We now prove necessity. If either condition fails, let $h'$ be the first history at which it fails. Conditional on $\mathcal G_{h'}$, the continuation problem from $h'$ has the same order, pasting, and payoff-comparison structure as the one-dimensional stopped-level problem in \citet{koh2024robust}.

First, suppose the hard limit fails: with positive probability, some feasible continuation first exceeds $\bar{\pmb q}^{\,h}$ while paying finite transfer. Let $h'$ be the first-crossing decision history. Conditional on $\mathcal G_{h'}$, the continuation problem is the one-dimensional stopped-level problem considered in \citet{koh2024robust}, with the Snell continuation value replacing the one-period payoff. The construction in \citet{koh2024robust} chooses an admissible learning process and a sufficiently risk-seeking preference so that the agent strictly prefers the finite-transfer continuation beyond the conditional direct-control boundary. Because $\bar{\pmb q}^{\,h'}$ is the largest conditional direct-control solution, this lowers the principal's conditional payoff below the direct-control guarantee on a positive-probability event, contradicting conditional dual robustness.

Second, suppose the option condition fails at some below-limit decision history $h'$: for some admissible learning process and some feasible below-limit continuation $\pmb q$,
\[
\mathbb E[\Gamma_{h'}^\phi(\pmb q)\mid\mathcal F_{h'}]
<
\mathbb E[\Gamma_{h'}^\phi(\bar{\pmb q}^{\,h'})\mid\mathcal F_{h'}]
\]
on a positive-probability event in $\mathcal F_{h'}$. Conditional on $\mathcal G_{h'}$, apply the rare-good-news construction from \citet{koh2024robust} to the continuation problem: after $h'$, the agent learns a small-probability event on which the state is good and the agent-side expected-transfer wedge favoring $\pmb q$ over the boundary continuation is revealed. On a positive-probability subevent, let the wedge be bounded below by $\delta>0$. Since the continuation problem is finite and payoffs are bounded, there is $M<\infty$ such that the unscaled payoff difference between any two feasible continuations is at most $M$. 

Choose the linear preference $U_\eta(\mu,l)=\eta V(\mu,l)$ with $0<\eta<\delta/(M\vee 1)$. Then the gross payoff gain from moving from $\pmb q$ to $\bar{\pmb q}^{\,h'}$ is smaller than $\delta$, so the agent-side expected-transfer wedge makes the agent choose the lower continuation $\pmb q$ rather than $\bar{\pmb q}^{\,h'}$. The aligned-sensitivity comparison then makes the principal strictly worse than under the boundary continuation on that event, again contradicting the conditional direct-control guarantee. Hence both the hard limit and the option condition are necessary.
\end{proof}

\underline{\textbf{\smash{Step 2: Local monotone selection under flattening.}}}
We begin with some notation. For a decision history $h$, let
\[
\mathcal Q(h)
:=
\left\{
\pmb q:
\pmb q \text{ is feasible from } h
\text{ and }
\pmb q\leq \bar{\pmb q}^{\,h}
\right\}
\]
be the below-limit continuation set.

Whenever a below-limit decision history $h$ is fixed in the local argument, use the following notation. Write $h=(\pmb{\ell}_{<t},l)$. For each current-period choice $s\in\mathcal L$ with
\[
l\leq s\leq \bar q_t^{\,h},
\]
let $\operatorname{loc}_h(s)$ denote the local alternative generated by choosing $q_t=s$ at date $t$. If $t<T$, this is the date-$(t+1)$ decision history with realized past $(\pmb{\ell}_{<t},s)$ and current level $s$. If $t=T$, $\operatorname{loc}_h(s)$ is the terminal history generated by the time-$T$ choice $q_T=s$. Let
\[
\mathcal S(h)
:=
\{\operatorname{loc}_h(s):s\in\mathcal L,\ l\leq s\leq \bar q_t^{\,h}\}.
\]
This is a finite chain, ordered by the current-period choice $s$. Waiting is the special case $s=l$, and scaling corresponds to $s>l$. For $\operatorname{loc}_h(s)\in\mathcal S(h)$, let
\[
\mathcal Q(h;\operatorname{loc}_h(s))
:=
\{\pmb q\in\mathcal Q(h):q_t=s\}.
\]
Thus the sets $\mathcal Q(h;\operatorname{loc}_h(s))$ partition $\mathcal Q(h)$.

For this fixed $h$, say that expected transfers are \emph{locally decreasing} if, for every admissible learning process, every $h^-,h^+\in\mathcal S(h)$ with $h^-\leq h^+$, every $\pmb q^-\in\mathcal Q(h;h^-)$, and every $\pmb q^+\in\mathcal Q(h;h^+)$,
\[
\mathbb E[\Gamma_h^\phi(\pmb q^+)\mid\mathcal F_h]
\leq
\mathbb E[\Gamma_h^\phi(\pmb q^-)\mid\mathcal F_h].
\tag{LD}\label{eq:tc-local-decreasing}
\]

\begin{lemma}[Selection on the local chain]\label{lem:tc-local-selection}
Fix a below-limit decision history $h$ and suppose \eqref{eq:tc-local-decreasing} holds. Let $\phi'$ be a continuation mechanism with the same conditional hard limit as $\phi$ such that, for every admissible learning process, expected continuation transfers under $\phi'$ are flat across the local alternatives:
\[
\mathbb E[\Gamma_h^{\phi'}(\pmb q^+)\mid\mathcal F_h]
=
\mathbb E[\Gamma_h^{\phi'}(\pmb q^-)\mid\mathcal F_h]
\tag{LF}\label{eq:tc-local-flat}
\]
for all $h^+,h^-\in\mathcal S(h)$, all
$\pmb q^+\in\mathcal Q(h;h^+)$, and all
$\pmb q^-\in\mathcal Q(h;h^-)$.
Fix $(\mathcal F,U)$. Let $\pmb q^\phi$ and $\pmb q^{\phi'}$ be the selected largest optimal continuations under $\phi$ and $\phi'$, and let $h_\phi^+$ and $h_{\phi'}^+$ be the local alternatives in $\mathcal S(h)$ induced by their current-period choices. Then $h_{\phi'}^+\leq h_\phi^+$ and $\mathbb E[\mathcal A_h(\pmb q^{\phi'})\mid\mathcal F_h]
\geq
\mathbb E[\mathcal A_h(\pmb q^\phi)\mid\mathcal F_h]$. 
\end{lemma}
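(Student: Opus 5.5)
The plan is a revealed-preference argument on the agent's net continuation payoff $\mathcal A_h-\Gamma_h$, using \eqref{eq:tc-local-flat} for the $\phi'$ side and \eqref{eq:tc-local-decreasing} for the $\phi$ side. Throughout, fix $(\mathcal F,U)$ and the below-limit history $h$. Since $\phi$ and $\phi'$ share the conditional hard limit, every finite-transfer continuation under either mechanism lies in $\mathcal Q(h)$. In particular $\pmb q^\phi,\pmb q^{\phi'}\in\mathcal Q(h)$, and each is feasible with finite transfer under the other mechanism.

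\emph{Step 1 (payoff claim).} Under $\phi'$, \eqref{eq:tc-local-flat} says that $\mathbb E[\Gamma_h^{\phi'}(\pmb q)\mid\mathcal F_h]$ is the same for every $\pmb q\in\mathcal Q(h)$, because the sets $\mathcal Q(h;\operatorname{loc}_h(s))$ partition $\mathcal Q(h)$. Hence the agent's problem under $\phi'$ reduces to maximizing $\mathbb E[\mathcal A_h(\pmb q)\mid\mathcal F_h]$ over $\mathcal Q(h)$. Since $\pmb q^\phi\in\mathcal Q(h)$, optimality of $\pmb q^{\phi'}$ immediately gives
\[
\mathbb E[\mathcal A_h(\pmb q^{\phi'})\mid\mathcal F_h]\geq\mathbb E[\mathcal A_h(\pmb q^\phi)\mid\mathcal F_h],
\]
which is the second assertion.

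\emph{Step 2 (ordering claim).} Let $A:=\{q^{\phi'}_t>q^\phi_t\}$, the event on which $h^+_{\phi'}>h^+_\phi$, and suppose toward a contradiction that $A$ has positive probability. By \cref{lem:stopped-levels} the comparison event $A$ can be used for eventwise pasting. Define the pasted continuations
\[
\pmb d:=\pmb q^{\phi'}1_A+\pmb q^{\phi}1_{A^c},\qquad \pmb c:=\pmb q^{\phi}1_A+\pmb q^{\phi'}1_{A^c}.
\]
Both lie in $\mathcal Q(h)$, and $\pmb d$ has weakly larger current-period choice than $\pmb q^\phi$, strictly larger on $A$. The argument then mirrors \cref{lemma:no-crossing}:
\begin{itemize}
\item Optimality of $\pmb q^{\phi'}$ against $\pmb c$ under $\phi'$ (where transfers are flat) yields $\mathbb E[1_A(\mathcal A_h(\pmb q^{\phi'})-\mathcal A_h(\pmb q^\phi))\mid\mathcal F_h]\geq0$.
\item On $A$, $\pmb q^{\phi'}$ sits in a weakly higher local alternative than $\pmb q^\phi$. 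Because $\mathcal S(h)$ is a finite chain, I can partition $A$ into finitely many cells on which the pair of local alternatives is fixed, and apply \eqref{eq:tc-local-decreasing} cell by cell to get $\mathbb E[1_A(\Gamma_h^\phi(\pmb q^{\phi'})-\Gamma_h^\phi(\pmb q^\phi))\mid\mathcal F_h]\leq0$.
\item Summing these two displays, $\pmb d$ does weakly better than $\pmb q^\phi$ for the agent under $\phi$. So $\pmb d$ is also optimal under $\phi$ and strictly larger on $A$, contradicting that $\pmb q^\phi$ is the largest optimal continuation.
\end{itemize}
Hence $A$ is null and $h^+_{\phi'}\leq h^+_\phi$.

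\emph{Main obstacle.} The cell-by-cell use of \eqref{eq:tc-local-decreasing} is where I expect the difficulty. That condition is stated for continuations whose local alternatives are fixed, and conditionally on $\mathcal F_h$, whereas I need it on the random event $A$ and for random alternatives. My first attempt is to exploit finiteness: each cell $\{q^\phi_t=s^-,\ q^{\phi'}_t=s^+\}$ is an $\mathcal F_{\cdot,t}$-stopped event. I would paste $\pmb q^{\phi'}$ and $\pmb q^\phi$ with fixed deterministic members of $\mathcal Q(h;\operatorname{loc}_h(s^\pm))$ off that cell, so as to obtain continuations to which \eqref{eq:tc-local-decreasing} applies literally. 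The off-cell contributions then cancel because they are identical on both sides, and summing over the finitely many cells gives the inequality on $A$. I also need to check that the ``largest'' selection is compatible with the ordering by current-period choice, so that $\pmb d$ genuinely contradicts maximality of $\pmb q^\phi$. I would handle this by taking the largest selection lexicographically, with the date-$t$ choice first.
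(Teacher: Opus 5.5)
Your Step 1 and the revealed-preference skeleton of Step 2 are exactly the paper's argument. The paper, however, runs it under the convention your Step 1 already invokes: each continuation lies in exactly one class $\mathcal Q(h;\operatorname{loc}_h(s))$. So $h^+_\phi$ and $h^+_{\phi'}$ are single elements of $\mathcal S(h)$, and the negated conclusion is the global statement $h^+_{\phi'}>h^+_\phi$. Then \eqref{eq:tc-local-decreasing} applies to $\pmb q^{\phi'}$ and $\pmb q^\phi$ verbatim, $\pmb q^{\phi'}$ itself is the higher $\phi$-optimal continuation, and no pasting is needed.

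\emph{The gap is in your event-wise refinement.} Once date-$t$ choices are random within date $t$, a cell $C=\{q^\phi_t=s^-,\,q^{\phi'}_t=s^+\}$ is not $\mathcal F_h$-measurable, and two things break.

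First, the partition claim in Step 1 fails, since $\pmb q^\phi$, $\pmb q^{\phi'}$ and $\pmb c$ then lie in no single class. Moreover, \eqref{eq:tc-local-flat} conditional on $\mathcal F_h$ does not by itself equalize their expected transfers: a continuation that picks $s$ using what it learns between $l$ and $s$ at date $t$ can correlate its choice with the realized transfer.

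Second, your cell-by-cell fix does not work. Off $C$ the two pasted paths use a class-$s^+$ filler $\pmb r^+$ and a class-$s^-$ filler $\pmb r^-$. These are different paths, so \eqref{eq:tc-local-decreasing} only yields
\[
\mathbb E[1_C(\Gamma^\phi_h(\pmb q^{\phi'})-\Gamma^\phi_h(\pmb q^\phi))\mid\mathcal F_h]
\leq
\mathbb E[1_{C^c}(\Gamma^\phi_h(\pmb r^-)-\Gamma^\phi_h(\pmb r^+))\mid\mathcal F_h].
\]
The right side does not vanish; the direction of \eqref{eq:tc-local-decreasing} pushes it to be nonnegative, so nothing follows about the cell. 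In addition, the class-$s^-$ pasted path must know whether it is on $C$ after stopping at $s^-$. But $\{q^{\phi'}_t=s^+\}$ is only $\mathcal F_{s^+,t}$-measurable and, without commutation, need not lie in $\mathcal F_{s^-,t+1}$, so that path need not be adapted.

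There are two clean repairs.
\begin{itemize}
\item Commit to the paper's convention. Then the cells are $\mathcal F_h$-measurable, and multiplying \eqref{eq:tc-local-decreasing} by $1_C$ removes the off-cell terms: they drop out rather than cancel.
\item Use the fact that \eqref{eq:tc-local-decreasing} and \eqref{eq:tc-local-flat} are hypothesized for \emph{every} admissible learning process. Take the fully informed filtration, constant and equal to the full $\sigma$-field; it contains $\mathcal G$ and is admissible. Under it, both conditions become almost-sure comparisons of realized transfers. These localize to any cell, and they also supply the flatness over random-class continuations that Step 1 needs.
\end{itemize}

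Finally, the lexicographic re-selection is unnecessary and would alter the statement. Since $\pmb q^\phi$ is the largest $\phi$-optimal continuation, every $\phi$-optimal $\pmb d$ satisfies $d_t\leq q^\phi_t$. A $\phi$-optimal $\pmb d$ with $d_t>q^\phi_t$ on a non-null $A$ is therefore already the contradiction.
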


\begin{proof}[Proof of \cref{lem:tc-local-selection}]
Since $\phi'$ has flat expected transfers on $\mathcal Q(h)$, optimality of $\pmb q^{\phi'}$ under $\phi'$ implies the gross-payoff inequality
\[
\mathbb E[\mathcal A_h(\pmb q^{\phi'})\mid\mathcal F_h]
\geq
\mathbb E[\mathcal A_h(\pmb q^\phi)\mid\mathcal F_h].
\]
Suppose toward a contradiction that $h_{\phi'}^+>h_\phi^+$. Local decreasing under $\phi$ gives
\[
\mathbb E[\Gamma_h^\phi(\pmb q^{\phi'})\mid\mathcal F_h]
\leq
\mathbb E[\Gamma_h^\phi(\pmb q^\phi)\mid\mathcal F_h].
\]
Combining the two displays,
\[
\mathbb E[\mathcal A_h(\pmb q^{\phi'})-\Gamma_h^\phi(\pmb q^{\phi'})\mid\mathcal F_h]
\geq
\mathbb E[\mathcal A_h(\pmb q^\phi)-\Gamma_h^\phi(\pmb q^\phi)\mid\mathcal F_h].
\]
Hence $\pmb q^{\phi'}$ is also optimal under $\phi$ and reaches a strictly higher local alternative than the selected largest $\phi$-optimal continuation. This contradicts the selection of $\pmb q^\phi$. Therefore $h_{\phi'}^+\leq h_\phi^+$, and the gross-payoff inequality was already shown.
\end{proof}

\begin{lemma}[Local option-value comparison]\label{lem:tc-local-option}
Fix a below-limit decision history $h=(\pmb{\ell}_{<t},l)$ and two local alternatives $h^-,h^+\in\mathcal S(h)$ with $h^-\leq h^+$. Suppose that for every below-limit decision history $h'\succ h$, 
\[
\mathbb E[\Gamma_{h'}^\phi(\pmb r)\mid\mathcal F_{h'}]
=
\mathbb E[\Gamma_{h'}^\phi(\bar{\pmb q}^{\,h'})\mid\mathcal F_{h'}]
\qquad
\forall \pmb r\in\mathcal Q(h').
\]
Let $\pmb q^-\in\mathcal Q(h;h^-)$ and $\pmb q^+\in\mathcal Q(h;h^+)$ be continuation technology paths whose tails after the local choices maximize the agent's payoff gross of transfers.\footnote{That is, conditional on $h^\pm$, the continuation tail of $\pmb q^\pm$ is a largest maximizer of $\mathcal A_{h^\pm}$ over $\mathcal Q(h^\pm)$; if $t=T$, this condition is vacuous.} Then
\[
\mathbb E[\mathcal P_h(\pmb q^-)-\mathcal P_h(\pmb q^+)\mid\mathcal G_h]
\geq
\frac{1}{\partial_+g(0)}
\mathbb E[\mathcal A_h(\pmb q^-)-\mathcal A_h(\pmb q^+)\mid\mathcal G_h].
\]
\end{lemma}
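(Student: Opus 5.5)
The plan is to reduce the statement to a one-step version of the Option Value Comparison (\cref{lem:option-value}): a current-period flow comparison plus a continuation comparison from date $t+1$. Write $h^\pm=\operatorname{loc}_h(s^\pm)$ with $s^-\leq s^+$. These are stopped levels at date $t$ in $[l,\bar q^{\,h}_t]$, and $\pmb q^\pm$ has $q^\pm_t=s^\pm$. Then
\[
\mathcal P_h(\pmb q^\pm)=V(\mu_{s^\pm,t},s^\pm)+\beta\,\mathcal P_{h^\pm}(\text{tail of }\pmb q^\pm),
\]
and similarly for $\mathcal A_h$.

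\emph{Current-period terms.} Apply \cref{lem:magnitude_compstat} at date $t$ to $s^-\leq s^+$. This compares $V(\mu_{s^-,t},s^-)-\mathbb E[V(\mu_{s^+,t},s^+)\mid\mathcal F_{s^-,t}]$ with the corresponding $U$-difference scaled by $1/\partial_+g(0)$. Since $\mathcal G_h\subseteq\mathcal F_{s^-,t}$, we can then condition down to $\mathcal G_h$. If $t=T$ this is the whole argument, because the tails are vacuous.

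\emph{Continuation terms.} The hypothesis says expected transfers are flat on $\mathcal Q(h')$ at every below-limit history $h'\succ h$. So from $h^\pm$ onward, the agent's largest optimal continuation under $\phi$ is exactly the largest maximizer of $\mathcal A_{h^\pm}$ over $\mathcal Q(h^\pm)$. Equivalently, it is the agent's optimal path facing a pure hard cap given by the conditional direct-control limits $\bar{\pmb q}^{\,h'}$. Hence the tails of $\pmb q^\pm$ are the agent-optimal continuations under an adaptive speed limit started from $s^-$ and $s^+$ at date $t+1$. The continuation values are therefore $J^A_{t+1}(s^\pm)$ and $J^P_{t+1}(s^\pm)$, in the sense of \cref{def:continuation-T}, for this cap. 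Then \cref{lem:option-value} at $t+1$ yields
\[
J^P_{t+1}(s^-)-\mathbb E[J^P_{t+1}(s^+)\mid\mathcal F_{s^-,t+1}]\geq \tfrac{1}{\partial_+g(0)}\bigl(J^A_{t+1}(s^-)-\mathbb E[J^A_{t+1}(s^+)\mid\mathcal F_{s^-,t+1}]\bigr).
\]
Weighting by $\beta$, adding the current-period inequality, and applying the tower property down to $\mathcal G_h$ gives the claim.

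\emph{Main obstacle.} \cref{lem:option-value} was proved for a single fixed cap path $\bar{\pmb\ell}$. Here the cap after a history is $\bar{\pmb q}^{\,h'}$, which could in principle depend on the realized past rather than only on the current (level, time) pair. I would handle this in two steps.
\begin{itemize}
\item[(a)] Show that the principal's conditional continuation problem after $h'$ depends only on the current level and date together with $\mathcal G$. Past levels do not enter $V$, and feasibility depends only on the current level. The limits therefore form a Markovian family $\bar q_r(s)$.
\item[(b)] Show that this family is monotone in the starting level: the largest solution from $s^+$ lies weakly above that from $s^-$. I would prove this by the lattice/pasting argument of \cref{lemma:no-crossing}, applied to the principal's problem with \cref{lem:stopped-levels}.
\end{itemize}
With (a) and (b), I would rerun the backward induction of \cref{lem:option-value}. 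The bridging point is again $\tilde\ell=\ell^*(s^-)\vee s^+$. It remains feasible from $s^+$ because the cap from $s^+$ dominates the cap from $s^-$, and \cref{lemma:no-crossing} still goes through because the constraint set from the higher start is a superset above $s^+$. Verifying this adapted no-crossing step is where I expect the real work to lie.
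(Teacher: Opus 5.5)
Your core argument is the paper's proof. The case $t=T$ and the date-$t$ flow term come from \cref{lem:magnitude_compstat} conditioned down to $\mathcal G_h$. The tail term comes from \cref{lem:option-value} at date $t+1$ with starting levels $s^-\le s^+$. The two are then added with weight $\beta$.

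Where you go beyond the paper is the obstacle you flag, and the flag is legitimate. The paper invokes \cref{lem:option-value} ``holding the past history fixed'' and never addresses the following mismatch:
\begin{itemize}
\item by the footnote, the tails are maximizers over $\mathcal Q(h^-)$ and $\mathcal Q(h^+)$, capped by $\bar{\pmb q}^{\,h^-}$ and $\bar{\pmb q}^{\,h^+}$;
\item these caps generally differ, because the direct-control solution depends on the starting level;
\item \cref{lem:option-value} is proved only for a single common cap.
\end{itemize}
So your generalization adds rigor the paper lacks.

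One caution about your fix. Rerunning \cref{lemma:no-crossing} with nested date-$(t+1)$ intervals suffices only if the continuation value after a date-$(t+1)$ choice $m$ depends on $m$ alone, i.e.\ caps are re-imposed at every later history. Under the footnote's literal reading, the fixed cap paths $\bar{\pmb q}^{\,h^\pm}$ govern the whole tail. Then the value of continuing from $m$ depends on which cap path applies, so the two problems have different objectives. The pasting argument would then also need increasing differences in (level, cap), not just a superset constraint set. Your ``Equivalently'' passes between these two readings without argument.

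A lattice argument settles both readings and shortens the proof.
\begin{itemize}
\item Flows are date-separable, so $\mathcal A(\pmb x\vee\pmb y)+\mathcal A(\pmb x\wedge\pmb y)=\mathcal A(\pmb x)+\mathcal A(\pmb y)$ pointwise for adapted paths.
\item For $\pmb x\in\mathcal Q(h^-)$ and $\pmb y\in\mathcal Q(h^+)$, one has $\pmb x\wedge\pmb y\in\mathcal Q(h^-)$, and $\pmb x\vee\pmb y\in\mathcal Q(h^+)$ by your monotonicity step (b).
\item Take $\pmb x,\pmb y$ to be the largest gross maximizers. Optimality of $\pmb y$ gives $\mathbb E[\mathcal A(\pmb x\wedge\pmb y)-\mathcal A(\pmb x)\mid\mathcal F_{h^+}]\geq 0$.
\item Conditioning down to $\mathcal F_{h^-}\subseteq\mathcal F_{h^+}$ and using optimality of $\pmb x$ forces equality. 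Hence $\pmb x\vee\pmb y$ is also optimal at $h^+$, and so $\pmb y\geq\pmb x$.
\end{itemize}
Thus $\pmb q^-\leq\pmb q^+$, and \cref{lem:ordered-continuation-comparison} gives the statement directly, with no backward induction.
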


\begin{proof}[Proof of \cref{lem:tc-local-option}]
Write $h^-=\operatorname{loc}_h(s^-)$ and $h^+=\operatorname{loc}_h(s^+)$, with $s^-\leq s^+$. The current-period choices are therefore $q_t^-=s^-$ and $q_t^+=s^+$. If $t=T$, there is no continuation tail, and the claim is exactly \cref{lem:magnitude_compstat}, conditioned down to $\mathcal G_h$.

Suppose $t<T$, and write $\pmb q^\pm_{>t}$ for the continuation tails of $\pmb q^\pm$ after the local choices. For any strict successor history $h'\succ h$, define the gross-agent continuation value
\[
J_A(h')
:=
\operatorname*{ess\,sup}_{\pmb r\in\mathcal Q(h')}
\mathbb E[\mathcal A_{h'}(\pmb r)\mid\mathcal F_{h'}].
\]
Let $\pmb r^*(h')$ be the largest maximizer of this problem, and define the associated principal continuation value
\[
J_P(h')
:=
\mathbb E[\mathcal P_{h'}(\pmb r^*(h'))\mid\mathcal F_{h'}].
\]
The largest maximizer exists by the same compactness and stopped-level pasting argument used above. The displayed successor-flatness assumption gives, for each such $h'$, an $\mathcal F_{h'}$-measurable constant
\[
C(h')
:=
\mathbb E[\Gamma_{h'}^\phi(\bar{\pmb q}^{\,h'})\mid\mathcal F_{h'}]
\]
such that $\mathbb E[\Gamma_{h'}^\phi(\pmb r)\mid\mathcal F_{h'}]=C(h')$ for every $\pmb r\in\mathcal Q(h')$. Thus, after a strict successor history, maximizing the agent's expected payoff net of transfers is the same as maximizing the agent's expected payoff gross of transfers. Since $\pmb q^-$ and $\pmb q^+$ maximize the agent's gross payoff,
\begin{align}
\mathbb E[\mathcal A_{h^\pm}(\pmb q^\pm_{>t})\mid\mathcal F_{h^\pm}]
&=J_A(h^\pm),
\label{eq:local-gross-tail-A}\\
\mathbb E[\mathcal P_{h^\pm}(\pmb q^\pm_{>t})\mid\mathcal F_{h^\pm}]
&=J_P(h^\pm).
\label{eq:local-gross-tail-P}
\end{align}

We next compare the current-period flow payoffs. Applying \cref{lem:magnitude_compstat} to the date-$t$ stopped levels $s^-\leq s^+$ and then conditioning on $\mathcal G_h$ gives
\begin{equation}\label{eq:local-flow-comparison}
\mathbb E[
V(\mu_{s^-,t},s^-)-V(\mu_{s^+,t},s^+)
\mid\mathcal G_h]
\geq
\frac{1}{\partial_+g(0)}
\mathbb E[
U(\mu_{s^-,t},s^-)-U(\mu_{s^+,t},s^+)
\mid\mathcal G_h].
\end{equation}
For the tails, apply the option-value comparison in \cref{lem:option-value} from date $t+1$, holding the past history $\pmb{\ell}_{<t}$ fixed and using $s^-\leq s^+$ as the two starting levels. This gives 
\[
J_P(h^-)-\mathbb E[J_P(h^+)\mid\mathcal F_{h^-}]
\geq
\frac{1}{\partial_+g(0)}
\left(
J_A(h^-)-\mathbb E[J_A(h^+)\mid\mathcal F_{h^-}]
\right).
\]
Conditioning on $\mathcal G_h$ yields
\begin{equation}\label{eq:local-tail-comparison}
\mathbb E[J_P(h^-)-J_P(h^+)\mid\mathcal G_h]
\geq
\frac{1}{\partial_+g(0)}
\mathbb E[J_A(h^-)-J_A(h^+)\mid\mathcal G_h].
\end{equation}

Finally, decompose the continuation payoffs into the current-period flow and the tail:
\begin{align*}
\mathcal A_h(\pmb q^\pm)
&=
U(\mu_{s^\pm,t},s^\pm)
+\beta\,\mathcal A_{h^\pm}(\pmb q^\pm_{>t}),\\
\mathcal P_h(\pmb q^\pm)
&=
V(\mu_{s^\pm,t},s^\pm)
+\beta\,\mathcal P_{h^\pm}(\pmb q^\pm_{>t}).
\end{align*}
Using \eqref{eq:local-gross-tail-A}, \eqref{eq:local-gross-tail-P}, and the tower property, the conditional differences in $\mathcal A_h$ and $\mathcal P_h$ are the corresponding flow differences plus $\beta$ times the differences in $J_A$ and $J_P$. Adding \eqref{eq:local-flow-comparison} and $\beta$ times \eqref{eq:local-tail-comparison} gives the desired inequality.
\end{proof}

\underline{\textbf{\smash{Step 3: Dominance of local downward transfers.}}}

\begin{lemma}[Dominance of downward continuation transfers]\label{lem:tc-downward-dominance}
Fix a below-limit decision history $h=(\pmb{\ell}_{<t},l)$. Let $\phi$ be a continuation mechanism with the conditional hard limit $\bar{\pmb q}^{\,h}$ and finite realized continuation transfers on $\mathcal Q(h)$. Suppose that for every below-limit decision history $h'\succ h$, where $h'$ has current date-level pair $(l',t')$ and $h'\succ h$ means $t'>t$ or $(t'=t$ and $l'>l)$,
\[
\mathbb E[\Gamma_{h'}^\phi(\pmb r)\mid\mathcal F_{h'}]
=
\mathbb E[\Gamma_{h'}^\phi(\bar{\pmb q}^{\,h'})\mid\mathcal F_{h'}]
\qquad
\forall \pmb r\in\mathcal Q(h').
\]
Suppose also that expected transfers are locally decreasing at $h$ in the sense of \eqref{eq:tc-local-decreasing}. Let $\phi'$ be a local flattening after $h$: it preserves the same hard limit and satisfies \eqref{eq:tc-local-flat}. Then $\phi'$ weakly dominates $\phi$ after $h$. If the local decreasing inequality is strict for some pair of local alternatives on a positive-probability event, then the dominance is strict for some admissible $(\mathcal F,U)$.
\end{lemma}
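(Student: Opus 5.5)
Fix an admissible pair $(\mathcal F,U)$ and compare the two largest optimal continuations $\pmb q^\phi$ and $\pmb q^{\phi'}$ after $h$. The goal is to show
\[
\mathbb E[\mathcal P_h(\pmb q^{\phi'})-\mathcal P_h(\pmb q^\phi)\mid\mathcal G_h]\geq 0 .
\]

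First I reduce both continuations to the local chain $\mathcal S(h)$. Both mechanisms share the hard limit, so $\pmb q^\phi,\pmb q^{\phi'}\in\mathcal Q(h)$. By the successor-flatness hypothesis, after any strict successor $h'\succ h$ expected continuation transfers are constant on $\mathcal Q(h')$ under $\phi$. I take the flattening $\phi'$ to agree with $\phi$ after strict successors, or at least to preserve this flatness. Then, once the date-$t$ choice $s$ is made, the agent's tail under either mechanism maximizes the gross payoff $\mathcal A$ over $\mathcal Q(\operatorname{loc}_h(s))$. Taking largest selections, the tails of $\pmb q^\phi$ and $\pmb q^{\phi'}$ are the gross-largest maximizers $\pmb r^*(\cdot)$ from their respective local alternatives. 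This places both continuations in the setting of \cref{lem:tc-local-option}.

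Next I invoke \cref{lem:tc-local-selection}. It gives $h^+_{\phi'}\leq h^+_\phi$, so the date-$t$ choice under $\phi'$ is weakly lower. It also gives $\mathbb E[\mathcal A_h(\pmb q^{\phi'})\mid\mathcal F_h]\geq\mathbb E[\mathcal A_h(\pmb q^\phi)\mid\mathcal F_h]$, and by the tower property the same inequality holds conditional on $\mathcal G_h$. I then apply \cref{lem:tc-local-option} with $h^-=h^+_{\phi'}$, $h^+=h^+_\phi$, $\pmb q^-=\pmb q^{\phi'}$ and $\pmb q^+=\pmb q^\phi$. This yields
\[
\mathbb E[\mathcal P_h(\pmb q^{\phi'})-\mathcal P_h(\pmb q^\phi)\mid\mathcal G_h]\geq\tfrac{1}{\partial_+g(0)}\,\mathbb E[\mathcal A_h(\pmb q^{\phi'})-\mathcal A_h(\pmb q^\phi)\mid\mathcal G_h]\geq 0.
\]
Since $(\mathcal F,U)$ was arbitrary, $\phi'$ weakly dominates $\phi$ after $h$.

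One technicality is that the local alternatives are random, being chosen eventwise. I will handle this by pasting over the finitely many values of $s$ using \cref{lem:stopped-levels}, applying the lemmas on each event.

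For strict dominance, suppose \eqref{eq:tc-local-decreasing} is strict on a positive-probability event: some higher alternative $h^+$ has strictly lower expected transfer than some lower $h^-$, by a wedge of at least $\delta>0$. I will use the construction from the necessity part of \cref{lem:tc-conditional-characterization}. The agent is linear, $U_\eta=\eta V$ with $\eta$ small relative to $\delta/M$, and the learning process reveals rare good news, so that from the principal's viewpoint the lower alternative $h^-$ is strictly better on that event. Under $\phi$ the transfer wedge pushes the agent to $h^+$. Under $\phi'$ transfers are flat, so the agent maximizes $\eta\mathcal A_h$, which is proportional to $\mathcal P_h$, and chooses $h^-$ or better. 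Because $U_\eta$ is aligned with $V$, the gain in the agent's objective equals $\eta$ times the principal's gain, which is strictly positive.

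The main obstacle is making this strict case rigorous. The learning process must make the principal strictly prefer $h^-$ while the wedge still flips the agent under $\phi$. Since $\mathcal G\subseteq\mathcal F$ and the wedge is stated conditional on $\mathcal F_h$, I will first try $\mathcal F=\mathcal G$ with aligned linear preferences, so the agent's choice under $\phi'$ coincides with the conditional direct-control optimum. I will then check that the distortion under $\phi$ is strictly suboptimal on the wedge event.
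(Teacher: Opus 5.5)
Your weak-dominance argument follows the paper's route. Successor flatness makes the selected tails gross maximizers, \cref{lem:tc-local-selection} gives $h^+_{\phi'}\leq h^+_\phi$ and the gross-payoff inequality, and \cref{lem:tc-local-option} with $h^-=h^+_{\phi'}$, $h^+=h^+_\phi$ turns that into the principal's inequality. The genuine gap is in the strict part, where neither of your two constructions works.

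\emph{The fallback $\mathcal F=\mathcal G$ cannot work.} Under the lemma's hypotheses, $\phi$ is already a conditional limit-with-option mechanism after $h$. The hard limit holds by assumption. \eqref{eq:tc-local-decreasing} makes the boundary class, the top of $\mathcal S(h)$, weakly cheapest at $h$. Successor flatness gives the option condition at later histories. So by the sufficiency half of \cref{lem:tc-conditional-characterization}, $\phi$ attains the conditional direct-control value whenever $\mathcal F=\mathcal G$, and no continuation mechanism can exceed that value there. Concretely, with $U_\eta=\eta V$ and $\mathcal F=\mathcal G$, both $\phi$ and $\phi'$ induce $\bar{\pmb q}^{\,h}$. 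The boundary is both the largest gross optimum and weakly cheapest under $\phi$.

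\emph{The good-news version has the sign reversed.} Rare good news about $\theta=1$ makes the higher alternative better for the principal, so a wedge pulling the agent up does not hurt the principal. You carried over the rare-good-news construction from the necessity proof without flipping it. There the wedge favors the lower continuation; here it favors the higher one.

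What is needed, and what the paper's bad-news construction amounts to, is \emph{private} bad news. The agent shares $\mathcal G$ up to the relevant local choice. Then, with positive conditional probability, she learns information outside $\mathcal G$ that moves her belief toward $\theta=0$. On that event higher levels are worse for the principal, and the lower alternative $h^-$ is strictly better. With $\eta$ small relative to the wedge $\delta$, the agent under $\phi$ still takes the cheaper $h^+$. Under the flat $\phi'$, the aligned agent acts on her bad news and stays low. The strict loss under $\phi$ exists only because the agent knows something the principal does not.

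A smaller point concerns your first option of letting $\phi'$ agree with $\phi$ after all strict successors. Since strict successors include the same-date higher histories and every date-$(t+1)$ history, this fixes every transfer except the date-$t$ flow on the waiting choice. A single $\mathcal G_{l,t}$-measurable flow cannot in general equalize an $\mathcal F$-dependent wedge for every admissible $\mathcal F$. Use your second option instead. The paper secures it by setting realized total transfers equal to the $\mathcal G_h$-measurable constant $\mathbb E[\Gamma_h^\phi(\bar{\pmb q}^{\,h})\mid\mathcal G_h]$ on all of $\mathcal Q(h)$. That delivers \eqref{eq:tc-local-flat} for every $\mathcal F$ and makes every tail under $\phi'$ a gross maximizer.
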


\begin{proof}[Proof of \cref{lem:tc-downward-dominance}]
First note that such a flattening can be implemented by an admissible path-adapted continuation mechanism. To keep public measurability explicit, one may use a terminal exact implementation: preserve the same $+\infty$ transfer for paths that exceed the conditional hard limit, and for each $\pmb q\in\mathcal Q(h)$ set the terminal flow transfer so that
\[
\Gamma_h^{\phi'}(\pmb q)
=
\mathbb E[\Gamma_h^\phi(\bar{\pmb q}^{\,h})\mid\mathcal G_h].
\]
When $\beta>0$, this is done by changing $\phi_T(h\oplus\pmb q_{\leq T})$ by the discounted difference; when $\beta=0$, the analogous adjustment is made at date $t$. The adjustment is measurable at the realized truncation because the mechanism observes the whole terminal below-limit path, and $\mathcal G_h\subseteq\mathcal G_{q_T,T}$. Since $\mathcal L$ is finite, lower semicontinuity is automatic on the finite-valued region and integrability is inherited from $\phi$. This implementation is stronger than the local flatness property used below, because it makes realized total transfers flat on $\mathcal Q(h)$.

Fix $(\mathcal F,U)$. Let $\pmb q^\phi$ and $\pmb q^{\phi'}$ be the selected largest optimal continuations after $h$ under $\phi$ and $\phi'$, and let $h_\phi^+$ and $h_{\phi'}^+$ be the local alternatives induced by their current-period choices. By \cref{lem:tc-local-selection},
\[
h_{\phi'}^+\leq h_\phi^+
\qquad\text{and}\qquad
\mathbb E[\mathcal A_h(\pmb q^{\phi'})\mid\mathcal F_h]
\geq
\mathbb E[\mathcal A_h(\pmb q^\phi)\mid\mathcal F_h].
\]
Conditioning the gross-payoff inequality down to $\mathcal G_h$ gives
\[
\mathbb E[\mathcal A_h(\pmb q^{\phi'})-\mathcal A_h(\pmb q^\phi)\mid\mathcal G_h]
\geq
0.
\]
By the displayed successor-flatness hypothesis of this lemma, transfers under $\phi$ are constant after every strict successor of $h$; by construction, the flattening makes total transfers under $\phi'$ constant on $\mathcal Q(h)$. Hence the selected tails after $h_{\phi'}^+$ and $h_\phi^+$ are gross-agent-payoff maximizing tails. Applying \cref{lem:tc-local-option} with $h^-=h_{\phi'}^+$, $h^+=h_\phi^+$, $\pmb q^-=\pmb q^{\phi'}$, and $\pmb q^+=\pmb q^\phi$ yields
\[
\mathbb E[\mathcal P_h(\pmb q^{\phi'})-\mathcal P_h(\pmb q^\phi)\mid\mathcal G_h]
\geq
\frac{1}{\partial_+g(0)}
\mathbb E[\mathcal A_h(\pmb q^{\phi'})-\mathcal A_h(\pmb q^\phi)\mid\mathcal G_h]
\geq 0.
\]
Thus $\phi'$ weakly dominates $\phi$ after $h$.

For strictness, take a positive-probability event on which the local decreasing inequality is strict for some $h^-<h^+$ and some $\pmb q^-\in\mathcal Q(h;h^-)$, $\pmb q^+\in\mathcal Q(h;h^+)$. As in \citet{koh2024robust}, choose an admissible bad-news learning process: before the relevant local choice the agent and principal have the same information, and at that choice bad news arrives with strictly positive conditional probability. On the bad-news event, choose a sufficiently small aligned preference scale $U_\eta(\mu,l)=\eta V(\mu,l)$ so that gross continuation-payoff differences are dominated by the strict local transfer wedge. Then under $\phi$ the agent is induced to choose the higher, cheaper local alternative $h^+$, while under the flattened transfer she chooses the lower local alternative $h^-$. The local option-value comparison is strict on the constructed event, so $\phi'$ strictly dominates $\phi$ for that admissible $(\mathcal F,U)$.
\end{proof}

\underline{\textbf{\smash{Step 4: Backward flatness.}}}

\begin{lemma}[Backward flatness]\label{lem:tc-flatness}
Let $\phi$ be dually robust and time-consistent. Then after every decision history $h$ and every admissible learning process, the agent-side expected continuation transfer is flat on the whole below-limit region:
\[
\mathbb E[\Gamma_{h'}^\phi(\pmb q)\mid\mathcal F_{h'}]
=
\mathbb E[\Gamma_{h'}^\phi(\bar{\pmb q}^{\,h'})\mid\mathcal F_{h'}]
\]
for every later below-limit decision history $h'$ reached from $h$ and every $\pmb q\in\mathcal Q(h')$.
Consequently, conditioning down gives the analogous equality conditional on $\mathcal G_{h'}$.
\end{lemma}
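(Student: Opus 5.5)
We argue by backward induction over the finite level--time grid, ordered by the successor relation $\succ$ used in \cref{lem:tc-downward-dominance}. Since $\mathcal L$ and $\mathcal T$ are finite, there are finitely many (date, level) pairs. We process below-limit histories from the latest and highest down to the earliest and lowest. The inductive claim at a below-limit history $h'$ is that expected continuation transfers under $\phi$ are flat on $\mathcal Q(h')$ given $\mathcal F_{h'}$.

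For the base case, take terminal histories or histories at the conditional limit. There $\mathcal Q(h')$ is the singleton $\{\bar{\pmb q}^{\,h'}\}$, so flatness is trivial.

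For the inductive step, fix a below-limit history $h'$ and assume flatness at every below-limit $h''\succ h'$. This is exactly the successor-flatness hypothesis of \cref{lem:tc-local-option} and \cref{lem:tc-downward-dominance}.

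\emph{Step (a).} Time-consistency plus dual robustness make the continuation after $h'$ conditionally dually robust. Otherwise the principal could substitute the conditional adaptive speed limit, which is conditionally dually robust by the sufficiency half of \cref{lem:tc-conditional-characterization}. That substitute would weakly beat the original for every $(\mathcal F,U)$ and strictly beat it for some, violating time-consistency.

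\emph{Step (b).} By \cref{lem:tc-conditional-characterization}, the continuation after $h'$ is therefore a conditional limit-with-option mechanism. So the boundary continuation has the weakly lowest expected transfer at $h'$ and at all its below-limit successors.

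\emph{Step (c).} By the inductive hypothesis, any $\pmb q\in\mathcal Q(h';\operatorname{loc}_{h'}(s))$ has expected transfer equal to the date-$t$ flow transfer at $s$ plus $\beta$ times a constant $C(\operatorname{loc}_{h'}(s))$ (conditioned appropriately). So expected transfers are constant within each local cell. It remains to show they are constant across cells indexed by $s\in[l,\bar q^{\,h'}_t]$.

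\emph{Step (d).} Suppose this fails. I would first reduce to the case where expected transfers are (weakly) locally decreasing, i.e.\ \eqref{eq:tc-local-decreasing} holds with strict inequality somewhere. The route is to apply the option condition of (b) at each local successor $\operatorname{loc}_{h'}(s)$, which lies weakly below the limit. Combined with the cell-wise constancy from (c), this shows the boundary cell $s=\bar q^{\,h'}_t$ carries the minimal expected transfer. For the reduction itself, I would flatten $\phi$ to the boundary level on the region of alternatives strictly below the cheapest one. Iterating this reduces to a monotone (decreasing) profile; alternatively, the dominance argument can be run directly against the flattened $\phi'$ that sets everything equal to the boundary transfer.

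\emph{Step (e).} With strict local decreasing established, \cref{lem:tc-downward-dominance} shows the local flattening $\phi'$ strictly dominates $\phi$ after $h'$ for some $(\mathcal F,U)$. This contradicts time-consistency. Hence the transfers are flat across cells, which completes the inductive step. The final sentence of the lemma then follows from the tower property, since $\mathcal G_{h'}\subseteq\mathcal F_{h'}$.

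The main obstacle is step (d). The option condition only says the boundary is cheapest, while \cref{lem:tc-downward-dominance} needs monotone decreasing transfers across the whole chain, and a non-monotone interior profile is not directly covered. I would first try to handle it by picking the lowest $s^-$ whose cell is strictly more expensive than the boundary. I would then restrict to the two-element sub-chain $\{s^-,\bar q^{\,h'}_t\}$ and flatten only these. This uses the strictness construction of \cref{lem:tc-downward-dominance} with a bad-news learning process that makes only those two alternatives relevant, and linear preferences $U_\eta=\eta V$ with $\eta$ small.

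A secondary care point is ensuring the flattened mechanism remains admissible (path-adapted and $\mathcal G$-measurable). For this I would reuse the terminal implementation given in the proof of \cref{lem:tc-downward-dominance}.
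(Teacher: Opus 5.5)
Up to step (c) you follow the paper: the same lexicographic induction, the same use of time-consistency to get the limit-with-option structure at every history, and the same within-cell constancy from the date-$(t+1)$ successors. The gap is step (d), and the repair you sketch does not work. \cref{lem:tc-downward-dominance} needs a $\phi'$ that weakly dominates $\phi$ for \emph{every} $(\mathcal F,U)$. That weak dominance comes from \cref{lem:tc-local-selection}, which requires \eqref{eq:tc-local-decreasing} on the whole chain $\mathcal S(h')$.

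With a non-monotone profile, neither partial nor full flattening dominates. Take expected transfers $\gamma_l=0.5$, $\gamma_{s_1}=1$, $\gamma_b=0$ on $l<s_1<b=\bar q^{\,h'}_t$; the boundary is cheapest, so the option condition holds. Write $a_s$ for the agent's expected gross payoff from cell $s$.
\begin{itemize}
\item Your two-element flattening sets $\gamma_l$ to $0$. An agent who chose $s_1$ under $\phi$ then switches to $l$ whenever $a_{s_1}-1<a_l\le a_{s_1}-0.5$. With aligned $U=\eta V$ the principal is strictly worse.
\item Full flattening fails in the mirror way. An agent who waited under $\phi$ moves up to $s_1$ whenever $a_l\le a_{s_1}<a_l+0.5$. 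For sufficiently convex $g$ this hurts the principal.
\end{itemize}
Also, the option condition at the date-$(t+1)$ histories $\operatorname{loc}_{h'}(s)$ only concerns tails, so it cannot order the current-period cells.

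The missing idea is to use the same-date successors, which your order $\succ$ already includes but you never exploit. For $l<s^-<s^+\le b$, the history $\tilde h=(\pmb\ell_{<t},s^-)$ lies in cell $(s^-,t)\succ(l,t)$, so flatness holds there by the inductive hypothesis. From $\tilde h$, the continuations that choose $q_t=s^-$ or $q_t=s^+$ and then follow the tails of $\pmb q^-,\pmb q^+$ generate the same realized truncations as $\pmb q^\pm$ do from $h'$. Flatness at $\tilde h$, plus the tower property from $\mathcal F_{s^-,t}$ down to $\mathcal F_{l,t}$, therefore equates every scale-up cell with the boundary cell.

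The only possible wedge is then waiting versus scaling, and the option condition at $h'$ makes waiting weakly more expensive. So \eqref{eq:tc-local-decreasing} holds automatically on all of $\mathcal S(h')$. Non-monotone profiles like the one above cannot arise, since it violates flatness at $(\pmb\ell_{<t},s_1)$, and \cref{lem:tc-downward-dominance} applies directly to the full flattening.

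A minor point: your base case over-claims. At a date $t<T$, a history at the conditional limit does not have a singleton $\mathcal Q(h')$, because later periods can still scale up. This is harmless, since the inductive step covers those histories anyway.
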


\begin{proof}[Proof of \cref{lem:tc-flatness}]
First note why \cref{lem:tc-conditional-characterization} applies at continuation histories. If the continuation after any decision history failed the conditional characterization, the necessity argument in \cref{lem:tc-conditional-characterization}, pasted after that history, would produce an admissible conditional environment in which the continuation falls below the conditional direct-control guarantee. Replacing that continuation by the conditional speed-limit continuation would then conditionally dominate it, contradicting time-consistency. Hence, after every decision history, $\phi$ has the conditional hard limit and satisfies the weak option to the limit. Since $\mathcal L$ is finite, write its levels as $l_0<l_1<\cdots<l_N$. We prove flatness for all below-limit decision histories by lexicographic backward induction on the finite time-level grid: later dates precede earlier dates, and within each date higher levels precede lower levels.
\begin{center}
\begin{tikzpicture}[
    x=1.15cm,
    y=.65cm,
    cell/.style={circle, draw=black!55, fill=white, minimum size=4pt, inner sep=0pt},
    induction/.style={-{Stealth[length=2mm]}, thick, blue!65!black},
    faint/.style={black!18}
]
\draw[->, black!60] (-.45,-.45) -- (3.55,-.45) node[right] {\small date};
\draw[->, black!60] (-.45,-.45) -- (-.45,3.45) node[above] {\small level};
\node[below] at (0,-.45) {$1$};
\node[below] at (1,-.45) {$2$};
\node[below] at (2,-.45) {$\cdots$};
\node[below] at (3,-.45) {$T$};
\node[left] at (-.45,0) {$l_0$};
\node[left] at (-.45,1) {$\vdots$};
\node[left] at (-.45,2) {$l_{N-1}$};
\node[left] at (-.45,3) {$l_N$};
\foreach \x in {0,1,3} {
    \draw[faint] (\x,0) -- (\x,3);
}
\draw[faint, densely dotted] (2,0) -- (2,3);
\foreach \y in {0,2,3} {
    \draw[faint] (0,\y) -- (3,\y);
}
\draw[faint, densely dotted] (0,1) -- (3,1);
\foreach \x in {0,1,3} {
    \foreach \y in {0,2,3} {
        \node[cell] at (\x,\y) {};
    }
    \node at (\x,1) {$\vdots$};
}
\foreach \y in {0,2,3} {
    \node at (2,\y) {$\cdots$};
}
\node[above right, blue!65!black] at (3,3) {\scriptsize start};
\draw[induction] (3,2.85) -- (3,2.15);
\draw[induction] (3,1.85) -- (3,.15);
\draw[induction] (2.95,-.05) to[out=-130,in=50] (1.05,3.05);
\draw[induction] (1,2.85) -- (1,2.15);
\node[blue!65!black, align=center, font=\scriptsize] at (1.55,3.55) {later dates first;\\higher levels first};
\end{tikzpicture}
\end{center}
For time-level cells, write
\[
(l',t')\succ(l,t)
\quad\Longleftrightarrow\quad
t'>t
\text{ or }
(t'=t\text{ and }l'>l).
\]
The grid successors of $(l,t)$ are simply the cells that follow it in this order:
\[
\{(l',t'):(l',t')\succ(l,t)\}.
\]
Equivalently, they are either cells at a later date $t'>t$, or cells at the same date $t'=t$ with a higher level $l'>l$. A decision history has successor cell $(l',t')$ if it extends the current history and its current date-level pair is $(l',t')$.

\emph{Base step.} Consider the cell $(l_N,T)$. If no below-limit decision history has this cell, the claim is vacuous. Otherwise, let $h$ be such a history. The current level is the largest feasible level and there is no later date. Hence $\mathcal Q(h)$ contains only the singleton continuation, which is also $\bar{\pmb q}^{\,h}$, and flatness is immediate.

\emph{Inductive step.} Fix a cell $(l,t)$ and suppose the following inductive hypothesis is true:
for every grid successor $(l',t')\succ(l,t)$, every below-limit decision history $h'$ whose current date-level pair is $(l',t')$, and every $\pmb q\in\mathcal Q(h')$,
\[
\mathbb E[\Gamma_{h'}^\phi(\pmb q)\mid\mathcal F_{h'}]
=
\mathbb E[\Gamma_{h'}^\phi(\bar{\pmb q}^{\,h'})\mid\mathcal F_{h'}].
\]
Let $h=(\pmb{\ell}_{<t},l)$ be any below-limit decision history whose current date-level pair is $(l,t)$, and recall the notation $\mathcal S(h)$ and $\mathcal Q(h;\operatorname{loc}_h(s))$ from Step 2. The induction is over the whole time-level grid of decision histories, so same-date higher histories such as $(\pmb{\ell}_{<t},s)$ with $s>l$ are included among the strict grid successors of $h$.

If $t<T$, the induction hypothesis covers every date-$(t+1)$ history $\operatorname{loc}_h(s)\in\mathcal S(h)$. If $t=T$, there is no downstream tail. Once the current-period choice $q_t=s$ is fixed, the current transfer is fixed within the class $\mathcal Q(h;\operatorname{loc}_h(s))$. The induction hypothesis then implies that the remaining expected transfer is independent of the downstream tail the agent chooses. Thus, within each class $\mathcal Q(h;\operatorname{loc}_h(s))$, all continuations have the same expected transfer from $h$.

We next verify the local decreasing condition \eqref{eq:tc-local-decreasing}. Fix $l\leq s^-\leq s^+\leq\bar q_t^{\,h}$, write $h^-=\operatorname{loc}_h(s^-)$ and $h^+=\operatorname{loc}_h(s^+)$, and take $\pmb q^-\in\mathcal Q(h;h^-)$ and $\pmb q^+\in\mathcal Q(h;h^+)$. If $s^-=s^+$, the desired comparison follows from the within-class equality just proved. Suppose first that both choices involve positive scaling, so $l<s^-<s^+$. Let $\tilde h_{s^-}=(\pmb{\ell}_{<t},s^-)$ be the same-date decision history at level $s^-$. Its cell $(s^-,t)$ is a strict successor of $(l,t)$, so the induction hypothesis applies at $\tilde h_{s^-}$. From $\tilde h_{s^-}$, compare the continuation that chooses $s^-$ at date $t$ and then follows the tail of $\pmb q^-$ with the continuation that chooses $s^+$ at date $t$ and then follows the tail of $\pmb q^+$. These two continuations are below the conditional limit and generate the same realized path truncations as $\pmb q^-$ and $\pmb q^+$ do from $h$. Flatness at $\tilde h_{s^-}$ therefore gives equality of their expected transfers conditional on $\mathcal F_{\tilde h_{s^-}}$; taking conditional expectations down to $\mathcal F_h$ gives
\[
\mathbb E[\Gamma_h^\phi(\pmb q^+)\mid\mathcal F_h]
=
\mathbb E[\Gamma_h^\phi(\pmb q^-)\mid\mathcal F_h].
\]

It remains to compare waiting, $s^-=l$, with a positive scale-up choice $s^+>l$. Let $b=\bar q_t^{\,h}$ be the boundary current-period choice. The weak option condition at $h$ gives
\[
\mathbb E[\Gamma_h^\phi(\bar{\pmb q}^{\,h})\mid\mathcal F_h]
\leq
\mathbb E[\Gamma_h^\phi(\pmb q^-)\mid\mathcal F_h].
\]
The boundary continuation belongs to $\mathcal Q(h;\operatorname{loc}_h(b))$. If $s^+<b$, the positive-scale equality just proved equates the class $\operatorname{loc}_h(s^+)$ with the boundary class $\operatorname{loc}_h(b)$; if $s^+=b$, the within-class equality gives the same conclusion. Thus every continuation in $\mathcal Q(h;\operatorname{loc}_h(s^+))$ has the same expected transfer as the boundary continuation. Hence
\[
\mathbb E[\Gamma_h^\phi(\pmb q^+)\mid\mathcal F_h]
\leq
\mathbb E[\Gamma_h^\phi(\pmb q^-)\mid\mathcal F_h].
\]
This proves \eqref{eq:tc-local-decreasing} for all $h^-\leq h^+$ in $\mathcal S(h)$.

On $\mathcal Q(h)$, realized continuation transfers are finite by the conditional hard-limit characterization. If this local inequality is strict for some pair of local alternatives on a positive-probability event, then \cref{lem:tc-downward-dominance} gives a continuation mechanism $\phi'$ that strictly dominates $\phi$ after $h$, contradicting time-consistency. Hence no strict local wedge is possible. Expected transfers are therefore flat across all local alternatives in $\mathcal S(h)$:
\[
\mathbb E[\Gamma_h^\phi(\pmb q^-)\mid\mathcal F_h]
=
\mathbb E[\Gamma_h^\phi(\pmb q^+)\mid\mathcal F_h]
\]
for all $h^-,h^+\in\mathcal S(h)$ with $h^-\leq h^+$, all
$\pmb q^-\in\mathcal Q(h;h^-)$, and all
$\pmb q^+\in\mathcal Q(h;h^+)$.
Finally, the boundary continuation $\bar{\pmb q}^{\,h}$ belongs to the class $\mathcal Q(h;\operatorname{loc}_h(b))$, and successor flatness has already made transfers constant within every class. Thus, for every $\pmb q\in\mathcal Q(h)$,
\[
\mathbb E[\Gamma_{h}^\phi(\pmb q)\mid\mathcal F_h]
=
\mathbb E[\Gamma_{h}^\phi(\bar{\pmb q}^{\,h})\mid\mathcal F_h].
\]
This proves the claim at cell $(l,t)$ and completes the induction. Since every below-limit decision history belongs to some cell of the grid, the lemma follows.
\end{proof}

\underline{\textbf{\smash{Step 5: Time-consistency and uniqueness.}}}

\begin{lemma}[Conditional tail of the adaptive speed limit]\label{lem:tc-speed-tail}
Let $h$ be any decision history. The continuation of $\bm{\phi}^*$ after $h$ imposes the conditional direct-control hard limit $\bar{\pmb q}^{\,h}$ and charges zero transfer below that limit.
\end{lemma}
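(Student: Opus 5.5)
The plan is to read the claim directly off Definition~\ref{def:speed-limit-T} together with a dynamic-programming (Bellman) consistency property of the principal's direct control problem \eqref{eq:direct_control}. Throughout, fix a decision history $h=(\pmb{\ell}_{<t},\ell')$ at date $t$.

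\textbf{Step 1: zero transfers below the limit.} By definition, $\bm{\phi}^*$ is Markovian with $\phi^*_s\in\{0,+\infty\}$ for every $s$. After $h$, any continuation $\pmb q$ with $q_s\leq\bar\ell_s$ for all $s\geq t$ therefore pays zero at every date. Any continuation that exceeds $\bar\ell_s$ at some date pays $+\infty$. The continuation of $\bm{\phi}^*$ is thus exactly the hard cap $(\bar\ell_s)_{s\geq t}$ with zero transfers beneath it. What remains is to identify this cap with the conditional direct-control limit $\bar{\pmb q}^{\,h}$, in the sense relevant for Definition~\ref{def:conditional-lwo}.

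\textbf{Step 2: identify the cap with $\bar{\pmb q}^{\,h}$, by backward induction on $s$.} Below the cap, I will show that $\bar q^{\,h}_s=\bar\ell_s\vee q_{s-1}$ along feasible continuations. Equivalently, the conditional direct-control problem after $h$, restricted to histories with current level $\leq\bar\ell_t$, is solved by ``move up to $\bar\ell_s$ whenever currently below it.''
\begin{itemize}
\item \emph{Principle of optimality.} Because $\bar{\pmb\ell}$ is the largest solution of \eqref{eq:direct_control}, it solves, conditional on $\mathcal G_{\ell',t}$, the continuation problem from every history it reaches. The paper already records this for histories on the conditional path: tails of $\bar{\pmb q}^{\,h}$ are again conditional solutions.
\item \emph{Histories strictly below the path.} For a below-limit history not on the path ($\ell'<\bar\ell_t$), I will argue by pasting (Lemma~\ref{lem:stopped-levels}). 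Any conditional solution from a lower starting point can be pasted with $\bar{\pmb\ell}$ on the event $\{\ell'\leq\bar\ell_t\}$. The largest-solution selection, combined with the no-crossing argument of Lemma~\ref{lemma:no-crossing} applied to the principal's own problem (i.e.\ with $U=V$, $g=\mathrm{id}$, $\mathcal F=\mathcal G$), then yields $\bar q^{\,h}_s=\bar\ell_s$ whenever $q_{s-1}\leq\bar\ell_s$.
\item \emph{Finiteness.} Finiteness of $\mathcal L$ guarantees that the largest solutions exist and that the argument is a finite induction.
\item \emph{Equivalence of the two descriptions.} Any continuation first exceeding $\bar{\pmb q}^{\,h}$ exceeds $\bar\ell_s$ at that date, so it pays $+\infty$ under $\bm\phi^*$. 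Conversely, continuations below $\bar{\pmb q}^{\,h}$ stay below $\bar\ell$. Hence the two descriptions coincide.
\end{itemize}

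\textbf{Step 3: histories already above the limit.} If $\ell'>\bar\ell_t$, the history is off-path, and the continuation mechanism charges $+\infty$ on every continuation. The statement should then be read as asserting only the hard-limit part, which holds vacuously. Alternatively, one interprets $\bar{\pmb q}^{\,h}$ as the largest solution among continuations that keep finite transfers.

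\textbf{Main obstacle.} I expect Step 2 to be the main difficulty: showing that the conditional direct-control solution from an off-path below-limit history equals $\bar\ell_s$ (truncated by $q_{s-1}$) rather than something smaller. My first attempt is the no-crossing monotonicity above. The largest conditional solution from a lower start lies below the solution from a higher start, and on the event where the principal is at or below $\bar\ell$ it must agree with $\bar\ell$. I would establish this agreement by the exchange argument of Lemma~\ref{lemma:no-crossing}, pasting the two candidate paths on the event where they differ and invoking largest-selection to rule out a strict gap.
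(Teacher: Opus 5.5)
\textbf{There is a genuine gap.} Your Step~2 hinges on the identity $\bar q^{\,h}_s=\bar\ell_s\vee q_{s-1}$ at below-limit histories off the direct-control path. That identity is false in general, so the lemma cannot be obtained from the Markovian cap of Definition~\ref{def:speed-limit-T}.

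Applying the no-crossing/pasting argument to the principal's own problem yields only the one-sided comparison $\bar{\pmb q}^{\,h}\leq\bar{\pmb\ell}$. From a lower starting level the principal may strictly prefer to stop before reaching the on-path level, and largest-selection cannot remove a strict preference. This is exactly the step you flag as the main obstacle.

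The failure comes from irreversibility combined with learning by waiting. Once $\bar{\pmb\ell}$ has scaled up, bad news arriving later freezes $\bar\ell_s$ at a high level, while from a lower start the principal would stay low. Concretely, take $T=2$, $\mathcal L=\{0,1\}$, $\mathcal G_{l,1}$ trivial, and a public date-$2$ signal $S\in\{H,L\}$ with $V(\mu_L,1)<0<V(\mu_H,1)$ and $V(\mu_0,1)+\beta\Pr(S=L)\,V(\mu_L,1)>0$.
\begin{itemize}
\item Then $\bar\ell_1=\bar\ell_2=1$, so the cap of Definition~\ref{def:speed-limit-T} never binds.
\item Yet after the below-limit history $h=((0),0)$, the conditional solution is $\bar q^{\,h}_2=0$ on $\{S=L\}$.
\item The Markovian $\bm\phi^*$ charges zero for $q_2=1$ there. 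This violates requirement (i) of Definition~\ref{def:conditional-lwo}.
\end{itemize}

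Your Step~3 fails for the mirror reason. After a crossed history with $\ell'>\bar\ell_t$, the path $\bar{\pmb q}^{\,h}$ is itself feasible from $h$, so ``zero transfer below the limit'' is not vacuous. Definition~\ref{def:speed-limit-T} charges $+\infty$ on that path, and reinterpreting the statement does not prove it.

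\textbf{What the paper does instead.} The paper's proof does not attempt this identification. It takes $\bm\phi^*$ to be the speed limit generated by the conditional Bellman boundary at every decision history. On the finite grid, the continuation direct-control problem after $h$ is solved by backward induction, and $\bar{\pmb q}^{\,h}$ is the largest Bellman maximizer given $\mathcal G_h$. This selection is dynamically consistent: if $h'$ lies on $\bar{\pmb q}^{\,h}$, the tail of $\bar{\pmb q}^{\,h}$ from $h'$ is $\bar{\pmb q}^{\,h'}$.

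After any history, including off-path and crossed ones, past transfers are treated as sunk. The cap is reset to $\bar{\pmb q}^{\,h}$ with zero transfers beneath it, so the lemma holds by construction. Dynamic consistency is what makes this history-dependent cap agree with $\bar{\pmb\ell}$ along the direct-control path; that on-path part of your Step~2 is correct.

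The missing idea in your proposal is this history-dependent (reset) specification of $\bm\phi^*$. The Markovian cap is guaranteed to coincide with it only along $\bar{\pmb\ell}$, and your argument cannot go beyond that.
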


\begin{proof}[Proof of \cref{lem:tc-speed-tail}]
Since the technology space is finite, the direct-control problem can be solved by backward induction over decision histories. We take $\bar{\pmb q}^{\,h}$ to be the largest maximizer in the Bellman problem conditional on $\mathcal G_h$. This Bellman selection is dynamically consistent: if a later decision history $h'$ lies on $\bar{\pmb q}^{\,h}$, then the tail of $\bar{\pmb q}^{\,h}$ from $h'$ is $\bar{\pmb q}^{\,h'}$.

The adaptive speed limit $\bm{\phi}^*$ is generated by this Bellman-consistent boundary at every decision history. Thus, after any decision history $h$, including off-path histories and histories at which an earlier boundary has been crossed, past transfers are sunk and the continuation permits exactly the continuations $\pmb q$ satisfying $\pmb q\leq\bar{\pmb q}^{\,h}$ and assigns $+\infty$ to the first future truncation that exceeds $\bar{\pmb q}^{\,h}$. Below the boundary, every future flow transfer of $\bm{\phi}^*$ is zero. This proves the claim.
\end{proof}

We now prove the theorem. First fix any decision history $h$. To verify time-consistency, let $\phi'$ be an arbitrary candidate deviating continuation mechanism after $h$. Under the admissible learning process $\mathcal F=\mathcal G$, the largest optimal continuation induced by $\phi'$ is $\mathcal G$-adapted, and transfers do not enter the principal's payoff. Hence its conditional principal payoff is bounded above by
\[
\sup_{\pmb q}
\mathbb E\!\left[
\sum_{s=t}^T \beta^{s-t}V(\mu_{q_s,s},q_s)
\;\middle|\;
\mathcal G_h
\right],
\]
where the supremum is over feasible $\mathcal G$-adapted continuations from $h$. The continuation direct-control path $\bar{\pmb q}^{\,h}$ attains this bound:
\[
\sup_{\pmb q}
\mathbb E\!\left[
\sum_{s=t}^T \beta^{s-t}V(\mu_{q_s,s},q_s)
\;\middle|\;
\mathcal G_h
\right]
=
\mathbb E\!\left[
\sum_{s=t}^T \beta^{s-t}V(\mu_{\bar q^{\,h}_s,s},\bar q^{\,h}_s)
\;\middle|\;
\mathcal G_h
\right].
\tag{TC}\label{eq:tc-direct-control-bound}
\]
By \cref{lem:tc-speed-tail}, the continuation of $\bm{\phi}^*$ after $h$ imposes the hard limit $\bar{\pmb q}^{\,h}$ and charges zero transfer below it. This is a statement about the boundary imposed by the mechanism, not about the agent's selected continuation path. The continuation is therefore a conditional limit-with-option mechanism, so \cref{lem:tc-conditional-characterization} implies that it is conditionally dually robust. In particular, under the admissible learning process $\mathcal F=\mathcal G$, it attains the conditional direct-control value in \eqref{eq:tc-direct-control-bound}. Thus no continuation mechanism can give weakly higher conditional principal payoff for every $(\mathcal F,U)$ and a strict improvement for some $(\mathcal F,U)$ after $h$. Since $h$ was arbitrary, $\bm{\phi}^*$ is time-consistent.

For uniqueness, let $\phi$ be any dually robust and time-consistent mechanism. By \cref{lem:tc-conditional-characterization}, after every decision history $h$ its hard limit is the conditional direct-control boundary $\bar{\pmb q}^{\,h}$, and
\[
\Gamma_h^\phi(\pmb q)=+\infty
\qquad
\text{whenever }\pmb q \not\leq \bar{\pmb q}^{\,h}.
\]
By \cref{lem:tc-flatness}, for every admissible learning process there exists an $\mathcal F_h$-measurable random variable $C^{\mathcal F}(h)$ such that
\[
\mathbb E[\Gamma_h^\phi(\pmb q)\mid\mathcal F_h]
=
C^{\mathcal F}(h)
\qquad
\forall \pmb q\leq \bar{\pmb q}^{\,h}.
\]
Conditioning down to $\mathcal G_h$ gives the corresponding constant 
\[
\mathbb E[\Gamma_h^\phi(\bar{\pmb q}^{\,h})\mid\mathcal G_h].
\]
Thus every continuation of $\phi$ is equivalent, for all agent incentive constraints and principal payoff comparisons, to the conditional speed limit plus an additive continuation constant. The constant does not affect the agent's continuation choices or the principal's payoff. After normalizing these payoff-irrelevant continuation constants to zero after every decision history, $\phi$ coincides with $\bm{\phi}^*$. Hence the adaptive speed limit is the unique time-consistent and dually robust mechanism, up to payoff-irrelevant continuation constants. \end{proof}

\end{document}